\documentclass[final]{siamltex}
\usepackage{amssymb}
\usepackage{graphicx}
\usepackage{enumerate}
\usepackage{amsmath}
\input pictex.sty

\newcommand{\pd}{\partial}
\newcommand{\no}{\nonumber}



\newcounter{space}

\begin{document}

\title{On the Whitham Equations for the Defocusing Complex 
Modified KdV Equation\thanks{Department of Mathematics, Ohio State
  University, 231 W. 18th Avenue} }%
\author{Yuji Kodama\thanks{{\tt kodama@math.ohio-state.edu}}
\and V. U. Pierce\thanks{ {\tt vpierce@math.ohio-state.edu}}
\and Fei-Ran Tian\thanks{
 {\tt tian@math.ohio-state.edu} 
}}
\maketitle

\begin{abstract}
We study the Whitham equations for the
defocusing complex modified KdV (mKdV) equation. 
These Whitham equations are quasilinear hyperbolic equations
and they describe the averaged dynamics of the rapid oscillations which
appear in the solution of the mKdV equation when the dispersive parameter is small.
The oscillations are referred to as dispersive shocks.
The Whitham equations for the mKdV equation are neither strictly hyperbolic nor
genuinely nonlinear. We are interested in the solutions of the Whitham equations
when the initial values are given by a step function. 
We also compare the results with those of the defocusing nonlinear Schr\"odinger (NLS) equation.
For the NLS equation, the Whitham equations are strictly hyperbolic and genuinely nonlinear.
We show that the weak hyperbolicity of the mKdV-Whitham equations is responsible
for an 
additional structure in the dispersive
shocks which has not been found in the NLS case.

\end{abstract}

\begin{keywords}
Whitham equations, non-strictly hyperbolic equations, dispersive shocks

\smallskip

{\bf AMS subject classifications.}
35L65, 35L67, 35Q05, 35Q15, 35Q53, 35Q58

\end{keywords}

\markboth{ Y. KODAMA, V.U. PIERCE AND F.-R. TIAN}{
WHITHAM EQUATIONS FOR THE COMPLEX MODIFIED KDV}
\pagestyle{myheadings}
\thispagestyle{plain}

\section{Introduction}

In \cite{PT, PT2}, Pierce and Tian studied the self-similar solutions of
the Whitham equations which describe the zero dispersion limits of the KdV hierarchy. 
The main feature of the Whitham
equations for the higher members of the hierarchy, of which the KdV equation is the
first member, is that these Whitham equations are neither strictly hyperbolic
nor genuinely nonlinear. This is in sharp contrast to the case of the KdV equation
whose Whitham equations are strictly hyperbolic and genuinely nonlinear \label{lax2}. In this paper,
we extend their studies to the case of the complex modified KdV equation,
which is the second member of the defocusing nonlinear Schr\"odinger hierarchy.
The Whitham equations for the defocusing NLS equation are strictly hyperbolic
and genuinely nonlinear, and they have been studied extensively (see for examples,
\cite{for, jin, kod, pav, Tian1}). However, for the mKdV equation, the Whitham equations are neither 
strictly hyperbolic nor genuinely nonlinear.

Let us begin with a brief description of 
the zero dispersion limit of the solution of the NLS equation
\begin{equation}
\label{nls}
\sqrt{-1} \ \epsilon \frac{\partial \psi}{\partial t} +
2 \epsilon^2 \frac{\partial^2
\psi}{\partial x^2} - 4 |\psi |^2 \psi= 0
\ ,
\end{equation}
with the initial data
\begin{equation*}
\psi (x, 0) = A_0(x)
\exp\left(\sqrt{-1} \ \frac{S_0(x)}
{\epsilon}\right) \ .
\end{equation*}
Here $A_0(x)$
and $S_0(x)$ are real functions that are independent
of $\epsilon$.
Writing the solution
$\psi (x,t;\epsilon)=A(x,t;\epsilon)\exp\left(\sqrt{-1} \ \frac{S(x,t;\epsilon)}{\epsilon}
\right)$, and using the notation $\rho(x,t;\epsilon)=A^2(x,t;\epsilon)$, $v(x,t;\epsilon) = {\partial S(x,t,\epsilon)
/ \partial x}$,
one obtains the conservation form of the defocusing NLS equation
\begin{equation}
\label{nlshydro}
\left\{
\begin{array}{lllll}
&\displaystyle{\frac{\partial \rho}{\partial
t} + \frac{\partial}{\partial
x}(4 \rho v)} = 0 \
,\\
&\displaystyle{\frac{\partial}{\partial
t}(\rho v) + \frac{\partial}{\partial
x}
\left(4 \rho v^2+2 \rho^2 \right) = \epsilon^2
\frac{\partial}{\partial
x}\left(\rho\frac{\partial^2}{\partial
x^2}\ln \rho
\right)} \ .
\end{array}
\right.
\end{equation}

The mass density $\rho=|\psi|^2$ and momentum density $\rho v=\frac{\sqrt{-1}}{2}(\psi\psi^*_x-\psi^*\psi_x)$
have weak limits as $\epsilon \rightarrow 0$ \cite{jin}. 
These limits satisfy a $2 \times 2$ system of hyperbolic equations
\begin{equation}
\label{nlshydro0}
\left\{
\begin{array}{lllll}
&\displaystyle{\frac{\partial \rho}{\partial
t} + \frac{\partial}{\partial
x}(4 \rho v)} = 0 \
,\\
&\displaystyle{\frac{\partial}{\partial
t}(\rho v) + \frac{\partial}{\partial
x}
\left(4 \rho v^2+2\rho^2 \right) = 0}
\ ,
\end{array}
\right.
\end{equation}
until its solution develops a shock. 
System (\ref{nlshydro0}) can be rewritten in the diagonal form for $\rho\ne 0$,
\begin{equation}
\label{nls0}
\frac{\partial}{\partial
t}\begin{pmatrix} {\alpha}\\
{\beta}\end{pmatrix} +2
\begin{pmatrix}
3{\alpha}+{\beta} & 0 \\ 0 &
{\alpha}+3{\beta}\end{pmatrix}
\frac{\partial}{\partial
x}\begin{pmatrix} {\alpha} \\
{\beta}\end{pmatrix} = 0 \ ,
\end{equation}
 where the
Riemann invariants ${\alpha}$ and ${\beta}$ are
given by
\begin{equation}
\label{NSFII}
\alpha ={v \over 2} + \sqrt{\rho}\ , \quad \beta ={v \over 2} - \sqrt{\rho} \ .
\end{equation}

As a simple example, we consider the case with $\alpha=a=$constant. System (\ref{nls0})
reduces to a single equation
\begin{equation}\label{nlsbeta}
\frac{\partial\beta}{\partial t}+2(a+3\beta)\frac{\partial\beta}{\partial x}=0\,.
\end{equation}
The solution is given by the implicit form
\[
\beta(x,t)=f(x-2(a+3\beta)t) \ ,
\]
where $f(x) = \beta(x,0)$ is the initial data for $\beta$.
One can easily see that if $\beta(x,0)$ decreases in some region, then $\beta(x,t)$
develops a shock in a finite time, i.e., $\partial\beta/\partial x$ becomes singular.

After the shock formation in the solution of (\ref{nlshydro0}) or (\ref{nls0}), the weak
limits are described by the NLS-Whitham equations, which can also be put in 
the Riemann invariant form \cite{for, jin, kod, pav}
\begin{equation}
\label{nlsw}
\frac{\partial u_i}{\partial t} + \lambda_{g,i}(u_1,\ldots,u_{2g+2})\frac{\partial u_i}{\partial x} = 0 \ , \quad i = 1, 2, \ldots, {2g+2} \ ,
\end{equation}
where $\lambda_{g,i}$ are expressed in terms of complete hyperelliptic  integrals 
of genus $g$ \cite{lax2}. Here the number
 $g$ is exactly the number of phases in the NLS oscillations with
small dispersion.
Accordingly, the zero phase $g=0$ corresponds to no oscillations, and single and higher phases
$g \geq 1$ correspond to the NLS oscillations. System (\ref{nls0}) is viewed as the zero phase
Whitham equations. 
The solution of the Whitham equations (\ref{nlsw}) for $g \geq 1$ then describes
the averaged motion of the oscillations appearing in the solution of (\ref{nls}) (see e.g. \cite{kod}).

Let us discuss the most important  $g=1$ case in more detail. We note that it is well known that 
the KdV oscillatory solution, in the single phase regime, can be approximately described by the KdV
periodic solution when the dispersive parameter is small \cite{dei, gra, ven}. It is very possible to use
the method of \cite{dei, ven} to show that the solution of the NLS equation
(\ref{nls}) for small $\epsilon$ can be approximately described, in the single phase regime, by
 the periodic solution of the NLS equation. The NLS periodic solution has the form
\begin{equation}\label{periodicsolution}
\tilde{\rho}(x,t;\epsilon)= \rho_3+(\rho_2-\rho_3)\,{\rm sn}^2(\sqrt{\rho_1-\rho_3}\,\theta(x,t;\epsilon), s)\,.
\end{equation}
with $\theta(x,t;\epsilon)=(x-V_1t)/\epsilon$ and the velocity $V_1=V_1(\rho_1,\rho_2,\rho_3)$.
Here $\rho_i$'s are determined by the equation obtained from (\ref{nlshydro})
\[
\frac{\epsilon^2}{4}\left(\frac{d\rho}{d\theta}\right)^2=(\rho-\rho_1)(\rho-\rho_2)(\rho-\rho_3)\,
\]
with $\rho_1>\rho_2>\rho_3$, and ${\rm sn}(z,s)$ is the Jacobi elliptic function with
the modulus $s=(\rho_2-\rho_3)/(\rho_1-\rho_3)$.  
We can also write $\rho_i$'s as
\cite{el}
\begin{equation}\label{rhos}\left\{\begin{array}{lll}
&\displaystyle{\rho_1 = \frac{1}{4}(u_1+u_2-u_3-u_4)^2}\\
&\displaystyle{\rho_2 =\frac{1}{4}(u_1-u_2+u_3-u_4)^2}\\
&\displaystyle{\rho_3 =\frac{1}{4}(u_1-u_2-u_3+u_4)^2}
\end{array}\right.
\end{equation}
with $u_1>u_2>u_3>u_4$.   The velocity $V_1$ is then given by
\begin{equation*}
V_1=2(u_1+u_2+u_3+u_4)\,.
\end{equation*}
For constants  $u_1$, $u_2$, $u_3$ and $u_4$, formula (\ref{periodicsolution}) gives the well known elliptic
solution of the NLS equation. To describe the solution $\rho(x,t;\epsilon)$ of  the NLS equation (\ref{nlshydro}), the quantities
$u_1$, $u_2$, $u_3$ and $u_4$ are instead functions of $x$ and $t$ and they evolve according to the single phase Whitham equations (\ref{nlsw}) for $g=1$.

The weak limit of $\rho(x,t;\epsilon)$ of NLS equation (\ref{nls}) as 
$\epsilon \rightarrow 0$ can be expressed in terms of 
 $\rho_1$, $\rho_2$, $\rho_3$ and $\rho_4$ \cite{jin}
 \begin{equation}\label{average}
 \overline{\rho(x,t)}=\rho_1-(\rho_1-\rho_3)\frac{E(s)}{K(s)}\,,
\end{equation}
where $K(s)$ and $E(s)$ are the complete elliptic integrals of the first and second kind, respectively. This weak limit can also be viewed
as the average value of the periodic solution $\tilde{\rho}(x,t;\epsilon)$ of (\ref{periodicsolution}) over its period $L= 2 \epsilon K(s)/ \sqrt{\rho_1-\rho_3}$.

In order to see how a single phase Whitham solution appears, we consider
the following step initial data for system (\ref{nls0})
\begin{equation}\label{nlsinitial}
\alpha(x,0)=a,\quad \beta(x,0)=\begin{cases}b,~~ x<0 \\ c,~~ x>0\end{cases}\,,
\end{equation}
where $a > b$, $a > c$, $b \neq c$. The solution of (\ref{nls0}) develops a shock if and only if
$b > c$ (cf. (\ref{nlsbeta})). After the formation of a shock, the Whitham equations (\ref{nlsw}) with $g=1$
kick in. For instance,  we consider the Whitham equations with the initial data \cite{kod}
\begin{equation}\label{regular}
u_1(x,0)=a, \quad u_2(x,0)=b,\quad u_3(x,0)=\begin{cases}b,~~x<0\\ c,~~x>0\end{cases},\quad
u_4(x,0)=c.
\end{equation}
Now notice that the Whitham equations (\ref{nlsw}) for $g=1$ with the initial data (\ref{regular}) can be reduced to a single equation 
$u_{3t}+\lambda_{1,3}(a,b,u_3,c)u_{3x}=0$. 
The equation has a global self-similar
solution, which is implicitly given by
$x/t=\lambda_3(a,b,u_3,c)$.
The $x$-$t$ plane is then divided into {\it three} parts
$$(1) \   \frac{x}{t} <  \gamma \ , \quad
(2) \  \gamma < \frac{x}{t} < 2a + 4b + 2c \ , \quad
(3) \  \frac{x}{t} > 2a + 4b + 2c \ ,$$ 
where $\gamma = 2(a + b + 2c) - 8(a-c)(b-c)/(a+b - 2c)$ (see (\ref{bc3}) and (\ref{bc4}) below
for the derivation).
The solution of system (\ref{nls0}) occupies the first and third parts, i.e.,
\begin{itemize}
\item[(1)] for $x/t<\gamma$,
\[
{\alpha}(x, t) = a,\quad {\beta}(x, t)= b,
\]
\item[(3)] for $x/t>2a+4b+2c$,
\[
{\alpha}(x, t) = a,\quad {\beta}(x,t)= c \ .
\]
\end{itemize}
The Whitham solution of (\ref{nlsw}) with $g=1$ lives in the second part, i.e., 
\begin{itemize}
\item[(2)] for $\gamma<x/t<2a+4b+2c$, 
\[
u_1(x, t) = a \ , \quad u_2(x, t) = b \ , \quad
\frac{x}{t} = \lambda_{1,3}(a, b, u_3, c) \ , \quad  u_4(x, t) = c
\ , 
\]
\end{itemize}
where the solution $u_3$ can be obtained as a function of the
self-similarity variable $x/t$, if 
\begin{equation*}  
\frac{\pd \lambda_{1,3}}{\pd u_3} (a, b, u_3, c) \neq 0. 
\end{equation*}
Indeed, it has been shown that the Whitham equations (\ref{nlsw}) are
genuinely nonlinear \cite{jin, kod}, i.e.,
\begin{equation}\label{eq11}
\frac{\pd \lambda_{1,i}}{\pd u_i} (u_1, u_2, u_3, u_4) > 0, \,
\quad i=1, 2, 3, 4, 
\end{equation}
for $u_1 > u_2 > u_3 > u_4$. In Figure \ref{fig0}, we plot the self-similar solution
of the Whitham equations (\ref{nlsw}) with $g=1$  for the NLS equation, and the corresponding periodic oscillatory solution
(\ref{periodicsolution}) for the initial data (\ref{nlsinitial}) with $a=4,b=0$ and $c=-1$.
The oscillations describe a dispersive shock of the NLS equation under a small dispersion. Note here that the oscillations have a uniform structure, which is due to an 
almost linear
profile of the Whitham solution $u_3$. This will be seen to be in sharp contrast 
to the  case of
the mKdV equation, which we will discuss later (cf. Figure \ref{fig1}).

\begin{figure}[h] \label{fig0}
\begin{center}
\includegraphics[width=12cm]{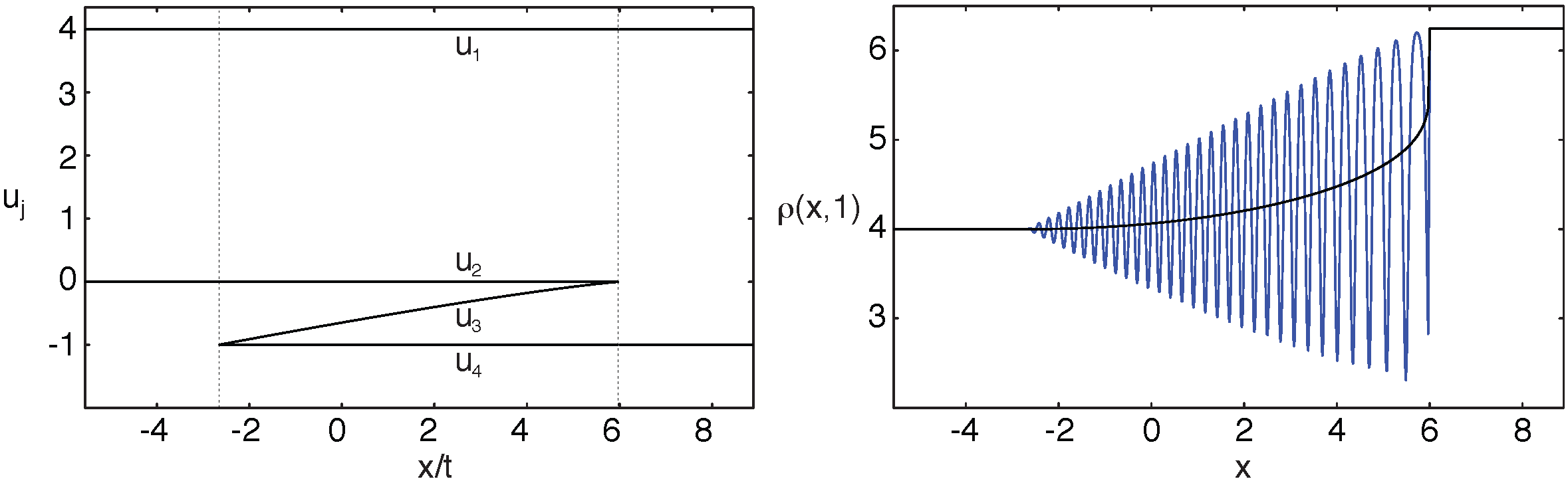}
\caption{Self-Similar solution of the NLS-Whitham equation (\ref{nlsw}) of $g=1$ and the
 corresponding oscillatory solution (\ref{periodicsolution}) of the NLS equation with
 $\epsilon = 0.15$. The dark line in the middle of the oscillations is 
the weak limit
$\overline{\rho(x,t)}$ given by (\ref{average}).
 The initial data is given by (\ref{nlsinitial}) with
$a=4$, $b=0$ and $c=-1$.}
\end{center}\end{figure}

The figures in this paper all have the same form:  On the left hand
side is a plot of the solution of the Whitham equations as a function of the
self-similarity variable $x/t$, which is
exact, other than a numerical method used to implement the inverse
function theorem.  
On the right hand side is the
oscillatory solution given by (\ref{periodicsolution}) (respectively
(\ref{periodicsolution2}) for mKdV) at $t=1$, while the
dark plot  is the weak limit (\ref{average})
 of the oscillatory solution at $t=1$,
both
plots on the right are also exact.  
In the first two figures we demark the region where
the Whitham equations with $g=1$ govern the solution, 
place a dashed-dotted line where the behavior of the oscillatory
solution changes, 
 and label the
four functions $u_1 > u_2 > u_3 > u_4$.  The demarcation and
labeling are similar in the other figures and we will leave them off
for brevity.  
Although we do not include any numerical simulations, we would like to mention that
E. Overman showed us his numerical simulation of the NLS and mKdV
equations
which captures the features of the oscillatory solutions plotted in
this paper.

The defocusing NLS equation is just the first member of the defocusing NLS hierarchy;
the second is the (defocusing) complex modified KdV (mKdV) equation
\begin{equation}
\label{mkdv}
{\partial \psi \over \partial t} + {3 \over 2} \ |\psi|^2 {\partial \psi \over \partial x} - {\epsilon^2 \over 4}
{\partial^3 \psi \over \partial x^3} = 0 \ .
\end{equation}
We again use
$\psi (x,t;\epsilon)=A(x,t;\epsilon)\exp\left(\sqrt{-1} \ \frac{S(x,t;\epsilon)}{\epsilon}
\right)$ and notation $\rho(x,t;\epsilon) = A^2(x,t;\epsilon)$, $v(x,t;\epsilon) = {\partial S(x,t;\epsilon)
/ \partial x}$
to obtain the conservation form of the mKdV equation
\begin{equation}
\left\{
\begin{array}{lllll}
&\displaystyle{\frac{\partial \rho}{\partial
t}  + \frac{\partial}{\partial
x} \left( {3 \over 4} \rho^2 + {3 \over 4} \rho v^2 \right) = {\epsilon^2}
{\partial \over \partial x} \left( \rho^{3/4}{\partial^2  \over \partial x^2} \rho^{1/4}\right)} \ , \\
&\displaystyle{\frac{\partial}{\partial
t}(\rho v) + \frac{\partial}{\partial
x}
\left( {3 \over 2} \rho^2 v + {3 \over 4} \rho v^3 \right) = \frac{\epsilon^2}{4}
\ \frac{\partial}{\partial
x}\left[ \frac{\partial^2}{\partial x^2} (\rho v) - {3 \over 2} R 
\right]} \ ,
\end{array}
\right.
\end{equation}
where
$$R =  \frac{3v}{2\rho}\left(\frac{\partial \rho}{\partial x}\right)^2+\frac{\partial v}{\partial x}\frac{\partial \rho}{\partial x}-v\frac{\partial^2\rho}{\partial x^2}\,.$$

The mass density $\rho$ and momentum density $\rho v$ for the mKdV equation
also have weak limits as $\epsilon \rightarrow 0$ \cite{jin}. As in the NLS case, the weak limits
satisfy 
\begin{equation}
\label{mkdvhydro0}
\left\{
\begin{array}{lllll}
&\displaystyle{\frac{\partial}{\partial
t} \rho + \frac{\partial}{\partial
x} \left( {3 \over 4} \rho^2 + {3 \over 4} \rho v^2 \right) = 0} \ , \\
&\displaystyle{\frac{\partial}{\partial
t}(\rho v) + \frac{\partial}{\partial
x}
\left( {3 \over 2} \rho^2 v + {3 \over 4} \rho v^3 \right) = 0} \ ,
\end{array}
\right.
\end{equation}
until the solution of (\ref{mkdvhydro0}) forms a shock. 
One can rewrite equations (\ref{mkdvhydro0}) as
\begin{equation}
\label{mkdv0}
\frac{\partial}{\partial
t}\begin{pmatrix} \alpha\\
\beta\end{pmatrix} +
\frac{3}{8}
\begin{pmatrix}
5 \alpha^2 + 2 \alpha \beta + \beta^2 & 0 \\ 0 &
\alpha^2 + 2 \alpha \beta + 5 \beta^2 \end{pmatrix}
\frac{\partial}{\partial
x}\begin{pmatrix} \alpha \\
\beta\end{pmatrix} = 0 \ ,
\end{equation}
 where the
Riemann invariants $\alpha$ and $\beta$ are
again given by formula (\ref{NSFII}).

Let us again consider the simplest case $\alpha(x,0)=a$, where a is
constant, to see how 
the solution of system (\ref{mkdv0}) develops a shock. In this case, the system reduces to a single equation
\begin{equation*}
\frac{\partial \beta}{\partial t}+\frac{3}{8}\left(a^2+2a\beta+5\beta^2\right)\frac{\partial \beta}{\partial x}=0\,.
\end{equation*}
As in the NLS case, we consider the initial function given by
 $\beta(x,0)=b$ for $x<0$ and $\beta(x,0)=c$
for $x>0$. We recall that, in the NLS case, the zero phase solution of (\ref{nls0})
develops a shock if and only if $b > c$.
However, the solution in the 
mKdV case develops a shock  for $b>c$ if and only if $a + 5b >0$.
In addition, if $b<c$,  the solution in the mKdV case develop a
 shock if and only if $a + 5b < 0$.
These differences between the mKdV and NLS cases 
are due to the weak hyperbolicity of the system (\ref{mkdv0}) (note that
for the eigenspeed $\lambda=\frac{3}{8}(\alpha^2+2\alpha\beta+5\beta^2)$ for $\beta$,
we have $\partial \lambda/\partial \beta=\frac{3}{4}(\alpha+5\beta)$ which can change sign).
As will be shown below, this leads to an additional structure in the dispersive shock for the mKdV case.

As in the case of the NLS equation, immediately after the shock formation in the solution of (\ref{mkdvhydro0}), the weak
limits are described by the mKdV-Whitham equations
\begin{equation}
\label{mkdvw}
\frac{\partial u_i}{\partial t} + \mu_{g,i}(u_1, \ldots, u_{2g+2})\frac{\partial u_i}{\partial x} = 0 \ , \quad i = 1, 2, 3\ldots, 2g+2 \  ,
\end{equation}
where $\mu_{g,i}$'s can also be expressed in terms of complete hyperelliptic integrals 
of genus $g$ \cite{jin}.

In this paper, we study the solution of the Whitham equations (\ref{mkdvw}) with $g=1$ when the 
initial mass density $\rho(x,0)$ and momentum density $\rho(x,0)v(x, 0)$ are
step functions. In view of (\ref{NSFII}), this amounts to requiring $\alpha$ and $\beta$ of system 
(\ref{mkdv0}) to
have step-like initial data. We are interested in the following two cases:
\begin{itemize}
\item[(i)]  $\alpha(x,0)$ is a constant and 
\begin{equation} \label{step1}
\alpha(x, 0) = a \ , \quad \beta(x,0) = \left\{ \begin{matrix} b, & x < 0 \\
c, & x > 0 \end{matrix} \right. \ , \ \quad a > b \ , a > c \ , b \neq c \ ,
\end{equation}
\item[(ii)]
$\beta(x, 0)$ is a constant and
\begin{equation} \label{step2}
\quad \alpha(x,0) = \left\{ \begin{matrix} b, & x < 0 \\
c, & x > 0 \end{matrix} \right. \ , \quad \beta(x, 0) = a \ , \quad b > a \ , c > a \ , b \neq c \ .
\end{equation}
\end{itemize}

In the case of the NLS equation, the genuine nonlinearity of the single phase 
Whitham equations (see (\ref{eq11}))
warrants that the solution is found
 by the implicit function theorem.
However the mKdV-Whitham equations (\ref{mkdvw}), in
general, are not genuinely nonlinear, that is, a property like
(\ref{eq11}) is not available (see Lemma \ref{lem2.1} below).  
Our construction of solutions of the Whitham
equation (\ref{mkdvw}) with $g=1$ makes use of the non-strict
hyperbolicity of the equations.  For the NLS case, it has been known in \cite{jin, kod} that the Whitham
equations (\ref{nlsw}) with $g=1$ are strictly hyperbolic, that is,
\begin{equation*}
\lambda_{1,1} > \lambda_{1,2} > 
\lambda_{1,3} > \lambda_{1,4}  
\end{equation*}
for $u_1 > u_2 > u_3 > u_4$.  For the mKdV-Whitham equations (\ref{mkdvw}) with $g=1$, the eigenspeeds
$\mu_{1,i}(u_1, u_2, u_3, u_4)$ may coalesce in the region $u_1 > u_2 > u_3 > u_4$ (we will discuss
the details in Section \ref{whithamsec}).

Let us now describe one of our main results (see Theorem \ref{thm1}) 
for the single phase mKdV-Whitham equations with step-like initial 
function (\ref{step1}) for $a=4$, $b=1$ and $c= -1$. In this case,
the space time is 
divided into {\it four} regions (see Figure \ref{fig1}) instead of {\it three} in the case
of the NLS equation (cf. Figure \ref{fig0})
$$(1) \ \frac{x}{t} < c_1 \ , \quad
(2) \ c_1 < \frac{x}{t} < c_2 \ , \quad
(3) \ c_2 < \frac{x}{t} < c_3 \ , \quad
(4) \ \frac{x}{t} > c_3 \ ,$$
where $c_1$, $c_2$ and $c_3$ are some constants. 
In the first and fourth regions, the solution of the $2\times 2$ system (\ref{mkdv0}) governs
the evolution:
\begin{itemize}
\item[(1)] for $x/t<c_1$,
\[
\alpha(x,t) = 4,\quad \beta(x,t)= 1,
\]
\item[(4)]  for $x/t>c_2$,
\[
\alpha(x,t)=4,\quad  \beta(x,t)=-1 .
\]
\end{itemize}
The Whitham solution of the $4\times 4$ system (\ref{mkdvw}) with $g=1$ lives in the second and third
regions; 
\begin{itemize}
\item[(2)] for $c_1<x/t<c_2$,
\begin{equation*} 
\label{ns}
u_1(x, t) = 4, \quad u_2(x, t)=1, \quad   
\frac{x}{t} = \mu_{1,3}(4, 1, u_3, u_4) , \quad
\frac{x}{t} = \mu_{1,4}(4, 1, u_3, u_4) \ , 
\end{equation*}
\item[(3)] for $c_2<x/t<c_3$,
\[
u_1(x, t)= 4 , \quad u_2(x, t)= 1 , \quad 
\frac{x}{t} = \mu_{1,3}(4, 1, u_3, -1) \ , \quad  u_4(x, t)=-1 .
\]
\end{itemize}
Note that, in the second region, we have
$$ \mu_{1,3}(4, 1, u_3, u_4) = \mu_{1,4}(4, 1, u_3, u_4)$$
on a curve in the region $-1 < u_4 < u_3 < 4$.
This implies the non-strict hyperbolicity of the mKdV-Whitham equations (\ref{mkdvw}) for $g=1$.

It is again possible to use the method of \cite{dei, ven} to show that the solution of
the mKdV equation 
(\ref{mkdv}) can be approximately described, in the single phase regime,
by the periodic solution of the mKdV when $\epsilon$ is small. The periodic solution has the same form as 
(\ref{periodicsolution})
of the NLS, i.e.,
\begin{equation}\label{periodicsolution2}
\tilde{\rho}(x,t;\epsilon)= \rho_3+(\rho_2-\rho_3)\,{\rm sn}^2(\sqrt{\rho_1-\rho_3}\,\theta(x,t;\epsilon), s)\, .
\end{equation}
However, $\theta(x,t;\epsilon)$ is now given by $\theta=(x-V_2t)/\epsilon$ with the velocity $V_2$ (see e.g. \cite{bel})
\[
V_2=\frac{3}{8}\sigma_1^2-\frac{1}{2}\sigma_2\,,
\]
where $\sigma_1=\sum_{j=1}^4u_j$ and $\sigma_2=\sum_{i<j}u_iu_j$ are
the elementary symmetric functions of degree one and two, respectively.
The functions $\rho_1, \rho_2$ and $\rho_3$ are also given by formula (\ref{rhos}).
If  
$u_1$, $u_2$, $u_3$ and $u_4$ are constants, formula (\ref{periodicsolution2}) gives the periodic solution of
the mKdV equation. To describe the solution $\rho(x,t;\epsilon)$ of the mKdV equation
(\ref{mkdv}), the quantities $u_1$, $u_2$, $u_3$ and $u_4$ must satisfy the 
single phase mKdV-Whitham
equations (\ref{mkdvw}) for $g=1$.  The weak limit of $\rho(x,t;\epsilon)$ of
the mKdV equation is also given by formula (\ref{average}). 

In Figure \ref{fig1}, we plot the self-similar solution of the Whitham equations (\ref{mkdvw}) for $g=1$
and the corresponding periodic oscillatory solution (\ref{periodicsolution2}).  We note here that
the pattern of the oscillation in this case has two distinct structures: one corresponds to the region
(2), $c_1<x/t<c_2$, and the other corresponds to the region (3), $c_2<x/t<c_3$, We also note that the weak limit
 $\overline{\rho(x,t)}$ is not $C^1$ smooth at the boundary point $x/t=c_2\approx 4.63$.
 
\begin{figure}[h] 
\begin{center}
\includegraphics[width=12cm]{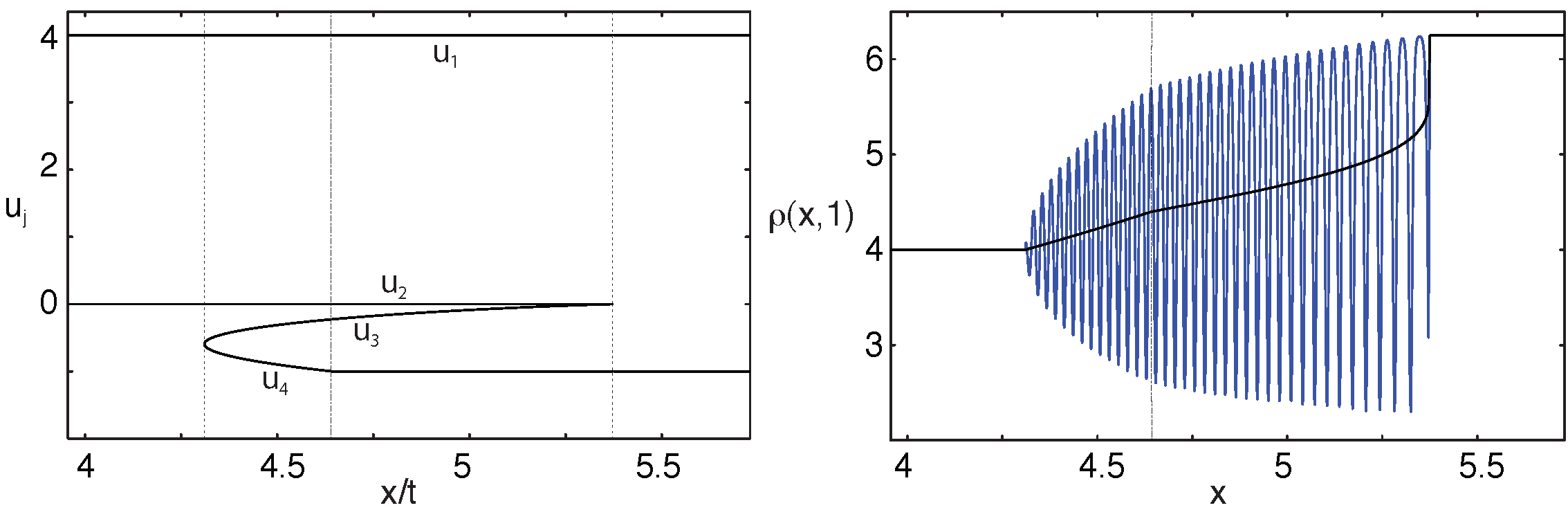}
\caption{\label{fig1} Self-Similar solution of the mKdV-Whitham equation (\ref{mkdvw}) with $g=1$ and the corresponding
oscillatory solution (\ref{periodicsolution2}) of the mKdV equation with $\epsilon = 0.012$. 
The initial data is given by (\ref{step1}) with
  $a=4$, $b=0$ and $c=-1$. There are two distinct structures in the oscillations and they are
  separated by $x/t\approx 4.63$ (cf. Figure \ref{fig0}).}
\end{center}\end{figure}

As we will show below, for other values of $a$, $b$ and $c$,  the solutions of (\ref{mkdv0}) and (\ref{mkdvw}) with $g=1$
will be 
seen to be quite different from
the above.

The Whitham equations (\ref{mkdvw}) for the mKdV equation are analogous to the Whitham equations for the
higher members of the KdV hierarchy \cite{PT, PT2}.  For step like initial
data, the single phase Whitham solutions for the higher order KdV are also constructed using the non-strict hyperbolicity 
of the equations.
In the case of strictly hyperbolic Whitham equations for the KdV, 
the oscillations (dispersive shock)
have uniform structure. However, in the case of non-strict hyperbolic Whitham equations
for the higher order KdV, 
an additional structure has been found in the dispersive shocks.
This new structure is similar to the one found here in the 
dispersive shocks of the mKdV equation.

The organization of the paper is as follows. In Section 2, we will study the
eigenspeeds, $\mu_{g,1},\mu_{g,2},\mu_{g,3}$ and $\mu_{g,4}$ of the Whitham equations (\ref{mkdvw}) for $g=1$.
In Section 3, we will construct the self-similar solutions of the single phase Whitham equations
for the initial function (\ref{step1}) with $a > b > c$.
In Section 4, we will construct the self-similar solution of the Whitham equations
for the initial function (\ref{step1}) with $a > c > b$. 
In Section 5, we will briefly discuss how to handle the other step-like
initial data (\ref{step2}).


\section{The Whitham Equations}\label{whithamsec}

In this section we define the eigenspeeds $\lambda_{g,i}$'s and $\mu_{g,i}$'s  of the Whitham equations (\ref{nlsw}) and 
(\ref{mkdvw}) with $g=1$ for
the NLS and the mKdV equations. For simplicity, we suppress the subscript $g=1$ in the notation $\lambda_{g,i}$'s and $\mu_{g,i}$'s in the
rest of the paper. 
 
We first introduce the 
polynomials of $\xi$ for $n=0, 1,
2, \dots$ \cite{dub, for, pav}:
\begin{equation} \label{eq15} 
P_n(\xi, u_1, u_2, u_3, u_4) = \xi^{n+2} + a_{n, 1} \xi^n + \dots + a_{n,
  n+2} \ ,
\end{equation}
where the coefficients, $a_{n, 1}, a_{n, 2}, \dots, a_{n, n+2}$ are
uniquely determined by the two conditions
\begin{equation*} 
\frac{ P_n(\xi, u_1, u_2, u_3, u_4) }{\sqrt{ (\xi - u_1)(\xi-u_2)(\xi-u_3)(\xi - u_4)}
} = \xi^{n} + \mathcal{O}(\xi^{-2}) \quad  \mbox{for large $|\xi|$} 
\end{equation*}
and
\begin{equation*} 
\int_{u_2}^{u_1} \frac{ P_n(\xi, u_1, u_2, u_3, u_4)}{\sqrt{ (u_1-\xi)
    (\xi-u_2)(\xi-u_3)(\xi - u_4)}} d\xi = 0 \ .
\end{equation*}
The coefficients of $P_n$ can be expressed in terms of
complete elliptic integrals.

The eigenspeeds of the Whitham equations (\ref{nlsw}) with $g=1$ for the NLS equation are 
defined in terms of
$P_0$ and $P_1$ of (\ref{eq15}) \cite{for, jin, pav}, 
\begin{equation*}  
\lambda_i(u_1, u_2, u_3, u_4) = 8 \ \frac{
  P_1(u_i, u_1, u_2, u_3, u_4) }{P_0(u_i, u_1, u_2, u_3, u_4)} \ , \quad i=1,2,3,4 \ ,
\end{equation*}
which give 
\begin{equation}
\label{lambda}
\lambda_i(u_1, u_2, u_3, u_4) = 2 \left(\sigma_1(u_1,u_2,u_3,u_4) - {I(u_1, u_2, u_3, u_4) \over \partial_{u_i} 
I(u_1, u_2, u_3, u_4)} \right) \,.
\end{equation}
Here $\sigma_1:=\sum_{j=1}^4u_j$, and $I(u_1, u_2, u_3, u_4)$ is given by a complete elliptic integral \cite{Tian1}
\begin{equation}
\label{I}
I(u_1, u_2, u_3, u_4) = \int_{u_2}^{u_1} {d \eta \over \sqrt{(u_1 - \eta)(\eta - u_2)(\eta - u_3) (\eta - u_4)}}
\ .
\end{equation}
The function $I$ can be rewritten as a contour integral. Hence,
\begin{equation}
\label{EPD}
2(u_i - u_j) {\pd^2 I \over \pd u_i \pd u_j} = {\pd I \over \pd u_i} - {\pd I \over \pd u_j} \ ,
\quad i, j = 1, 2, 3, 4 \ 
\end{equation}
since the integrand satisfies the same equations for each $\eta \neq u_i$.
This contour integral connection also allows us to give another formulation of $I$
\begin{equation}
\label{I2}
I(u_1, u_2, u_3, u_4) = \int_{u_4}^{u_3} {d \eta \over \sqrt{(u_1 - \eta)(u_2 - \eta)(u_3 - \eta) (\eta - u_4)}}
\ .
\end{equation}
It follows from (\ref{lambda}), (\ref{I}) and (\ref{I2}) that 
\begin{equation}
\label{strict}
\lambda_4 - 2 \sigma_1 < \lambda_3 - 2 \sigma_1 < 0 
 < \lambda_2 - 2 \sigma_1 < \lambda_1 - 2 \sigma_1 
\end{equation}
for $u_4 < u_3 < u_2 < u_1$. This implies the strict hyperbolicity of the NLS-Whitham equation (\ref{nlsw}) for $g=1$.

The eigenspeeds $\lambda_i$'s have the following values \cite{Tian1}:
At $u_3=u_4$, we have
\begin{eqnarray}
\label{bc3}
\left\{\begin{array}{lll}
\lambda_1= 6 u_1 + 2 u_2, \\
\lambda_2= 2u_1 + 6u_2, \\
\lambda_3 = \lambda_4 =2(u_1 + u_2 + 2 u_4) - { 8(u_1-u_4)(u_2-u_4)
\over u_1 + u_2 - 2 u_4} ,  
\end{array}
\right.
\end{eqnarray}
and at $u_2=u_3$, 
\begin{eqnarray}
\label{bc4}
\left\{\begin{array}{lll}
\lambda_1=6 u_1+2 u_4 ,  \\
\lambda_2=\lambda_3=2 u_1+4 u_3+2 u_4,   \\
\lambda_4=2 u_1+6 u_4  . 
\end{array}
\right.
\end{eqnarray}
Notice that the eigenspeed $\lambda_2=\lambda_3$ at $u_2=u_3$ is the same as the
velocity of the periodic solution (\ref{periodicsolution}), i.e. $V_1=2\sigma_1=2(u_1+2u_3+u_4)$.

The eigenspeeds of the mKdV-Whitham equations (\ref{mkdvw}) with $g=1$
are \cite{jin}
\begin{equation} \label{eq18a}
\mu_i(u_1, u_2, u_3,u_4) = 3 \ \frac{                                                                                                         
  P_2(u_i, u_1, u_2, u_3, u_4) }{P_0(u_i, u_1, u_2, u_3, u_4)} \ , \quad i=1,2,3,4 \ .                                                                                       
\end{equation}
They can be expressed in
terms of $\lambda_1$, $\lambda_2$, $\lambda_3$ and $\lambda_4$ of
the NLS-Whitham equations (\ref{nlsw}) with $g=1$.


\begin{lemma} \label{lem2.1}
The eigenspeeds $\mu_i(u_1,u_2,u_3,u_4)$'s of (\ref{eq18a}) can be expressed in the form
\begin{equation} \label{eq19}
\mu_i = 
\frac{1}{2} \left( \lambda_i - 2\sigma_1 \right)
\frac{\pd q}{\pd u_i} + 
q \ , \quad i=1,2,3,4\,,
\end{equation}
where $\sigma_1=\sum_{j=1}^4u_j$ and $q=q(u_1, u_2, u_3, u_4)$ is the solution of the boundary value problem
of the Euler-Poisson-Darboux equations
\begin{align} \label{eq20}
2 (u_i - u_j) \frac{\pd^2 q}{\pd u_i \pd u_j} &=
\frac{\pd q}{\pd u_i} - \frac{\pd q}{\pd u_j} \ , \quad i,j=1,2,3,4  \ ,\\
q(u, u, u) &= 3 u^2 . \nonumber
\end{align}
Also the $\mu_i$'s satisfy the over-determined systems
\begin{equation}
\label{a23}
\frac{1}{\mu_i-\mu_j}\,{\pd \mu_i \over \pd u_j} = \frac{1}{\lambda_i-\lambda_j}\,{\pd \lambda_i \over \pd u_j} \,,       
\quad \quad i \neq j \ .
\end{equation}

\end{lemma}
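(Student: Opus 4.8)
The plan is to express the eigenspeeds $\mu_i$ directly from their defining ratio $\mu_i = 3\,P_2(u_i)/P_0(u_i)$ and recognize the right-hand side of \eqref{eq19} as the same object. The starting point is the classical fact (see \cite{dub, for, pav}) that the averaged quantities attached to the polynomials $P_n$ are generated by a single potential: if one sets $h_n(u_1,u_2,u_3,u_4)$ to be the ``Hamiltonian density'' whose derivatives produce $P_n(u_i)/P_0(u_i)$, then each $h_n$ solves the Euler--Poisson--Darboux (EPD) system $2(u_i-u_j)\partial_i\partial_j h_n = \partial_i h_n - \partial_j h_n$, and the eigenspeeds of the $n$-th Whitham flow take the universal form $(\lambda_i - 2\sigma_1)\,\partial_i h_n + (\text{lower order})$. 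Concretely, I would first verify that $q$ defined by the boundary value problem \eqref{eq20} is, up to normalization, the potential attached to $P_2$, by checking that $q(u,u,u)=3u^2$ matches the zero-genus ($g=0$) limit of $\mu_i$ read off from \eqref{mkdv0}: as all $u_j$ coalesce, the mKdV eigenspeed must reduce to $\tfrac{3}{8}(\alpha^2+2\alpha\beta+5\beta^2)$ evaluated on the diagonal, and one checks this equals $3u^2$ when $\alpha=\beta=\sqrt{\rho}$ appropriately (i.e. the merged Riemann invariants give $\rho$ and the constant value is $3\rho/ \ldots$), pinning down the boundary condition.

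The main computation is then to show that the function $q$ satisfying \eqref{eq20} exists, is unique, and that formula \eqref{eq19} indeed reproduces $3P_2(u_i)/P_0(u_i)$. For existence and uniqueness I would invoke the standard solvability theory for the EPD system with a prescribed value on the diagonal $u_1=u_2=u_3=u_4$ (this is exactly the structure used for $I$ in \eqref{EPD}, and for the KdV hierarchy in \cite{PT, PT2}); the solution is given by an explicit Tsarev-type integral formula, but I would only need its existence and the symmetry $2(u_i-u_j)\partial_i\partial_j q = \partial_i q - \partial_j q$. To get \eqref{eq19} itself, I would use the known recursion relating $P_{n+1}$ to $P_n$ (the coefficients $a_{n,k}$ satisfy a triangular recursion because the normalization conditions differ only by a power of $\xi$), together with the representation $\lambda_i = 8 P_1(u_i)/P_0(u_i) = 2(\sigma_1 - I/\partial_{u_i} I)$ from \eqref{lambda}. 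Plugging $\lambda_i - 2\sigma_1 = -2 I/\partial_{u_i} I$ into \eqref{eq19} turns the claimed identity into $\mu_i = q - (I/\partial_{u_i}I)\,\partial_{u_i} q$, and I would verify this matches $3 P_2(u_i)/P_0(u_i)$ by comparing the two sides as rational functions of $u_i$ with the remaining $u_j$ frozen: both are polynomials of the same degree in $u_i$ divided by $P_0(u_i)$, both have the prescribed behavior at $u_i\to\infty$ dictated by the leading coefficient of $P_2$, and both satisfy the same vanishing condition coming from the period integral, so they coincide.

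Finally, the overdetermined system \eqref{a23} follows formally once \eqref{eq19} is established: differentiating $\mu_i = \tfrac12(\lambda_i - 2\sigma_1)\partial_i q + q$ with respect to $u_j$ ($j\neq i$), using the EPD equation \eqref{eq20} for $q$ to rewrite $\partial_i\partial_j q$, using \eqref{EPD} (equivalently the EPD relation for $I$) to rewrite $\partial_j\lambda_i$, and using the analogous already-known relation $\tfrac{1}{\lambda_i-\lambda_j}\partial_j\lambda_i = \tfrac{1}{u_i-u_j}\cdot\tfrac{\partial_j I}{\partial_i I - \partial_j I}\cdot(\ldots)$ for the NLS speeds, one reduces $\tfrac{1}{\mu_i-\mu_j}\partial_j\mu_i$ to an expression in $\partial_j I / (\partial_i I - \partial_j I)$ that is manifestly symmetric with the $\lambda$ computation. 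This is the Tsarev compatibility condition and it is essentially an algebraic identity among EPD-type quantities. I expect the genuine obstacle to be the middle step --- matching formula \eqref{eq19} to the ratio $3P_2(u_i)/P_0(u_i)$ --- because it requires carefully tracking the recursion $P_n \to P_{n+1}$ and the two normalization constraints; the existence/uniqueness of $q$ and the derivation of \eqref{a23} are then comparatively routine, following the template already in place for the NLS case and in \cite{PT, PT2}.
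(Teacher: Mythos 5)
The central step of your plan does not hold up as written. To match \(\mu_i=3P_2(u_i)/P_0(u_i)\) with the right-hand side of (\ref{eq19}), i.e.\ with \(q-\bigl(I/\partial_{u_i}I\bigr)\partial_{u_i}q\), you propose to ``compare the two sides as rational functions of \(u_i\) with the remaining \(u_j\) frozen,'' using the degree, the behavior as \(u_i\to\infty\), and the vanishing period condition. But neither side is a rational function of \(u_i\): the coefficients \(a_{2,k}\) of \(P_2\) (and \(a_{0,k}\) of \(P_0\)) are fixed by the normalization and period conditions and therefore depend on the branch point \(u_i\) through complete elliptic integrals, just as \(I\) and \(\partial_{u_i}I\) do on the other side. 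Likewise, sending a branch point \(u_i\to\infty\) degenerates all the periods, so there is no clean asymptotic matching, and it is not explained in what sense the expression \(q-\bigl(I/\partial_{u_i}I\bigr)\partial_{u_i}q\) ``satisfies'' the vanishing condition that defines \(P_2\). So the very identity (\ref{eq19}) --- which you yourself identify as the genuine obstacle --- is not actually established by the proposed mechanism; a recursion \(P_n\to P_{n+1}\) alone will not produce it, because the normalization conditions couple all coefficients through the periods.

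Note also that the paper does not prove the lemma in situ: it appeals to the method of Tian's earlier work on Whitham-type equations and Euler--Poisson--Darboux systems (\cite{Tian3}), where the logical order is the reverse of yours. There one first shows that the eigenspeeds of both flows satisfy the Tsarev-type overdetermined system (\ref{a23}) (this follows from the common structure \(P_n(u_i)/P_0(u_i)\) of the speeds and the differentiation rules for the periods, cf.\ (\ref{EPD})), then proves that \emph{every} solution of (\ref{a23}) is of the form (\ref{eq19}) for some solution \(q\) of the EPD system (\ref{eq20}), and finally pins down \(q\) by the diagonal datum \(q=3u^2\), which is exactly the dispersionless mKdV speed \(\tfrac38(\alpha^2+2\alpha\beta+5\beta^2)\) at \(\alpha=\beta=u\) (your boundary-condition check is essentially right, though the parenthetical about \(\alpha=\beta=\sqrt{\rho}\) is garbled). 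Because your plan derives (\ref{a23}) only \emph{after} (\ref{eq19}), you forgo the uniqueness mechanism that makes the identification work, and the direct identification you substitute for it fails for the reasons above. A repaired version of your strategy would either adopt the order of \cite{Tian3}, or compare differentials in the spectral variable \(\xi\) (relating \(P_2(\xi)\,d\xi/\sqrt{R(\xi)}\) to \(q\) and the quasimomentum) rather than functions of the branch point \(u_i\).
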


We omit the proof since it is very similar to the proof of an analogous result for
the KdV hierarchy \cite{Tian3}.

The boundary value problem (\ref{eq20}) has a unique solution.
The solution is a symmetric quadratic function of $u_1$, $u_2$,
$u_3$ and $u_4$
\begin{equation}
\label{q} q = \frac{3}{8}\sigma_1^2-\frac{1}{2}\sigma_2 \ ,
\end{equation}
where $\sigma_2=\sum_{i>j}u_iu_j$ is the elementary symmetric polynomial of degree two.
Notice that $q$ gives the velocity of the periodic solution (\ref{periodicsolution2}) for the
mKdV equation, i.e. $V_2=q$.

For NLS, $\lambda_i$'s satisfy \cite{Tian1} 
\begin{equation}
\label{34}
\frac{\pd \lambda_4 }{\pd u_4} < \frac{3}{2}
\frac{\lambda_3 - \lambda_4 }{u_3 - u_4} < \frac{\pd
  \lambda_3}{\pd u_3}  
\end{equation}
for $u_4 < u_3 < u_2 < u_1 $. Similar results also hold for the mKdV-Whitham equations (\ref{mkdvw}) with $g=1$. 
\begin{lemma}
\begin{eqnarray} 
\frac{\pd \mu_3}{\pd u_3} &>& \frac{3}{2}  \ \frac{\mu_3 - \mu_4}{u_3 - u_4}  \quad \mbox{if} \  \ 
\frac{\pd q}{\pd u_3} > 0 \label{ine1} \ , \\
\frac{\pd \mu_4}{\pd u_4} &<& \frac{3}{2}  \ \frac{\mu_3 - \mu_4}{u_3 - u_4}  \quad \mbox{if} \ \ 
\frac{\pd q}{\pd u_4} > 0 \label{ine2} \ ,
\end{eqnarray}
for $u_4 < u_3 < u_2 < u_1$.
\end{lemma}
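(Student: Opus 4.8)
The plan is to deduce both inequalities directly from the corresponding NLS estimates (\ref{34}) and from the strict hyperbolicity (\ref{strict}), by substituting the explicit solution (\ref{q}) of the boundary value problem (\ref{eq20}) into the representation (\ref{eq19}). Throughout I abbreviate $\Lambda_i := \lambda_i - 2\sigma_1$ and $P_i := \pd q/\pd u_i$, so that (\ref{eq19}) reads $\mu_i = \frac{1}{2}\Lambda_i P_i + q$.

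First I would record the elementary consequences of $q = \frac{3}{8}\sigma_1^2 - \frac{1}{2}\sigma_2$: one finds $P_i = \frac{1}{4}\sigma_1 + \frac{1}{2}u_i$, hence $\pd^2 q/\pd u_i^2 = \frac{3}{4}$ and $P_3 - P_4 = \frac{1}{2}(u_3 - u_4)$, and the hypothesis $\pd q/\pd u_3 > 0$ (resp.\ $\pd q/\pd u_4 > 0$) is literally the statement $P_3 > 0$ (resp.\ $P_4 > 0$). Differentiating $\mu_i = \frac{1}{2}\Lambda_i P_i + q$ in $u_i$, using $\pd\sigma_1/\pd u_i = 1$ so that $\pd\Lambda_i/\pd u_i = \pd\lambda_i/\pd u_i - 2$, and noting that the resulting term $-P_i$ is cancelled by $\pd q/\pd u_i = P_i$, a short computation yields the clean identities
\[
\frac{\pd\mu_3}{\pd u_3} = \frac{1}{2}\frac{\pd\lambda_3}{\pd u_3}P_3 + \frac{3}{8}\Lambda_3, \qquad
\frac{\pd\mu_4}{\pd u_4} = \frac{1}{2}\frac{\pd\lambda_4}{\pd u_4}P_4 + \frac{3}{8}\Lambda_4 .
\]

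Next I would expand the difference quotient $\mu_3 - \mu_4 = \frac{1}{2}(\Lambda_3 P_3 - \Lambda_4 P_4)$ (the term $q$ cancels) in the two algebraically equivalent ways obtained from $\Lambda_3 P_3 - \Lambda_4 P_4 = P_3(\Lambda_3-\Lambda_4) + \Lambda_4(P_3-P_4) = P_4(\Lambda_3-\Lambda_4) + \Lambda_3(P_3-P_4)$, using $\Lambda_3 - \Lambda_4 = \lambda_3 - \lambda_4$ and $P_3 - P_4 = \frac{1}{2}(u_3-u_4)$. Subtracting the appropriate form from each identity above gives
\[
\frac{\pd\mu_3}{\pd u_3} - \frac{3}{2}\frac{\mu_3 - \mu_4}{u_3 - u_4} = \frac{1}{2} P_3\left(\frac{\pd\lambda_3}{\pd u_3} - \frac{3}{2}\frac{\lambda_3-\lambda_4}{u_3-u_4}\right) + \frac{3}{8}(\lambda_3 - \lambda_4),
\]
\[
\frac{\pd\mu_4}{\pd u_4} - \frac{3}{2}\frac{\mu_3 - \mu_4}{u_3 - u_4} = \frac{1}{2} P_4\left(\frac{\pd\lambda_4}{\pd u_4} - \frac{3}{2}\frac{\lambda_3-\lambda_4}{u_3-u_4}\right) - \frac{3}{8}(\lambda_3 - \lambda_4).
\]
By (\ref{34}) the bracket in the first display is positive and that in the second is negative; by (\ref{strict}) we have $\lambda_3 - \lambda_4 > 0$; and the hypotheses give $P_3 > 0$, resp.\ $P_4 > 0$. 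Hence the right-hand side of the first display is strictly positive and that of the second is strictly negative, which are precisely (\ref{ine1}) and (\ref{ine2}).

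The argument is entirely algebraic once (\ref{34}) and (\ref{strict}) are granted, so there is no serious obstacle; the only points demanding care are the cancellation in the derivative of (\ref{eq19}) that produces the two clean identities for $\pd\mu_3/\pd u_3$ and $\pd\mu_4/\pd u_4$, and keeping track of which of the two equivalent forms of $\mu_3 - \mu_4$ must be paired with $\mu_3$ versus $\mu_4$ so that both terms in the final two displays carry the same sign.
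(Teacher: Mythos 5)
Your proposal is correct and is essentially the paper's own argument: both start from the identity $\pd\mu_i/\pd u_i=\tfrac12(\pd\lambda_i/\pd u_i)\,\pd q/\pd u_i+\tfrac12(\lambda_i-2\sigma_1)\,\pd^2 q/\pd u_i^2$ obtained by differentiating (\ref{eq19}), use the same decomposition $\mu_3-\mu_4=\tfrac12(\lambda_3-\lambda_4)\,\pd q/\pd u_3+\tfrac12(\lambda_4-2\sigma_1)\bigl(\pd q/\pd u_3-\pd q/\pd u_4\bigr)$, and conclude via (\ref{34}), (\ref{strict}) and the quadratic form (\ref{q}) of $q$ (your explicit constants $3/4$ and $\tfrac12(u_3-u_4)$ are exactly the paper's relations $3\,\pd^2 q/\pd u_3\pd u_4=\pd^2 q/\pd u_3^2$ and the Euler--Poisson--Darboux identity). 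The only difference is presentational — you display the difference $\pd\mu_3/\pd u_3-\tfrac32(\mu_3-\mu_4)/(u_3-u_4)$ as a manifestly positive sum rather than chaining a lower bound against a separate evaluation — so no further comment is needed.
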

\begin{proof}
We use (\ref{eq19}) and (\ref{34}) to obtain 
\begin{align} \nonumber
\frac{\pd \mu_3}{\pd u_3} &= \frac{1}{2}
\frac{\pd \lambda_3}{\pd u_3} 
\frac{\pd q}{\pd u_3} + \frac{1}{2} (\lambda_3
-2\sigma_1)\frac{\pd^2 q}{\pd u_3^2} \\
& > \frac{3}{4} \frac{\lambda_3 - \lambda_4}{u_3 - u_4}
\frac{\pd q}{\pd u_3} + \frac{1}{2} (\lambda_4 -
2\sigma_1) \frac{\pd^2 q}{\pd u_3^2} \ , \label{eq20a} 
\end{align}
and 
\begin{align}
\mu_3 - \mu_4 &= \frac{1}{2} \left(
\lambda_3 - \lambda_4 \right) \frac{\pd q}{\pd
  u_3} + \frac{1}{2} ( \lambda_4 - 2\sigma_1
) \left( \frac{\pd q}{\pd u_3} - \frac{\pd
  q}{\pd u_4}\right) \nonumber \\
&= \frac{1}{2} \left( \lambda_3 - \lambda_4 \right)
\frac{\pd q}{\pd u_3} + (\lambda_4 -
2\sigma_1) (u_3 - u_4) \frac{\pd^2 q}{\pd
  u_3 \pd u_4} \nonumber \\
&= \frac{2}{3} (u_3 - u_4) \left( \frac{3}{4}\, \frac{\lambda_3 -
  \lambda_4 }{u_2 - u_3} \,\frac{\pd q}{\pd u_3} 
+ \frac{3}{2} (\lambda_4 - 2\sigma_1)
\frac{\pd^2 q}{\pd u_3 \pd u_4} \right), \label{eq20b}
\end{align}
where we have used equation (\ref{eq20})   
\begin{displaymath} \label{eq21}  
\frac{\pd q}{\pd u_3} - \frac{\pd q}{\pd u_4} 
= 2(u_3 - u_4) \frac{\pd^2 q}{\pd u_3 \pd u_4} . 
\end{displaymath}
It follows from formula (\ref{q}) for $q$ that
\begin{equation*}
3 \frac{\pd^2 q}{\pd u_3 \pd u_4 } = \frac{\pd^2 q}{\pd u_3^2} \ , 
\end{equation*}
which, along with with (\ref{eq20a}) and (\ref{eq20b}), proves (\ref{ine1}).
Inequality (\ref{ine2}) is proved in the same way.
\end{proof}

The following calculations are useful in the subsequent sections.
Using formula (\ref{eq19}) for $\mu_3$ and $\mu_4$ and formulae (\ref{lambda}) for
$\lambda_3$ and $\lambda_4$, we obtain
\begin{eqnarray}
\mu_3 - \mu_4 &=& {I \over (\pd_{u_3} I) 
(\pd_{u_4} I)} \left[ {\pd q \over \pd u_4} {\pd I \over \pd u_3} - 
{\pd q \over \pd u_3} {\pd I \over \pd u_4} \right]  \no \\
&=& {I \over (\pd_{u_3} I) (\pd_{u_4} I)} \left[{\pd q \over \pd u_4} 
({\pd I \over \pd u_3} - {\pd I \over \pd u_4}) - ({\pd q \over \pd u_3} -
{\pd q \over \pd u_4}) {\pd I \over \pd u_4} \right] \no \\
&=& {2 I (u_3 - u_4) \over (\pd_{u_3} I) (\pd_{u_4} I)} \ M \ ,
\label{M}
\end{eqnarray}
where    
$$M = {\pd q \over \pd u_4} \ {\pd^2 I \over \pd u_3 \pd u_4}
- {\pd^2 q \over \pd u_3 \pd u_4} \ {\pd I \over \pd u_4} \ .$$
Here we have used equations (\ref{EPD}) for $I$ and equations (\ref{eq20}) for $q$
in equality (\ref{M}).
Since $q$ of (\ref{q}) is quadratic, we obtain 
\begin{equation}
\label{pM}
{\pd M \over \pd u_3} = {\pd q \over \pd u_4} 
 \ {\pd^3 I \over \pd u_3^2 \pd u_4} \ .
\end{equation}
We note that another expression for $M$ is
$$M = {\pd q \over \pd u_3} \ {\pd^2 I \over \pd u_3 \pd u_4}          
- {\pd^2 q \over \pd u_3 \pd u_4} \ {\pd I \over \pd u_3} \ .$$ 
Hence, we get
\begin{equation}
\label{pM'}
{\pd M \over \pd u_4} = {\pd q \over \pd u_3}                          
\ {\pd^3 I \over \pd u_3 \pd u_4^2} \ .                                                   
\end{equation}

We next evaluate $M(u_1, u_2, u_3, u_4)$ when $u_3 = u_4$.
Using the integral formula (\ref{I}) for the function I and applying the change of variable 
$\eta = (u_1 - u_2) \nu + u_2$, we obtain
\begin{align*}
M \Big|_{u_3=u_4} &= { {\pd q \over \pd u_4} \over 4 (u_2 - u_4)^3}  
\int_0^1 {d \nu \over (1 + {u_1 - u_2 \over u_2 - u_4} \nu )^3 
\sqrt{\nu (1 - \nu)} } \\ 
&\phantom{=} - 
{ {\pd^2 q \over \pd u_3 \pd u_4} \over 2 (u_2 - u_4)^2}
\int_0^1 {d \nu \over (1 + {u_1 - u_2 \over u_2 - u_4} \nu )^2                          
\sqrt{\nu (1 - \nu)} } \ .
\end{align*}
The two integrals can be evaluated exactly as
$$\int_0^1 { d \nu \over (1 + \gamma \nu )^3 \sqrt{\nu (1 - \nu)} } 
= { \pi (8 + 8 \gamma + 3 \gamma^2) \over 8 (1 + \gamma )^{5 \over 2}} \ , \quad
\int_0^1 { d \nu \over (1 + \gamma \nu )^2 \sqrt{\nu (1 - \nu)} }                     
= { \pi (2 +  \gamma) \over 2 (1 + \gamma )^{3 \over 2}} $$
for $\gamma > -1$. We finally get
\begin{equation}
\label{M44}
M \Big|_{u_3=u_4} = { \pi U(u_1, u_2, u_4) \over 128 
[(u_2 - u_4)(u_1 - u_4)]^{5 \over 2}} \ ,
\end{equation}
where 
\begin{eqnarray}
U(u_1, u_2, \xi) &=& 
[8(u_2 - \xi)^2 + 8 (u_2 - \xi)(u_1 - u_2) + 3 (u_1 - u_2)^2]
(u_1 + u_2 + 4 \xi) \no \\
&& \quad  - \ 8 (u_2 - \xi) [2(u_2 - \xi)^2 
+ 3 (u_2 - \xi)(u_1 - u_2)
\no \\ && \quad\quad\quad
+ (u_1 - u_2)^2] \label{U} \ .  
\end{eqnarray}
Similar to (\ref{M}) for $\mu_3$ and $\mu_4$, we have
\begin{equation}
\mu_2- \mu_3
= {2 I (u_2 - u_3) \over (\pd_{u_2} I) (\pd_{u_3} I)} \ N \ ,
\label{N}
\end{equation}
where
$$N = {\pd q \over \pd u_2} \ {\pd^2 I \over \pd u_2 \pd u_3}
- {\pd^2 q \over \pd u_2 \pd u_3} \ {\pd I \over \pd u_2} \ .$$
Since $q$ of (\ref{q}) is quadratic, we obtain
\begin{equation}
\label{pN}
{\pd N \over \pd u_3} = {\pd q \over \pd u_2}
 \ {\pd^3 I \over \pd u_2 \pd u_3^2} \ .
\end{equation}
  Finally, we use (\ref{bc3}) and (\ref{eq19}) to calculate 
\begin{eqnarray}
(\mu_2 - \mu_3)\Big|_{u_3=u_4} &=& 
{1 \over 2} [\lambda_2 - 2(u_1 + u_2 + 2u_4)] {\pd q \over \pd u_2} 
- {1 \over 2} [\lambda_3 - 2(u_1 + u_2 + 2u_4)] {\pd q \over \pd u_3} \no \\
&=& {(u_2 - u_4) \over 2(u_1 + u_2 - 2u_4)} \ V(u_1, u_2, u_4)  \label{V} \ ,
\end{eqnarray}
where 
\begin{equation}
V(u_1, u_2, u_4) = 3 u_1^2 + 3 u_2^2 - 12 u_4^2 + 6 u_1 u_2 + 6u_1u_4 - 6 u_2 u_4 \ .
\label{pV}
\end{equation}

\section{Self-Similar Solutions} 

In this section, we construct self-similar solutions of the Whitham equations 
(\ref{mkdvw}) with $g=1$ for the initial function (\ref{step1}) with $a > b > c$.
The case with $a > c > b$ will be studied in next section.
The solution of the zero phase Whitham equations (\ref{mkdv0}) 
does not develop a shock when 
$a + 5b \leq 0$.
We are 
therefore only interested in the case $a + 5b  > 0$. 

We first study the $\xi$-zero of the cubic polynomial equation
\begin{equation}
\label{cubic}
U(a, b, \xi) = 0 \ ,
\end{equation}
where $U$ is given by (\ref{U}). 
It is easy to prove that for each pair of $a$ and $b$ satisfying $a > b$ 
and $a + 5b > 0$, $U(a, b, \xi)=0$ has only one simple real root. Denoting this
zero by $\xi(a, b)$, we then deduce that 
$U(a, b, \xi)$ is positive for $\xi > \xi(a,b)$ and negative for $\xi < \xi(a,b)$. Since 
$U(a, b, -(a+b)/4) < 0$
in view of (\ref{U}), we must have 
\begin{equation}
\label{xx}
\xi(a,b) > -{a + b \over 4} \ .
\end{equation}
For initial function (\ref{step1}) with $a > b > c$ and $a + 5b > 0$, we now classify the 
resulting Whitham
solutions into four types:
\begin{itemize}
\item[(I)] $ ~ \xi(a, b) \leq c $ with any $a>b>c$
\item[(II)] $ ~ \xi(a, b) > c$ with $a+5b > 3(b-c)>0$
\item[(III)] $~ \xi(a, b) > c$ with $a+5b = 3(b-c)>0$
\item[(IV)] $ ~ \xi(a, b) > c$ with $0<a+5b < 3(b-c)$
\end{itemize}
We will study the second type (II) first. 

\subsection{Type II}
Here we consider the 
step initial function (\ref{step1}) 
satisfying $\xi(a, b) > c$ and $a+5b >
3(b-c) > 0$.  

\begin{theorem}(see Figure \ref{fig1}.)
\label{thm1}
For the step-like initial data (\ref{step1}) with $a > b > c$, $a + 5b>3(b-c)$ and $\xi(a, b) > c$,
the solution $(\alpha, \beta)$ of 
the zero phase Whitham equations (\ref{mkdv0}) and the solution $(u_1, u_2, u_3, u_4)$ of
the single phase Whitham equations
(\ref{mkdvw}) with $g=1$ are given as follows:
\begin{itemize}
\item[(1)] For $ x/t \leq \mu_3(a, b, \xi(a,b), \xi(a,b))$, 
\begin{equation}\label{bs1}
\alpha=a\,,\quad \beta=b\,.
\end{equation}
\item[(2)] For $\mu_3(a, b, \xi(a,b), \xi(a,b)) < x/t < \mu_3(a, b, u^{**}, c)$,
\begin{equation}
\label{ws1}
u_1 = a \ , \quad u_2 = b \ , \quad \frac{x}{t} = \mu_3(a, b, u_3, u_4) \ , 
\quad \frac{x}{t} = \mu_4(a, b, u_3,u_4) \ , 
\end{equation}
where $u^{**}$ is the unique   
solution $u_3$ of 
$\mu_3(a, b, u_3, c)=\mu_4(a, b, u_3, c)
$
in the interval 
$c <u_3< b$. 
\item[(3)] For $\mu_3(a, b, u^{**}, c)  \leq x/t  < \mu_3(a, b, b, c)$,
\begin{equation}
\label{ws2}
u_1 = a \ , \quad u_2 = b \ , \quad \frac{x}{t} = \mu_3(a, b, u_3, c) \ , \quad 
u_4 = c  .
\end{equation}
\item[(4)] For $ x/t\geq \mu_3(a, b, b, c)$,
\begin{equation}\label{bs2}
\alpha=a\,,\quad \beta=c.
\end{equation}
\end{itemize}
\end{theorem}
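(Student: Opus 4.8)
The plan is to verify that the four-part formula in Theorem~\ref{thm1} actually defines a weak solution of the Whitham equations with the prescribed step initial data, i.e.\ that the pieces match continuously across the three interfaces and that each interior piece genuinely solves its governing equation. The backbone is the standard self-similar ansatz: in regions (1) and (4) one simply checks that the constant states $(\alpha,\beta)=(a,b)$ and $(a,c)$ solve the $2\times2$ system (\ref{mkdv0}) trivially; in regions (2) and (3) one must show that the implicit prescriptions can actually be solved for the free Riemann invariants and that the resulting functions are monotone in $x/t$ so the solution is single-valued.

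First I would treat region (3), which is the easier rarefaction: with $u_1=a$, $u_2=b$, $u_4=c$ fixed, the equation $x/t=\mu_3(a,b,u_3,c)$ defines $u_3$ implicitly, and solvability plus monotonicity follow from $\partial\mu_3/\partial u_3>0$ on the relevant interval. By (\ref{eq19}) this derivative equals $\tfrac12(\partial\lambda_3/\partial u_3)(\partial q/\partial u_3)+\tfrac12(\lambda_3-2\sigma_1)\partial^2 q/\partial u_3^2$, and using the lower bound in (\ref{eq20a}) together with the sign of $\partial q/\partial u_3$ (which is where the condition $\xi(a,b)>c$ and the relation (\ref{M44})--(\ref{U}) enter: $\partial q/\partial u_3$ has the sign of $U$, hence is positive exactly when $u_3>\xi(a,b)$, and one arranges the endpoints accordingly). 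The left endpoint $u_3=u^{**}$ and right endpoint $u_3=b$ give the stated speeds $\mu_3(a,b,u^{**},c)$ and $\mu_3(a,b,b,c)$; at $u_3=b$ one has $u_2=u_3$, so by the collapse formulae one recovers the zero-phase state $(\alpha,\beta)=(a,c)$ on the genus-0 side, matching region (4).

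Next I would handle region (2), the genuinely new part, where \emph{both} $x/t=\mu_3$ and $x/t=\mu_4$ are imposed with $u_1=a$, $u_2=b$ fixed and $(u_3,u_4)$ varying on a curve. The key is that $\mu_3=\mu_4$ defines a curve in the triangle $c<u_4<u_3<b$ (non-strict hyperbolicity), and along this curve $x/t=\mu_3=\mu_4$ is monotone. By (\ref{M}), $\mu_3-\mu_4$ vanishes precisely when $M=0$; the endpoints of the curve are $(u_3,u_4)=(\xi(a,b),\xi(a,b))$ — using (\ref{M44}), $M|_{u_3=u_4}=0$ iff $U(a,b,u_4)=0$ iff $u_4=\xi(a,b)$ — and the point where the curve meets $u_4=c$, namely $u_3=u^{**}$, which exists and is unique by the sign analysis of $U$ and the fact that $\xi(a,b)>c$. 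At the first endpoint $u_3=u_4=\xi(a,b)$ the genus-1 solution collapses to the genus-0 state $(\alpha,\beta)=(a,b)$, matching region (1); at the second it matches the left end of region (3). Monotonicity of $x/t$ along the coalescence curve I would get by differentiating $\mu_3$ along the curve and invoking (\ref{ine1})--(\ref{ine2}) (which control $\partial\mu_3/\partial u_3$ and $\partial\mu_4/\partial u_4$ versus $(\mu_3-\mu_4)/(u_3-u_4)$) together with the over-determined relations (\ref{a23}); this reduces the tangential derivative to a manifestly signed quantity.

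The main obstacle I expect is precisely the verification in region (2) that the coalescence locus $\{\mu_3=\mu_4\}$ is a single smooth arc running monotonically (in the $x/t$ value) from $(\xi(a,b),\xi(a,b))$ to $(u^{**},c)$ without self-intersections, turning points, or extra components — in other words, that $M=0$ cuts out a well-behaved curve and that $x/t$ restricted to it is strictly increasing. This requires squeezing real information out of the elliptic-integral quantities $I$ and its derivatives via (\ref{pM}), (\ref{pM'}), the EPD relations (\ref{EPD}), and the boundary evaluations (\ref{M44})--(\ref{V}); the condition $a+5b>3(b-c)$ is exactly what is needed to place $u^{**}$ strictly between $c$ and $b$ and to keep the arc inside the physical triangle, so I would isolate that inequality as the decisive hypothesis and check the degenerate cases $a+5b=3(b-c)$ (Type III, boundary) and $a+5b<3(b-c)$ (Type IV) fall outside its scope. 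Everything else — continuity at the interfaces, the trivial checks in regions (1) and (4), and single-valuedness in region (3) — is routine once the sign of $\partial q/\partial u_3$ and the lemmas above are in hand.
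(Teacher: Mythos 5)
Your overall architecture coincides with the paper's: the trailing edge located at $u_3=u_4=\xi(a,b)$ through (\ref{M44}) and (\ref{U2}), the coalescence curve $\mu_3=\mu_4$ carrying the solution in region (2) until $u_4$ reaches $c$ at $u_3=u^{**}$, the single implicit equation $x/t=\mu_3(a,b,u_3,c)$ in region (3), the edge matching, and the hypothesis $a+5b>3(b-c)$ (i.e.\ $a+2b+3c>0$) entering through the boundary evaluation (\ref{u30}) to rule out the curve exiting through $u_3=b$. However, there is a genuine gap exactly at the point you yourself flag as the main obstacle, and the one mechanism you propose to fill it is incorrect. You assert that $\partial q/\partial u_3$ ``has the sign of $U$, hence is positive exactly when $u_3>\xi(a,b)$.'' This is false: by (\ref{q}), $\partial q/\partial u_3(a,b,u_3,u_4)=(a+b+3u_3+u_4)/4$ is a linear quantity whose sign has nothing directly to do with the cubic $U$; the cubic controls $M$, hence $\mu_3-\mu_4$ at coalescence via (\ref{M44}), not $\partial q/\partial u_3$. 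The only link to $\xi(a,b)$ is the inequality (\ref{xx}), $\xi(a,b)>-(a+b)/4$, which gives $\partial q/\partial u_3=\partial q/\partial u_4=(a+b+4\xi)/4>0$ at the trailing edge alone.

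The missing ingredient is the paper's inequality (\ref{p>0}): the positivity of \emph{both} $\partial q/\partial u_3$ and $\partial q/\partial u_4$ must be shown to persist along the whole coalescence curve, not merely at the trailing edge. The paper gets this by rewriting $\mu_3=\mu_4$ through (\ref{eq19}) as $[\lambda_3-2\sigma_1]\,\partial q/\partial u_3=[\lambda_4-2\sigma_1]\,\partial q/\partial u_4$, observing that both brackets are negative by (\ref{strict}) while $\partial q/\partial u_3-\partial q/\partial u_4=(u_3-u_4)/4>0$, so neither derivative can vanish. Only with (\ref{p>0}) in hand can you invoke (\ref{ine1})--(\ref{ine2}) to obtain $\partial\mu_3/\partial u_3>0$ and $\partial\mu_4/\partial u_4<0$ on the curve (the paper's (\ref{dia2})), which --- together with the vanishing cross-derivatives (\ref{dia}), a consequence of (\ref{a23}) precisely because $\mu_3=\mu_4$ there --- gives the continuation of the curve from $(\xi,\xi)$ to $(u^{**},c)$, the monotonicity of $x/t$ along it, and the local solvability at the trailing edge (where the paper additionally checks $\partial F/\partial u_3=\partial F/\partial u_4>0$ using (\ref{pM}), (\ref{pM'}) and (\ref{xx})). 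The same positivity, evaluated at $u_3=u^{**}$ and propagated for $u_3>u^{**}$, is what your region-(3) argument needs in place of the erroneous $U$-criterion, combined with Lemma \ref{u**L}'s sign statement that $\mu_3-\mu_4>0$ for $u_3>u^{**}$ (whose uniqueness proof, via (\ref{PM1}), you also only gesture at). So: right skeleton, same route as the paper, but the sign analysis that powers every step of it is left unproved, and the substitute you offer for it is wrong.
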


The boundaries $x/t = \mu_3(a, b, \xi(a,b), \xi(a,b))$ and $x/t = \mu_3(a, b, b, c)$ 
are called the trailing and leading edges, respectively.
They separate the solutions of the single phase Whitham equations (\ref{mkdvw}) with $g=1$ and 
the zero phase Whitham equations (\ref{mkdv0}).
The single phase Whitham solution matches the zero phase Whitham solution in the following 
fashion (see Figure 1.):
\begin{eqnarray}
\label{tr1}
(u_1, u_2) &=& \mbox{the solution $(\alpha, \beta)$ of (\ref{mkdv0}) defined outside the region} \ , \\
\label{tr2}
u_3 &=& u_4 \ , 
\end{eqnarray}
at the trailing edge; 
\begin{eqnarray}
\label{le1}
(u_1, u_4)  &=& \mbox{the solution $(\alpha, \beta)$ of (\ref{mkdv0}) defined outside the region} \ , \\
\label{le2}
u_2 &=& u_3 \ , 
\end{eqnarray}
at the leading edge. 

The proof of Theorem 3.1 is based on a series of lemmas:
We first show that the solutions defined by formulae (\ref{ws1}) and (\ref{ws2}) 
indeed satisfy the Whitham equations (\ref{mkdvw}) for $g=1$ \cite{dub, tsa}. 

\begin{lemma}

\begin{itemize}
\item[(1)] The functions $u_1$, $u_2$, $u_3$ and $u_4$ determined by equations (\ref{ws1})
give a solution of the Whitham equations (\ref{mkdvw}) with $g=1$ as long as $u_3$ and $u_4$
can be solved from (\ref{ws1}) as functions of $x$ and $t$.

\item[(2)] The functions $u_1$, $u_2$, $u_3$ and $u_4$ determined by equations (\ref{ws2})
give a solution of the Whitham equations (\ref{mkdvw}) with $g=1$ as long as $u_3$ 
can be solved from (\ref{ws2}) as a function of $x$ and $t$.
\end{itemize}

\end{lemma}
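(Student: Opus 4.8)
The plan is to verify directly that the Riemann-invariant hodograph relations of the form $x = \mu_i(u_1,\dots,u_4)\,t$ produce genuine solutions of the Whitham system, exploiting the over-determined identities (\ref{a23}) that the $\mu_i$ inherit from the $\lambda_i$. This is the classical argument of Tsarev (\cite{tsa}; see also \cite{dub}): if one prescribes on the solution manifold that the remaining variables are constant (here $u_1=a$, $u_2=b$, and in part (2) nothing further, in part (1) $u_4=c$), then the implicit relations $x-\mu_i t=0$ define $u_3$ (and $u_4$) as functions of $x,t$, and one checks that these satisfy $\partial_t u_i + \mu_i \partial_x u_i = 0$ for every $i$.

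For part (2) I would proceed as follows. Differentiate the two relations $x = \mu_3(a,b,u_3,u_4)\,t$ and $x = \mu_4(a,b,u_3,u_4)\,t$ with respect to $x$ and with respect to $t$, treating $u_3,u_4$ as the unknowns. This gives a linear $2\times 2$ system for $(\partial_x u_3, \partial_x u_4)$ and another for $(\partial_t u_3, \partial_t u_4)$, with the same coefficient matrix
\[
\begin{pmatrix} \mu_{3,u_3} & \mu_{3,u_4} \\ \mu_{4,u_3} & \mu_{4,u_4} \end{pmatrix},
\]
which is invertible exactly on the set where the hodograph map is a local diffeomorphism, i.e. precisely the hypothesis ``as long as $u_3,u_4$ can be solved as functions of $x,t$.'' Comparing the two solved systems one obtains $\partial_t u_3 = -\mu_3\,\partial_x u_3 + (\text{cross terms})$ and similarly for $u_4$; the cross terms vanish if and only if $\mu_{3,u_4}\cdot(\mu_3-\mu_4)^{-1}$ matches the analogous combination — and this is exactly what the identity (\ref{a23}), $\frac{1}{\mu_i-\mu_j}\mu_{i,u_j} = \frac{1}{\lambda_i-\lambda_j}\lambda_{i,u_j}$, together with the known fact that the $\lambda_i$ satisfy the same compatibility, delivers. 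The standard computation shows $\partial_t u_i + \mu_i\partial_x u_i = 0$ for $i=3,4$, while for $i=1,2$ the equations hold trivially because $u_1,u_2$ are constant (and one must note that the Whitham equation for $u_1$, say, reads $\partial_t u_1 + \mu_1 \partial_x u_1 = 0$, which is $0=0$).

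Part (1) is the degenerate/reduced version: set $u_4\equiv c$ so only the single relation $x = \mu_3(a,b,u_3,c)\,t$ remains. Then $\partial_x u_3 = (\mu_{3,u_3} t)^{-1}$ and $\partial_t u_3 = -\mu_3(\mu_{3,u_3}t)^{-1}$, so $\partial_t u_3 + \mu_3 \partial_x u_3 = 0$ is immediate wherever $\mu_{3,u_3}\neq 0$, i.e. wherever $u_3$ is solvable; the equations for $u_1=a$, $u_2=b$, $u_4=c$ are again automatic. One should also remark that part (2)'s formula degenerates into part (1)'s along $u_4=c$, so the two pieces are consistent — though that matching is really the content of the later lemmas rather than this one.

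The main obstacle is the bookkeeping in part (2): one must carefully carry the $2\times2$ linear algebra and invoke (\ref{a23}) at the right moment, and in particular confirm that the compatibility condition (\ref{a23}) for the pair $(3,4)$ is precisely what is needed — no more, no less — to kill the off-diagonal contributions, since Tsarev's theorem requires the full semi-Hamiltonian (rich) structure of the system, which (\ref{a23}) encodes by reducing it to the already-known semi-Hamiltonian structure of the NLS-Whitham system. I do not expect to need genuine nonlinearity anywhere; only the solvability hypothesis (nonvanishing of the relevant Jacobian) and the compatibility identities are used, which is exactly why the theorem is phrased the way it is.
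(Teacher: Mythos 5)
You have swapped the two parts relative to the paper: (\ref{ws1}) is the system with the \emph{two} relations $x/t=\mu_3(a,b,u_3,u_4)$, $x/t=\mu_4(a,b,u_3,u_4)$ and unknowns $u_3,u_4$, while (\ref{ws2}) has the \emph{single} relation $x/t=\mu_3(a,b,u_3,c)$ with $u_4=c$; that is only a labeling slip, and your treatment of the single-relation case (and of the trivial equations for the constant invariants) is fine. The genuine gap is in the two-relation case: you never use the one structural fact that makes the argument work, namely that on the solution of (\ref{ws1}) both right-hand sides equal $x/t$, so $\mu_3=\mu_4$ there — exactly the coalescence (non-strict hyperbolicity) the paper emphasizes. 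Your proposed mechanism, that the cross terms cancel because $\mu_{3,u_4}(\mu_3-\mu_4)^{-1}$ ``matches the analogous combination'' via (\ref{a23}) and Tsarev's semi-Hamiltonian structure, is not what kills them; indeed $(\mu_3-\mu_4)^{-1}$ is singular precisely on the solution set you are working on. If you actually carry out the $2\times 2$ elimination you propose, with $J$ the coefficient matrix $\bigl(\partial\mu_i/\partial u_j\bigr)_{i,j=3,4}$, you find
\[
u_{3t}+\mu_3 u_{3x}=\frac{\left(\mu_4-\mu_3\right)\,\partial\mu_3/\partial u_4}{t\,\det J},
\qquad
u_{4t}+\mu_4 u_{4x}=\frac{\left(\mu_3-\mu_4\right)\,\partial\mu_4/\partial u_3}{t\,\det J},
\]
and these vanish because $\mu_3-\mu_4=0$ along the solution, not because of any semi-Hamiltonian compatibility. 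Without stating and using this coalescence, the central step of part (1) of the lemma (in the paper's numbering) is unproved.

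The paper makes the same point through (\ref{a23}): since $\mu_3-\mu_4=0$ on the solution of (\ref{ws1}) while $\lambda_3\neq\lambda_4$ (strict hyperbolicity of the NLS speeds), identity (\ref{a23}) gives $\partial\mu_3/\partial u_4=\partial\mu_4/\partial u_3=0$ there, after which differentiating $x=\mu_3 t$ and $x=\mu_4 t$ in $x$ and $t$ reduces to two decoupled scalar computations identical to your single-relation argument. Note also that Tsarev's generalized hodograph theorem is not the right frame here: the ansatz is not $x=\mu_i t+w_i$ for all $i$ with $w_i$ the speeds of a commuting flow, but two constant Riemann invariants together with two \emph{coinciding} hodograph relations. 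You are right that no genuine nonlinearity is needed — only the solvability (Jacobian) hypothesis — but the identity $\mu_3=\mu_4=x/t$ on the solution manifold is indispensable and must appear explicitly in the proof.
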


\begin{proof}
 (1) $u_1$ and $u_2$ obviously satisfy the first two equations of 
(\ref{mkdvw}) for $g=1$. To verify the 
third and fourth equations, we observe that 
\begin{equation}
\label{dia}
\frac{\pd \mu_3 }{\pd u_4} = \frac{\pd \mu_4 }{\pd u_3} = 0
\end{equation}
on the solution of (\ref{ws1}). To see this, we use (\ref{a23}) to calculate
$$\frac{\pd \mu_3 }{\pd u_4} = {{\pd \lambda_3 \over \pd u_4} \over \lambda_3 - 
\lambda_4}
\ (\mu_3 - \mu_4) = 0 \ .$$
The second part of (\ref{dia}) can be shown in the same way.
We then calculate the partial derivatives of the third equation of (\ref{ws1})
with respect to $x$ and $t$
$$ 1 = \frac{\pd \mu_3 }{\pd u_3} \ t u_{3x} \ , \quad 0 = \frac{\pd \mu_3 }{\pd u_3} \ t u_{3t} + \mu_3 \ ,$$ 
which give the third equation of (\ref{mkdvw}) with $g=1$. 
The fourth equation of (\ref{mkdvw}) with $g=1$ can be verified in the same way.

(2) The second part of Lemma 3.2 can easily be proved.
\end{proof}

We now determine the trailing edge. Eliminating $x$ and $t$ from the last two equations of (\ref{ws1})
yields 
\begin{equation}
\label{m34}
\mu_3(a, b, u_3, u_4) - \mu_4(a, b, u_3, u_4) = 0 \ . 
\end{equation}
Since it degenerates at $u_3 = u_4$, we replace (\ref{m34}) by 
\begin{equation}
\label{F1}
F(a, b, u_3,u_4) := {\mu_3(a, b, u_3,u_4)-\mu_4(a, b, u_3,u_4) \over u_3 - u_4} = 0 \ .
\end{equation}
Therefore, at the trailing edge where $u_3=u_4$, equation
(\ref{F1}), in view of formulae (\ref{M}) and (\ref{M44}), reduces to
\begin{equation}
\label{U2}
U(a, b, u_4) = 0 \ .
\end{equation}
Noting that $\xi(a,b)$ is the unique solution of (\ref{cubic}), we then deduce 
that $u_4 = \xi(a,b)$.

\begin{lemma}
Equation (\ref{F1}) has a unique solution satisfying $u_3=u_4$. The solution
is $u_3=u_4= \xi(a,b)$. The rest of equations (\ref{ws1}) at the trailing edge
are $u_1=a$, $u_2 = b$ and
$x/t = \mu_3(a, b, \xi(a,b), \xi(a,b))$.
\end{lemma}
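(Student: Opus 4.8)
The plan is to establish three things: (a) that any solution of \eqref{F1} with $u_3=u_4$ forces $u_4=\xi(a,b)$; (b) that this value actually satisfies the remaining constraints; and (c) that it is the \emph{unique} such solution. Most of the analytic work is already packaged in the identities derived at the end of Section~2. Indeed, by \eqref{M} we have
$$
\mu_3(a,b,u_3,u_4)-\mu_4(a,b,u_3,u_4)=\frac{2I(u_3-u_4)}{(\pd_{u_3}I)(\pd_{u_4}I)}\,M(a,b,u_3,u_4),
$$
so that $F(a,b,u_3,u_4)=\dfrac{2I}{(\pd_{u_3}I)(\pd_{u_4}I)}\,M$, and hence on the locus $u_3=u_4$ equation \eqref{F1} reduces to $M\big|_{u_3=u_4}=0$. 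By \eqref{M44}, $M\big|_{u_3=u_4}$ is a nonzero multiple of $U(a,b,u_4)$, so \eqref{F1} with $u_3=u_4$ is equivalent to \eqref{U2}, i.e. $U(a,b,u_4)=0$.

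First I would invoke the fact, established just before the statement of the four types, that for $a>b$ with $a+5b>0$ the cubic $U(a,b,\xi)=0$ has a \emph{unique} simple real root $\xi(a,b)$. This gives the identification $u_3=u_4=\xi(a,b)$ immediately, and simultaneously the uniqueness claim: there is exactly one value of the common coordinate solving \eqref{U2}. One should also check that this root is admissible as a value of $u_3=u_4$, i.e. that it lies below $u_2=b$; this is where the hypothesis $\xi(a,b)>c$ (together with $a>b>c$) enters — it guarantees the trailing‑edge point lies strictly inside the relevant range $c<\xi(a,b)$, and the bound \eqref{xx}, $\xi(a,b)>-(a+b)/4$, together with $a+5b>0$ (equivalently the no‑shock threshold being crossed) gives $\xi(a,b)<b$, so that the ordering $u_4=u_3=\xi(a,b)<u_2=b<u_1=a$ is genuine.

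It then remains to read off the remaining components of \eqref{ws1} at the trailing edge. Two of them, $u_1=a$ and $u_2=b$, are built into \eqref{ws1} directly. For the value of the self‑similarity variable, the last two equations of \eqref{ws1} are $x/t=\mu_3(a,b,u_3,u_4)$ and $x/t=\mu_4(a,b,u_3,u_4)$; since at the trailing edge $\mu_3=\mu_4$ (that is precisely the content of \eqref{m34}, the degenerate form of \eqref{F1}), both give the same number, namely $x/t=\mu_3(a,b,\xi(a,b),\xi(a,b))$. This completes the identification.

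The only genuinely delicate point is verifying the admissibility inequalities $c<\xi(a,b)<b$ that make the trailing‑edge data a legitimate point of the single‑phase region, and confirming that $F$ does not vanish \emph{off} the diagonal with a spurious coincidence $u_3=u_4$ hidden inside a larger solution set — but the factorization $F=\frac{2I}{(\pd_{u_3}I)(\pd_{u_4}I)}M$ together with the simplicity of the root $\xi(a,b)$ of the cubic rules this out, since on $\{u_3=u_4\}$ the equation is exactly $U(a,b,u_4)=0$ and $U$ has no other real zero. Thus the lemma follows.
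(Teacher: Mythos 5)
Your proposal is correct and follows essentially the paper's own route: you factor $F$ through (\ref{M}), use (\ref{M44}) on the diagonal $u_3=u_4$ to reduce (\ref{F1}) to the cubic equation (\ref{U2}), and then invoke the uniqueness of the root $\xi(a,b)$ of (\ref{cubic}) for $a>b$, $a+5b>0$, reading off $u_1=a$, $u_2=b$ and $x/t=\mu_3=\mu_4$ at the trailing edge exactly as the paper does. One minor caveat: your side remark that $\xi(a,b)<b$ follows from (\ref{xx}) together with $a+5b>0$ is not a valid deduction as stated — the quick way to see $\xi(a,b)<b$ is $U(a,b,b)=3(a-b)^2(a+5b)>0$ combined with the sign property of $U$ about its unique root — but this admissibility check is not part of the lemma's claim, so it does not affect the correctness of your proof.
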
  

Having located the trailing edge, we now solve equations (\ref{ws1}) in the
neighborhood of the trailing edge. We first consider equation (\ref{F1}).
We use (\ref{M}) to write $F$ of (\ref{F1}) as 
$$F(a, b, u_3, u_4) = { 2 I \over  (\pd_{u_3} I) (\pd_{u_3} I) } M(a, b, u_3, u_4) \ . $$
We note that at the trailing edge $u_3=u_4= \xi(a,b)$, we have $M(a, b, \xi(a,b), \xi(a,b))= 0$
because of (\ref{M44}) and (\ref{U2}). We then use (\ref{pM}) and (\ref{pM'}) to
differentiate $F$ at the trailing edge 
\begin{align*}
{\pd F(a, b, \xi(a,b), \xi(a,b)) \over \pd u_3} &= 
{\pd F(a, b, \xi(a,b), \xi(a,b)) \over \pd u_4} \\ &= 
{ I \over 2 (\pd_{u_3} I) (\pd_{u_3} I) } \ [a + b + 4 \xi(a,b)] \
{\pd^3 I \over \pd u_3^2 \pd u_4} > 0 \ ,\end{align*}
where we have used the expression (\ref{q}) for $q$ in the last equation and (\ref{xx}) 
in the inequality.
These show that equation (\ref{F1}) or equivalently (\ref{m34}) can be 
inverted to give $u_4$ as a decreasing
function of $u_3$
\begin{equation}
\label{a} u_4 = A(u_3)
\end{equation}
in a neighborhood of $u_3=u_4= \xi(a,b)$. 

We now extend the solution $A(u_3)$ of equation (\ref{m34}) in the region
$c < u_4 < \xi(a,b) < u_3 < b$ as far as possible. We first claim that
\begin{equation}
\label{p>0} 
{\pd q(a, b, u_3, u_4) \over \pd u_3} > 0 \ , \quad
{\pd q(a, b, u_3, u_4) \over \pd u_4} > 0 
\end{equation}
on the extension. To see this, we first observe that inequalities (\ref{p>0}) are
true at the trailing edge $u_3=u_4=\xi(a,b)$. This follows from (\ref{q}) and (\ref{xx}).
Therefore, inequalities (\ref{p>0}) hold in a neighborhood 
of the trailing edge. To prove that (\ref{p>0}) remains true on the extension,
we use formula (\ref{eq19}) for $\mu_3$ and $\mu_4$
to rewrite equation (\ref{m34}) as
$$ {1 \over 2} [ \lambda_3 - 2(a + b + u_3 + u_4) ] {\pd q \over \pd u_3}
= {1 \over 2} [ \lambda_4 - 2(a + b + u_3 + u_4) ] {\pd q \over \pd u_4} \ .$$
Since the two terms in the two parentheses are both negative in view of (\ref{strict}) and since 
${\pd q \over \pd u_3} - 
{\pd q \over \pd u_4} = (u_3 - u_4)/4 >0$ in view of (\ref{q}), neither ${\pd q \over \pd u_3}$
nor ${\pd q \over \pd u_4}$ can vanish on the extension. This proves inequalities
(\ref{p>0}).  

We deduce from Lemma 2.2 that
\begin{equation}
\label{dia2}
{\pd \mu_3 \over \pd u_3} > 0 \ , \quad {\pd \mu_4 \over \pd u_4} <  0
\end{equation}
on the solution of (\ref{m34}).
Because of (\ref{dia}) and (\ref{dia2}), solution (\ref{a}) of equation (\ref{m34}) 
can be extended as long as $c < u_4 < \xi(a,b) < u_3 < b$.

There are two possibilities: (1) $u_3$ touches $b$ before or simultaneously
as $u_4$ reaches $c$ and (2) $u_4$ touches $c$ before $u_3$ reaches $b$.
It follows from (\ref{bc4}), (\ref{eq19}) and (\ref{q}) that
\begin{equation}
\label{u30} \mu_3(a, b, b, u_4) - \mu_4(a, b, b, u_4) 
= {1 \over 2} \ (b - u_4) (a + 2b + 3u_4)  > 0
 \quad \mbox{for $c \leq u_4 < b$} \ ,
\end{equation}
where we have used $a + 2b + 3c > 0$ in the inequality.
This shows that (1) is unattainable. Hence, $u_4$ will touch $c$ before $u_3$
reaches $b$. When this happens, equation (\ref{m34}) becomes 
\begin{equation}
\label{u**}
\mu_3(a,b,u_3,c)=\mu_4(a,b,u_3,c)\,.
\end{equation}

\begin{lemma}
\label{u**L}
Equation (\ref{u**}) has a simple zero in the interval $c  < u_3 < b$, counting
multiplicities. Denoting the zero by $u^{**}$, then $\mu_3(a, b, u_3, c)
 - \mu_4(a, b, u_3, c)$ is positive
for $u_3 > u^{**}$ and negative for $u_3 < u^{**}$.
\end{lemma}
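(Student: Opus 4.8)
The plan is to reduce the statement to studying the single function $h(u_3):=M(a,b,u_3,c)$ on the open interval $c<u_3<b$, with $M$ as in (\ref{M})--(\ref{pM}). By (\ref{M}) with $u_4=c$ one has $\mu_3(a,b,u_3,c)-\mu_4(a,b,u_3,c)=2\,I\,(u_3-c)\,h(u_3)\big/\big[(\pd_{u_3}I)(\pd_{u_4}I)\big]$, and $I$, $\pd_{u_3}I$, $\pd_{u_4}I$ are all strictly positive there, since the integrand in (\ref{I}) and the integrands of its $u_3$- and $u_4$-derivatives are positive on $(u_2,u_1)$. Hence, for $c<u_3<b$, the sign of $\mu_3-\mu_4$ equals the sign of $h$, and it suffices to locate the zeros of $h$ and their sign changes.

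I would then compute the two endpoint behaviors. As $u_3\downarrow c$, formula (\ref{M44}) gives $h(c^+)=M\big|_{u_3=u_4=c}=\pi\,U(a,b,c)\big/\big(128\,[(b-c)(a-c)]^{5/2}\big)$, which is strictly negative because $c<\xi(a,b)$ and $U(a,b,\cdot)$ is negative to the left of its unique root $\xi(a,b)$. As $u_3\uparrow b$, identity (\ref{u30}) with $u_4=c$ gives $\mu_3-\mu_4\to\tfrac12(b-c)(a+2b+3c)>0$, the positivity being exactly the Type II hypothesis $a+5b>3(b-c)$; by the previous paragraph, $h>0$ near $b$. For the monotonicity of $h$, (\ref{pM}) gives $h'(u_3)=\frac{\pd q}{\pd u_4}(a,b,u_3,c)\cdot\frac{\pd^3 I}{\pd u_3^2\,\pd u_4}(a,b,u_3,c)$; the triple derivative is positive on $(c,b)$ by the same integral-representation argument already used to obtain $\pd F/\pd u_3>0$ at the trailing edge, while by (\ref{q}), $\frac{\pd q}{\pd u_4}(a,b,u_3,c)=\tfrac14(a+b+3c+u_3)$ is a strictly increasing linear function of $u_3$ that is positive at $u_3=b$ (again $a+2b+3c>0$). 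Therefore, setting $u^*:=\max\{c,\,-(a+b+3c)\}\in[c,b)$, $h$ is non-increasing on $(c,u^*)$ and strictly increasing on $(u^*,b)$.

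Assembling these facts: $h(c^+)<0$ together with $h$ non-increasing on $(c,u^*)$ keeps $h$ negative on $(c,u^*]$, and on $(u^*,b)$ the function $h$ increases strictly from a negative value to a value that is positive near $b$; hence $h$ has exactly one zero $u^{**}$ in $(c,b)$ (lying in $(u^*,b)$), it is simple because $h'(u^{**})>0$, and $h<0$ for $c<u_3<u^{**}$ while $h>0$ for $u^{**}<u_3<b$. By the first paragraph, $\mu_3(a,b,u_3,c)-\mu_4(a,b,u_3,c)$ obeys the same pattern, which is the assertion. I expect the only genuine obstacle to be this monotonicity step in the regime $a+b+4c<0$, where $h=M$ is itself not monotone on all of $(c,b)$: the point is that $\pd q/\pd u_4$ is an increasing linear function of $u_3$ that is already positive at $u_3=b$, so $h$ can decrease only on an initial subinterval on which it stays negative anyway, and is strictly increasing afterwards---precisely what is needed for a single simple sign change.
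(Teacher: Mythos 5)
Your proposal is correct and takes essentially the same route as the paper's proof: both reduce the sign of $\mu_3-\mu_4$ to the sign of $M(a,b,u_3,c)$ via (\ref{M}), obtain $M<0$ at $u_3=c$ from (\ref{M44}) with $c<\xi(a,b)$ and $M>0$ near $u_3=b$ from (\ref{u30}) under $a+2b+3c>0$, and deduce existence, uniqueness and simplicity of the zero from (\ref{pM}) because the factor $\tfrac14(a+b+3c+u_3)$ is increasing in $u_3$ and positive at $u_3=b$. Your explicit split at $u^*=\max\{c,-(a+b+3c)\}$ simply spells out the paper's remark that $M$ ``increases or changes from decreasing to increasing'' as $u_3$ increases.
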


\begin{proof}
We use (\ref{M}) and (\ref{pM}) to prove the lemma.
In both formulae, $\pd_{u_3} I$, $\pd_{u_4} I$ and $\pd^2_{u_3 u_4} I$ 
are all positive functions.
By (\ref{pM}), 
\begin{equation}
\label{PM1}
{\pd M(a, b, u_3, c) \over \pd u_3} = { (a + b + u_3 + 3c) \over 4} {\pd^3 I
\over \pd u_3^2 \pd u_4} \quad \ \mbox{for $c < u_3 < b$} \ .
\end{equation}
We claim that 
$$M(a, b, u_3, c) < 0 \quad \mbox{when $u_3 = c$ ~ ~ and ~ } 
~ M(a, b, u_3, c) > 0 \quad \mbox{for $u_3$ near $b$} \ .$$
The second inequality  follows from (\ref{M}) and (\ref{u30}).
The first inequality can be deduced from formula (\ref{M44})
$$M(a, b, c, c) = { \pi U(a, b, c) \over 128 [(b - c)(a - c)]^{5 \over 2}} 
<0 ~~ \quad \mbox{for $c < \xi(a, b)$} \ .$$
 Therefore, $M(a, b, u_3, c)$ has a zero in the interval $c < u_3 < b$. The uniqueness
of the zero follows from
(\ref{PM1}) in that $M(a, b, u_3, c)$ increases or changes from
decreasing to increasing as $u_3$ increases.
This zero is exactly
$u^{**}$ and the rest of the theorem can be proved easily.
\end{proof}

Having solved equation (\ref{m34}) for $u_4$ as a decreasing function of $u_3$
for $c < u_4 < \xi(a,b) < u_3 < b$, we turn to equations (\ref{ws1}). Because of (\ref{dia}) and (\ref{dia2}),
the third equation of (\ref{ws1}) gives $u_3$ as an increasing function of $x/t$, for
$\mu_3(a, b, \xi(a,b), \xi(a,b)) < x/t < \mu_3(a, b, u^{**}, c)$. 
Consequently, $u_4$ is a decreasing function of $x/t$ in the same interval.

\begin{lemma}
The last two equations of (\ref{ws1}) can be inverted to give $u_3$ and $u_4$ as 
increasing and decreasing functions, respectively, of the self-similarity variable
$x/t$ in the interval $\mu_3(a, b, \xi(a,b), \xi(a,b)) < x/t < \mu_3(a, b, u^{**}, c)$,
where $u^{**}$ is given in Lemma \ref{u**L}.
\end{lemma}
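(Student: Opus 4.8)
The plan is to combine the structural facts already established into a clean application of the implicit function theorem along the whole interval. First I would recall the setup: by Lemma~3.4 we know equation~(\ref{m34}) has already been solved to give $u_4=A(u_3)$ as a decreasing function on the full range $c<u_4<\xi(a,b)<u_3<b$, and that this is possible precisely because (\ref{dia}), (\ref{dia2}) hold along that branch, with $u_3$ stopping exactly when $u_4$ reaches $c$, i.e. at $u_3=u^{**}$. Substituting $u_4=A(u_3)$ into the third equation of (\ref{ws1}) reduces the problem to studying the single scalar function
\begin{equation*}
g(u_3) := \mu_3\bigl(a,b,u_3,A(u_3)\bigr)
\end{equation*}
and showing it is strictly increasing in $u_3$ on $(\xi(a,b),u^{**})$, with $g(\xi(a,b))=\mu_3(a,b,\xi(a,b),\xi(a,b))$ and $g(u^{**})=\mu_3(a,b,u^{**},c)$ by continuity.

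The key computation is the derivative
\begin{equation*}
g'(u_3) = \frac{\pd \mu_3}{\pd u_3} + \frac{\pd \mu_3}{\pd u_4}\,A'(u_3).
\end{equation*}
Here the second term vanishes identically on the branch: by (\ref{dia}) we have $\pd\mu_3/\pd u_4=0$ on the solution of (\ref{ws1}) (this is the content of the first half of (\ref{dia}), proved via (\ref{a23}) and $\mu_3-\mu_4=0$). So $g'(u_3)=\pd\mu_3/\pd u_3$, which by the first inequality in (\ref{dia2}) is strictly positive along the whole extension — and (\ref{dia2}) in turn rests on Lemma~2.2 together with the inequalities (\ref{p>0}), which were shown to persist on the entire extension $c<u_4<\xi(a,b)<u_3<b$. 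Hence $g$ is a strictly increasing $C^1$ bijection from $(\xi(a,b),u^{**})$ onto $\bigl(\mu_3(a,b,\xi(a,b),\xi(a,b)),\,\mu_3(a,b,u^{**},c)\bigr)$, so for each $x/t$ in that interval there is a unique $u_3$, hence a unique $u_4=A(u_3)$; $u_3$ is an increasing and $u_4$ a decreasing function of $x/t$, which is exactly the assertion. Finally one checks the fourth equation of (\ref{ws1}) is automatically satisfied: since $\mu_3=\mu_4$ on the branch, $x/t=\mu_3=\mu_4$, so no new constraint arises.

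The main obstacle is not any single step but making sure the chain of prerequisites is invoked in the right order and that the inversion is global rather than merely local. Lemma~3.4 already did the hard analytic work of extending $A(u_3)$ all the way to $u_3=u^{**}$ (using Lemma~\ref{u**L} to identify the endpoint and (\ref{u30}) to rule out $u_3$ reaching $b$ first); what remains here is essentially bookkeeping: confirm $\pd\mu_3/\pd u_4=0$ and $\pd\mu_3/\pd u_3>0$ hold not just near the trailing edge but on the whole branch — both of which follow from facts already stated — and then assemble these into the monotone-bijection conclusion. One should also note that strict monotonicity of $g$ together with continuity at the endpoints gives that the image interval is exactly the stated open interval $\bigl(\mu_3(a,b,\xi(a,b),\xi(a,b)),\mu_3(a,b,u^{**},c)\bigr)$, matching the range in part~(2) of Theorem~\ref{thm1}, so no gap or overlap with the adjacent regions appears.
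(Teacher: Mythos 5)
Your proposal is correct and follows essentially the same route as the paper: the paper also inverts the third equation of (\ref{ws1}) along the branch $u_4=A(u_3)$ by invoking (\ref{dia}) to kill the cross-derivative and (\ref{dia2}) for $\pd\mu_3/\pd u_3>0$, and then obtains the monotonicity of $u_4$ in $x/t$ by composing the decreasing function $A$ with the increasing $u_3(x/t)$, with the endpoints fixed by the trailing edge and Lemma \ref{u**L}. Your write-up merely makes the chain-rule computation and the bijection statement explicit (and note the extension of $A$ to the full range is established in the text preceding Lemma \ref{u**L}, not in that lemma itself), so no gap remains.
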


We now turn to equations (\ref{ws2}). We want to solve the third equation
when $x/t > \mu_3(a,b,u^{**}, c)$ or equivalently when $u_3 > u^{**}$. 
According to Lemma \ref{u**L},
$\mu_3(a, b, u_3, c) - \mu_4(a, b, u_3, c) > 0$ for $u^{**} < u_3 < b$.
In view of (\ref{p>0}), 
${\pd_{u_3} q}(a, b, u_3, c) = (a + b + 3u_3 + c)/4$ is positive at $u_3 =u^{**}$
and hence, it remains positive for $u_3 > u^{**}$. 
By (\ref{ine1}), we have
$${\pd \mu_3(a, b, u_3, c) \over \pd u_3} > 0 \ .$$
Hence, the third equation of (\ref{ws2}) can be solved for $u_3$ as an increasing
function of $x/t$ as long as $u^{**} < u_3 < b$. When $u_3$ reaches $b$, we have
$x/t = \mu_3(a, b, b, c)$.
We have therefore proved the following result.

\begin{lemma}
The third equation of (\ref{ws2}) can be inverted to give $u_3$ as an increasing
function of $x/t$ in the interval $\mu_3(a,b,u^{**}, c) \leq x/t \leq 
\mu_3(a, b, b, c)$.
\end{lemma}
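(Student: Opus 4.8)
The plan is to invert the third relation of (\ref{ws2}), namely $x/t = \mu_3(a,b,u_3,c)$, for $u_3$ as a function of the self-similarity variable; the other three relations in (\ref{ws2}) hold automatically since $u_1\equiv a$, $u_2\equiv b$, $u_4\equiv c$ are constants. Thus the statement reduces to showing that the single-variable function $u_3\mapsto\mu_3(a,b,u_3,c)$ is \emph{strictly increasing} on the closed interval $[u^{**},b]$: then it is a bijection onto the interval with endpoints $\mu_3(a,b,u^{**},c)$ and $\mu_3(a,b,b,c)$, and its inverse $u_3=u_3(x/t)$ is an increasing function there, which is exactly the assertion.

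The engine for the monotonicity is inequality (\ref{ine1}): on $u_4<u_3<u_2<u_1$ one has $\pd\mu_3/\pd u_3>\frac{3}{2}(\mu_3-\mu_4)/(u_3-u_4)$ provided $\pd q/\pd u_3>0$. So I need two ingredients along the segment $\{u_1=a,\ u_2=b,\ u_4=c,\ u^{**}\le u_3\le b\}$. First, positivity of $\mu_3-\mu_4$ for $u^{**}<u_3<b$: this is precisely Lemma~\ref{u**L}. Second, positivity of $\pd q/\pd u_3(a,b,u_3,c)$: by the explicit quadratic $q$ of (\ref{q}) this equals $(a+b+3u_3+c)/4$, an increasing function of $u_3$, so it suffices to check it at the left endpoint $u_3=u^{**}$. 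There the inequalities (\ref{p>0}), established along the trailing-edge extension $u_4=A(u_3)$ as $u_4$ decreases from $\xi(a,b)$ down to $c$ (where $u_3$ reaches $u^{**}$), pass to the limit by continuity of the polynomial $q$, giving $\pd q/\pd u_4(a,b,u^{**},c)\ge 0$; combining this with the fact that $\pd q/\pd u_3-\pd q/\pd u_4$ is a positive multiple of $u_3-u_4>0$ (again from (\ref{q})) yields $\pd q/\pd u_3(a,b,u^{**},c)>0$ strictly, hence $\pd q/\pd u_3(a,b,u_3,c)>0$ for every $u_3\ge u^{**}$.

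With both ingredients in hand, (\ref{ine1}) gives $\pd\mu_3(a,b,u_3,c)/\pd u_3>0$ for $u^{**}<u_3<b$; by continuity $\mu_3(a,b,\cdot,c)$ is then strictly increasing on all of $[u^{**},b]$, so it inverts, and the inverse is increasing in $x/t$ as claimed. (At $u_3=b$ the value is $x/t=\mu_3(a,b,b,c)$ with $u_2=u_3$, matching the leading edge; at $u_3=u^{**}$ the value is $x/t=\mu_3(a,b,u^{**},c)$, matching the right endpoint of the region~(2) construction.) The only genuinely delicate point is the second ingredient — transporting the positivity of $\pd q/\pd u_3$ from the Type~II extension, where it was known only on the open range $c<u_4<\xi(a,b)$, down to the junction $u_4=c$; once that single endpoint value is pinned down, monotonicity of $(a+b+3u_3+c)/4$ in $u_3$ does the rest for free, and the heavy lifting — the genuine-nonlinearity-type bound (\ref{ine1}) — has already been carried out in Section~2.
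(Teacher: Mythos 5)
Your proposal is correct and follows essentially the same route as the paper: positivity of $\mu_3-\mu_4$ on $u^{**}<u_3<b$ from Lemma \ref{u**L}, positivity of $\pd q/\pd u_3(a,b,u_3,c)=(a+b+3u_3+c)/4$ propagated from $u_3=u^{**}$ via (\ref{p>0}) and its monotonicity in $u_3$, and then inequality (\ref{ine1}) to conclude $\pd \mu_3/\pd u_3>0$ and invert. Your extra step pinning down the endpoint value at $u_4=c$ through $\pd q/\pd u_4\ge 0$ and the identity $\pd q/\pd u_3-\pd q/\pd u_4=\tfrac12(u_3-u_4)$ merely makes explicit a continuity point the paper leaves implicit.
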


We are ready to conclude the proof of Theorem 3.1.
 The solutions (\ref{bs1}) and (\ref{bs2}) are obvious.
According to Lemma 3.5, the last two equations of (\ref{ws1}) determine $u_3$ 
and $u_4$ as functions of 
$x/t$ in the region $\mu_3(a,b, \xi(a,b), \xi(a,b)) \leq x/t 
\leq \mu_3(a,b,u^{**}, c)$. By the first part of Lemma 3.2, the
resulting $u_1$, $u_2$, $u_3$ and $u_4$ satisfy the Whitham equations (\ref{mkdvw}) with $g=1$.
Furthermore, the boundary conditions (\ref{tr1}) and (\ref{tr2}) are satisfied
at the trailing edge $x = \mu_3(a, b, \xi(a,b), \xi(a,b))$.

Similarly, by Lemma 3.6, the third equation of (\ref{ws2}) determines $u_3$ 
as a function of $x/t$ in the region $\mu_3(a,b,u^{**}, c) \leq x/t 
\leq \mu_3(a, b, b, c)$. It then follows from
the second part of Lemma 3.2 that $u_1$, $u_2$, $u_3$ and $u_4$ of (\ref{ws2}) satisfy 
the Whitham equations (\ref{mkdvw}) for $g=1$.
They also satisfy the boundary conditions (\ref{le1}) and (\ref{le2}) at the
leading edge $x/t = \mu_3(a, b, b, c)$. 
We have therefore completed the proof of Theorem 3.1.  

A graph of the Whitham solution $(u_1,u_2,u_3,u_4)$ is given in Figure \ref{fig1}. It is obtained by
plotting  the exact solutions of (\ref{ws1}) and (\ref{ws2}).

\subsection{Type I}    
Here we consider
the initial function (\ref{step1}) satisfying $\xi(a,b) \leq
c$ with $b > c $ and $a + 5b > 0$.

\begin{figure}[h] 
\begin{center}
\includegraphics[width=12cm]{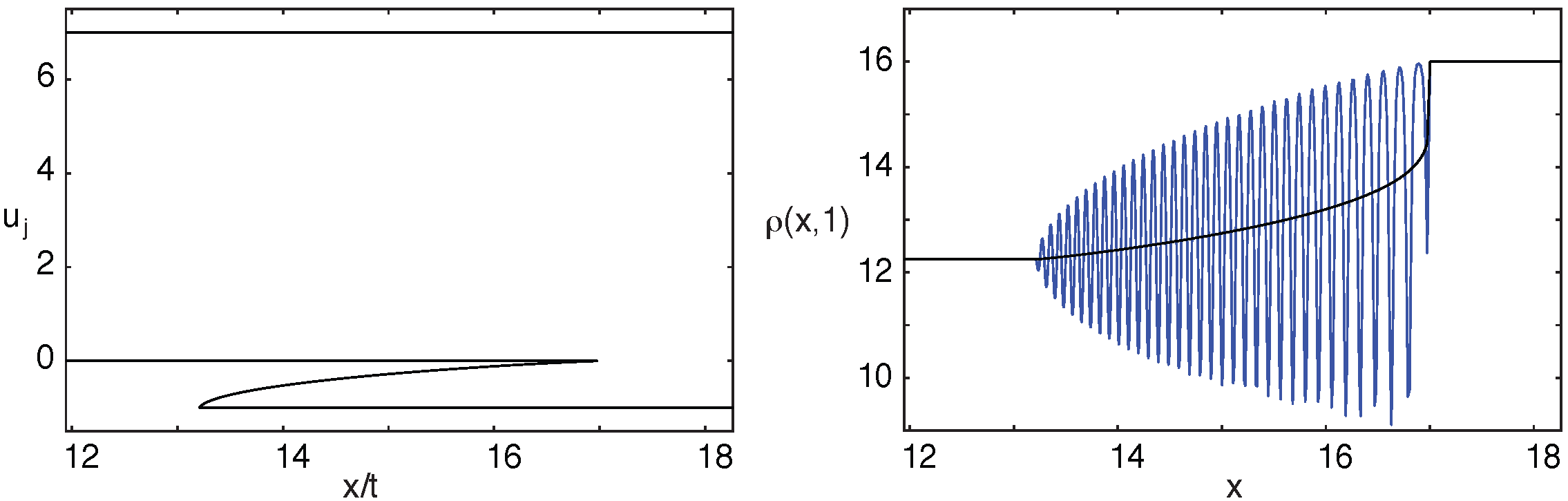}
\caption{\label{fig2} Self-Similar solution of the Whitham equations (\ref{mkdvw}) with $g=1$ and the
  corresponding oscillatory solution (\ref{periodicsolution2}) 
of the mKdV equation with
  $\epsilon = 0.07$.  The initial data is
  given by (\ref{step1}) with
  $a=7$, $b= 0$ and $c=-1$ of type I.}
\end{center}\end{figure}

\medskip

We will only present our proofs briefly, since they are, more or less, similar to those 
in Section 3.1. The main feature of this case is that the $\xi$-zero point does not appear
in the solution $u_3$, and the Whitham equations (\ref{mkdvw}) with $g=1$ are 
strictly hyperbolic on the solution.

\begin{theorem}(see Figure \ref{fig2}.)
For the step-like initial data (\ref{step1}) with $a > b > c$, $a + 5b > 0$ and $\xi(a, b) \leq c$,
the solution of the
Whitham equations (\ref{mkdvw}) with $g=1$ is given by 
\begin{equation*}
u_1 =a \ , \quad u_2 = b \ , \quad \frac{x}{t} = \mu_3(a, b, u_3, c) \ , \quad u_4 = c
\end{equation*}
for $\mu_3(a, b, c, c) < x/t < \mu_3(a,b, b, c)$.
Outside this interval, the solution of
(\ref{mkdv0}) is given by 
\begin{equation*}
\alpha = a \ , \quad \beta= b ~~ \quad \mbox{for}\quad\frac{x}{t} \leq \mu_3(a, b, c, c)\,,
\end{equation*}
and
\begin{equation*}
\alpha = a \ , \quad \beta = c ~~  \quad \mbox{for}\quad\frac{x}{t} \geq \mu_3(a,b,b, c)\ .
\end{equation*}
\end{theorem}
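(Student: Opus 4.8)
The plan is to follow the template of the proof of Theorem 3.1 (Type II), but with the simplification that the trailing edge now sits at $u_3=u_4=c$ rather than at the $\xi$-zero, so the intermediate region governed by (\ref{ws1}) collapses and only the single relation $x/t=\mu_3(a,b,u_3,c)$ with $u_4=c$ survives. First I would dispose of the trivial parts: the constant states $\alpha=a,\ \beta=b$ and $\alpha=a,\ \beta=c$ obviously solve the zero-phase system (\ref{mkdv0}), and by the second part of Lemma 3.2 the configuration $u_1=a,\ u_2=b,\ x/t=\mu_3(a,b,u_3,c),\ u_4=c$ solves the single-phase Whitham equations (\ref{mkdvw}) with $g=1$ provided $u_3$ can be solved for as a function of $x/t$. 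So the entire content is to show that the third equation $x/t=\mu_3(a,b,u_3,c)$ is invertible for $u_3$ increasing across the interval $c<u_3<b$, and that the edges match the zero-phase solution correctly.

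The invertibility reduces to showing $\pd\mu_3(a,b,u_3,c)/\pd u_3>0$ for $c<u_3<b$. By inequality (\ref{ine1}) it suffices to verify that $\pd q(a,b,u_3,c)/\pd u_3=(a+b+3u_3+c)/4>0$ on this range, together with $\mu_3(a,b,u_3,c)-\mu_4(a,b,u_3,c)>0$ there (so that (\ref{ine1}) gives a strict lower bound). For the first: at $u_3=c$ we get $(a+b+4c)/4$, and the hypothesis $\xi(a,b)\le c$ combined with (\ref{xx}), which says $\xi(a,b)>-(a+b)/4$, forces $c>-(a+b)/4$, hence $a+b+4c>0$; since the expression is increasing in $u_3$ it stays positive up to $u_3=b$. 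For the second: we use formulae (\ref{M}), (\ref{pM}) exactly as in Lemma \ref{u**L}, noting that $M(a,b,c,c)=\pi U(a,b,c)/(128[(b-c)(a-c)]^{5/2})\ge 0$ now because $c\ge\xi(a,b)$ makes $U(a,b,c)\ge 0$ (with equality only in the boundary case $\xi(a,b)=c$), and $\pd M/\pd u_3=(a+b+u_3+3c)/4\cdot\pd^3 I/\pd u_3^2\pd u_4>0$; hence $M(a,b,u_3,c)>0$ throughout $c<u_3<b$, so by (\ref{M}) the eigenspeeds stay separated and the Whitham equations are strictly hyperbolic on the solution. This is also where the qualitative difference from Type II appears: the $\xi$-zero, being $\le c$, never enters the admissible range $c<u_3<b$.

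The edge matching is then routine. At $u_3=c$: the relation gives $x/t=\mu_3(a,b,c,c)$, and with $u_4=c$ we have $u_3=u_4$, $(u_1,u_2)=(a,b)$, which matches $\alpha=a,\beta=b$ — consistency follows because at $u_3=u_4$ the eigenspeed $\mu_3$ reduces to the characteristic speed of (\ref{mkdv0}) (via (\ref{bc3}), (\ref{eq19}), (\ref{q})). At $u_3=b$: we get $x/t=\mu_3(a,b,b,c)$, and $u_2=u_3=b$, $u_4=c$, $u_1=a$, which matches $\alpha=a,\beta=c$ — here one uses (\ref{bc4}): at $u_2=u_3$ the speed $\mu_2=\mu_3$ coincides with $V_2=q$, the velocity of the periodic solution, consistent with the zero-phase speed for $(\alpha,\beta)=(a,c)$. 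Finally one checks $\mu_3(a,b,c,c)<\mu_3(a,b,b,c)$ so the interval is nonempty, which is immediate from monotonicity of $\mu_3$ in $u_3$ just established. The main (mild) obstacle is handling the degenerate boundary case $\xi(a,b)=c$, where $M(a,b,c,c)=0$ and one must argue that $M$ becomes strictly positive immediately for $u_3>c$ using $\pd M/\pd u_3>0$; everything else is a direct specialization of Section 3.1.
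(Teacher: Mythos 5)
Your proposal is correct and follows essentially the same route as the paper: positivity of $M(a,b,u_3,c)$ via (\ref{M44}), $U(a,b,c)\ge 0$ and $\pd M/\pd u_3>0$, combined with $\pd q/\pd u_3>0$ (from $\xi(a,b)\le c$ and (\ref{xx})) and Lemma 2.2 to get $\pd\mu_3/\pd u_3>0$ on $c<u_3<b$. The extra edge-matching and degenerate-case remarks you add are fine but the paper treats them as routine.
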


\begin{proof}
It suffices to show that $\mu_3(a, b, u_3, c)$ is an increasing function of $u_3$ for 
$c < u_3 < b$. 
Substituting (\ref{q}) for $q$ into (\ref{pM}) yields
$${\pd M(a, b, u_3, c) \over \pd u_3} = {1 \over 4} [a + b + u_3 + 3c] 
 \ {\pd^3 I \over \pd u_3^2 \pd u_4} \geq {1 \over 4} [a + b + 4 \xi(a,b)] 
\ {\pd^3 I \over \pd u_3^2 \pd u_4} > 0 $$
for $c < u_3 < b$, where we have used $\xi(a, b) \leq c $ in the first inequality
and (\ref{xx}) in the second one.
We now use formula (\ref{M44}) to calculate the value of $M(a,b, u_3, c)$ at $u_3=c$
$$M(a, b, c, c) = {\pi U(a, b, c) \over 128 [(b - c)(a - c)]^{5 \over 2}} 
 \geq 0 \quad \mbox{for $\xi(a,b) \leq c$} \ ,$$
because $U(a,b,\xi) \geq 0$ for $\xi \geq \xi(a,b)$. Therefore, 
$M(a, b, u_3, c) > 0$
for $c < u_3 < b$. It then follows from (\ref{M}) that 
$\mu_3(a, b, u_3, c) - \mu_4(a, b, u_3, c) > 0$.
Since ${\partial q \over \partial u_3}(a, b, u_3, c) = (a + b + 3u_3 + c)/4 >
(a + b + 3 \xi(a,b))/4 > 0$ because of (\ref{xx}), we conclude from Lemma 2.2 that 
$${d \mu_3(a,b, u_3, c) \over d u_3} > 0$$ for $c < u_3 < b$.
\end{proof}

\subsection{Type III} 
Here we consider
the step initial function (\ref{step1}) satisfying $\xi(a,b) > c$
with $a+5b = 3(b-c)> 0$.

\begin{figure}[h] 
\begin{center}
\includegraphics[width=12cm]{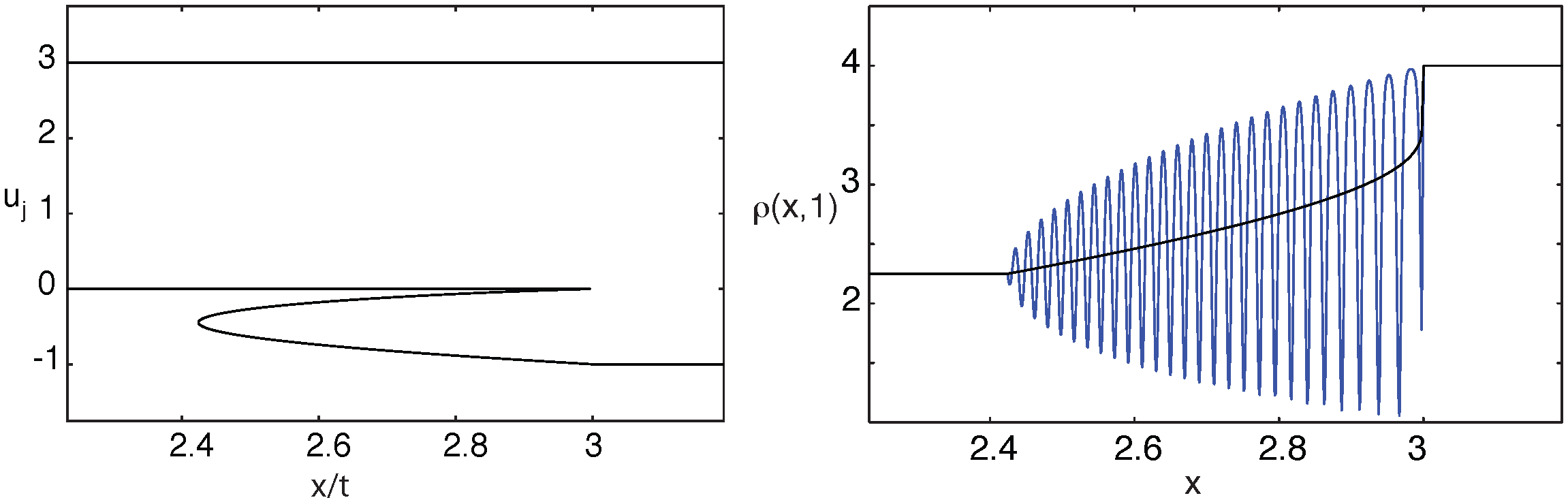}
\caption{\label{fig3} Self-Similar solution of the Whitham equations (\ref{mkdvw})
with $g=1$ and the
  corresponding oscillatory solution (\ref{periodicsolution2}) of 
the mKdV equation with
  $\epsilon = 0.007$.  The initial data is
  given by (\ref{step1}) with
  $a=3$, $b=0$, $c=-1$ of type III.}
\end{center}\end{figure}

\begin{theorem} (see Figure \ref{fig3}.)
For the step-like initial data (\ref{step1}) with $a > b > c $, $\xi(a,b) > c$ and
$a +5b=3(b-c)$,
the solution of the $g=1$ Whitham equations
(\ref{mkdvw}) with $g=1$ is given by
\begin{equation*}
u_1 = a \ , \quad u_2 = b \ , \quad \frac{x}{t} = \mu_3(a, b, u_3, u_4)  \ , 
\quad \frac{x}{t} = \mu_4(a, b, u_3, u_4)\,,
\end{equation*}
for $\mu_3(a, b, \xi(a,b), \xi(a,b)) < x/t < \mu_3(a, b, b, c)$.
Outside the region, the solution of equations (\ref{mkdv0}) is given
by
\begin{equation*}
\alpha = a \ ,\quad \beta = b\,, \quad \mbox{for}\quad\frac{x}{t} \leq 
\mu_3(a, b, \xi(a,b), \xi(a, b))\,,
\end{equation*}
and
\begin{equation*}
\alpha = a \ , \quad \beta = c\,, \quad \mbox{for}\quad\frac{x}{t} \geq \mu_3(a, b, b, c) \ .
\end{equation*}
\end{theorem}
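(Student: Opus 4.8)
The plan is to run the Type~II construction in the proof of Theorem~\ref{thm1} essentially verbatim, the single new feature being that the borderline hypothesis $a+5b=3(b-c)$, equivalently $a+2b+3c=0$, forces the intermediate value $u^{**}$ of Lemma~\ref{u**L} to coincide with $b$. Consequently region~(3) of Theorem~\ref{thm1} (in which $u_4$ is frozen at $c$) shrinks to a single ray, and the whole single-phase zone is governed by the two relations $u_1=a$, $u_2=b$, $x/t=\mu_3(a,b,u_3,u_4)=\mu_4(a,b,u_3,u_4)$, with $u^{**}=b$ simultaneously playing the roles of the right end of region~(2) and of the leading edge. The zero-phase solution outside the zone and the fact that the displayed formulas solve (\ref{mkdvw}) with $g=1$ are immediate from the first part of Lemma~3.2 once $u_3$ and $u_4$ are known as functions of $x/t$.

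First I would locate the trailing edge: eliminating $x,t$ from $x/t=\mu_3=\mu_4$ and passing to $F(a,b,u_3,u_4)=0$ as in (\ref{F1}), formulae (\ref{M}),(\ref{M44}) show that on the diagonal $u_3=u_4$ this reduces to $U(a,b,u_4)=0$, so the trailing edge is at $u_3=u_4=\xi(a,b)$, exactly as in Lemma~3.3. Since $a+b+4\xi(a,b)>0$ by (\ref{xx}), the computation $\partial F/\partial u_3=\partial F/\partial u_4>0$ at the trailing corner via (\ref{pM}),(\ref{pM'}),(\ref{q}) goes through unchanged, and $F=0$ inverts to a decreasing branch $u_4=A(u_3)$ there. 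To continue $A$ while $c<u_4<\xi(a,b)<u_3<b$, I would reprove (\ref{p>0}): writing $\mu_3=\mu_4$ through (\ref{eq19}) as $\tfrac12[\lambda_3-2\sigma_1]\,\partial_{u_3}q=\tfrac12[\lambda_4-2\sigma_1]\,\partial_{u_4}q$ and using that both brackets are strictly negative by (\ref{strict}) while $\partial_{u_3}q-\partial_{u_4}q$ is a positive multiple of $u_3-u_4$ by (\ref{q}), neither $\partial_{u_3}q$ nor $\partial_{u_4}q$ can vanish on the branch; Lemma~2.2 then yields $\partial\mu_3/\partial u_3>0$, $\partial\mu_4/\partial u_4<0$, and together with (\ref{dia}) the implicit function theorem pushes $A$ forward.

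The genuinely new point is the terminal behavior of $A$. On the edge $u_3=b$ we have, by (\ref{u30}) and $a+2b+3c=0$, that $\mu_3(a,b,b,u_4)-\mu_4(a,b,b,u_4)=\tfrac32(b-u_4)(u_4-c)$, which is positive for $c<u_4<b$ and vanishes precisely at $u_4=c$; thus the curve $\mu_3=\mu_4$ can touch the line $u_3=b$ only at the corner $(u_3,u_4)=(b,c)$. On the edge $u_4=c$, (\ref{pM}) and (\ref{q}) give $\partial M(a,b,u_3,c)/\partial u_3=\tfrac14(u_3-b)\,\partial^3I/\partial u_3^2\partial u_4<0$ for $c<u_3<b$, which with $M(a,b,c,c)<0$ (from (\ref{M44}) and $\xi(a,b)>c$) forces $\mu_3(a,b,u_3,c)<\mu_4(a,b,u_3,c)$ there; equivalently the zero $u^{**}$ of Lemma~\ref{u**L} has migrated to $u^{**}=b$, so region~(3) of Theorem~\ref{thm1} is empty. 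Combining these with the monotonicity above, the branch $u_4=A(u_3)$ runs from $(\xi(a,b),\xi(a,b))$ all the way to $(b,c)$. Finally, since $\partial\mu_3/\partial u_4=0$ on $u_4=A(u_3)$ by (\ref{dia}), one has $d(x/t)/du_3=\partial\mu_3/\partial u_3>0$, so $x/t=\mu_3(a,b,u_3,A(u_3))$ inverts to give $u_3$, hence $u_4$, as functions of the self-similarity variable on $\mu_3(a,b,\xi(a,b),\xi(a,b))<x/t<\mu_3(a,b,b,c)$; the matchings (\ref{tr1})--(\ref{tr2}) at $u_3=u_4=\xi(a,b)$ and (\ref{le1})--(\ref{le2}) at $u_2=u_3=b$, $u_4=c$ then follow at once.

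The main obstacle I anticipate is this terminal step, where $A$ is pushed up to the leading corner $(b,c)$: there $u_2=u_3$, the elliptic integral $I$ and hence $\lambda_3-2\sigma_1=-2I/\partial_{u_3}I$ and $\partial q/\partial u_4$ degenerate, and Lemma~2.2 is available only up to the boundary $u_2=u_3$. One must verify by a limiting argument both that $x/t=\mu_3(a,b,u_3,A(u_3))$ stays finite and monotone as $u_3\to b$ and that the matching (\ref{le1})--(\ref{le2}) survives the degeneration; a direct check at $u_2=u_3=b$, $u_4=c$, where $\partial q/\partial u_3=(a+4b+c)/4=(b-c)/2>0$ keeps $\partial\mu_3/\partial u_3$ positive, should close this.
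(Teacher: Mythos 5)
Your proposal is correct and follows essentially the same route as the paper: the published proof simply reruns the Type II construction of Theorem \ref{thm1} and observes, via (\ref{u30}) (i.e.\ the identity (\ref{M44''})), that under $a+2b+3c=0$ the curve $\mu_3=\mu_4$ reaches $u_3=b$ and $u_4=c$ simultaneously, so the intermediate region where $u_4\equiv c$ collapses. Your extra verifications (the sign of $M(a,b,u_3,c)$ on the edge $u_4=c$ showing $u^{**}$ migrates to $b$, and the limiting behavior at the corner $u_2=u_3=b$, $u_4=c$) only make explicit what the paper leaves implicit.
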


\begin{proof}
It suffices to show that $u_3$ and $u_4$ of $\mu_2(a, b, u_3, u_4) - 
\mu_3(a, b, u_3, u_4)=0$
reaches $b$ and $c$, respectively, simultaneously. To see this, we deduce from
the calculation (\ref{u30}) that 
\begin{equation}
\label{M44''}
\mu_3(a, b, b, u_4) - \mu_4(a, b, b, u_4) 
= {1 \over 2} (b - u_4) (a + 2b + 3u_4) \ ,
\end{equation}
vanishes at $u_4 = (-a - 2b)/3 = c$.
\end{proof}

\subsection{Type IV}
Here we consider
the step initial function (\ref{step1}) satisfying $\xi(a,b) > c$ with
$0< a+5b < 3(b-c)$.

\begin{figure}[h] 
\begin{center}
\includegraphics[width=12cm]{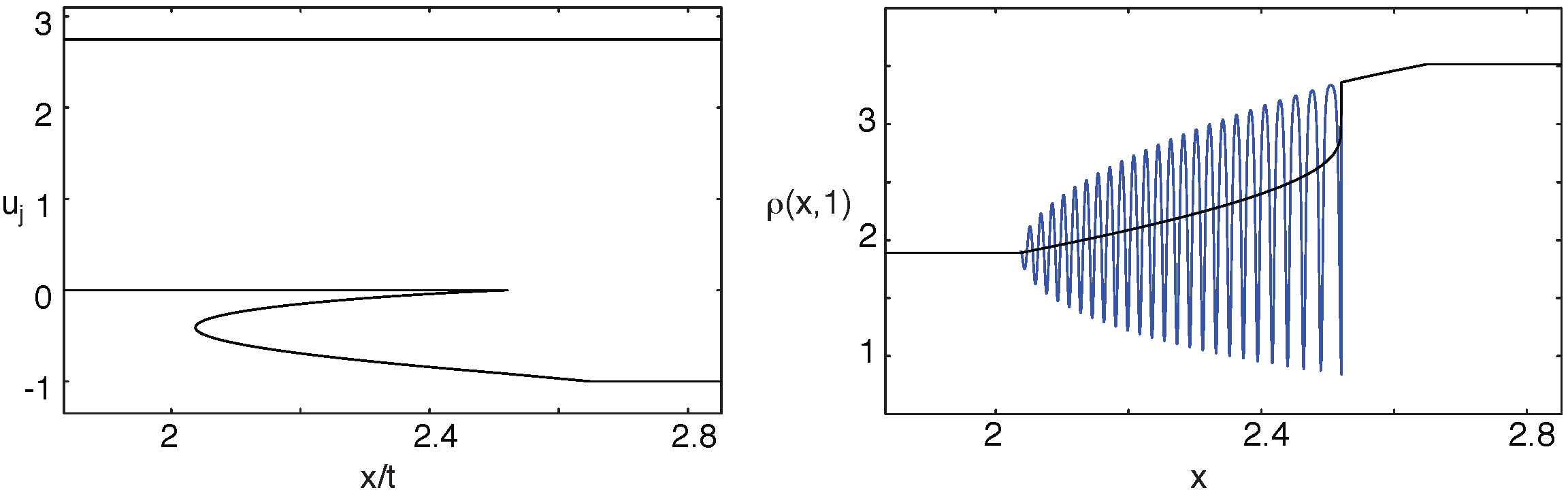}
\caption{\label{fig4} Self-Similar solution of the Whitham equations (\ref{mkdvw}) with $g=1$ and
the corresponding oscillatory solution (\ref{periodicsolution2}) of the mKdV equation with
$\epsilon = 0.006$. The initial data is
given by (\ref{step1}) with
  $a=2.75$, $b=0$ and $c=-1$ of type IV. The solution in the region $2.52<x/t<2.65$ 
  represents a rarefaction wave.}
\end{center}\end{figure}

\begin{theorem}(see Figure \ref{fig4}.)
For the step-like initial data (\ref{step1}) with $a > b > c$, $\xi(a, b) > c$ and 
$0<a+5b < 3(b-c)$,
the solution of the Whitham equations
(\ref{mkdv}) is given by
\begin{equation*}
u_1 = a \ , \quad u_2 = b \ , \quad \frac{x}{t} = \mu_2(a,b,u_3, u_4) \ , \quad 
x = \mu_3(a,b,u_3, u_4) \ t
\end{equation*}
for $\mu_3(a, b, \xi(a,b), \xi(a,b)) < x/t < \mu_3(a, b, b, -(a+2b)/3)$.
Outside the region, the solution of equation (\ref{mkdv0}) 
is divided into the three regions:
\begin{itemize}
\item[(1)] For $ x/t \leq 
\mu_3(a, b, \xi(a,b), \xi(a,b))$,
\begin{equation*}
\alpha =a \ , \quad \beta = b \ .
\end{equation*}
\item[(2)] For $\mu_3(a, b, b, - (a + 2b)/3) \leq \frac{x}{t} \leq \frac{3}{8} \left(
a^2 + 2 ac + 5c^2 \right)$,
\begin{equation*}
  \alpha=a, \quad \beta=
-\frac{1}{5} a - \sqrt{ \frac{8}{15} \frac{x}{t} - \frac{4}{25} a^2 } \ .
\end{equation*}
\item[(3)] For $ x/t \geq \frac{3}{8} \left( a^2 + 2ac + 5c^2 \right)$,
\[
\alpha=a, \quad \beta=c\ . \] 
\end{itemize}
\end{theorem}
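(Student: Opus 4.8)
The plan is to mimic the proofs given for Types II and III, the one genuinely new ingredient being that the single-phase region no longer ends when $u_4$ reaches $c$; instead it ends earlier, through the band collapse $u_2=u_3$ at $u_3=b$, $u_4=-(a+2b)/3$. Since $0<a+5b<3(b-c)$ says exactly that $c<-(a+2b)/3<b$, there remains a residual descent of $\beta$ from $-(a+2b)/3$ down to $c$, which I resolve by a centered rarefaction wave of the dispersionless system (\ref{mkdv0}). So the proof would run in three parts: (a) build the single-phase solution along the curve $\mu_3=\mu_4$ where two of the modulation speeds coalesce; (b) prove that this curve leaves the square $\{c<u_4<u_3<b\}$ through the face $u_3=b$, precisely at $u_4=-(a+2b)/3$, and never meets $u_4=c$; (c) construct the downstream rarefaction and check that it welds onto the leading edge.

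For part (a) I would invoke the first part of Lemma 3.2: any $u_3,u_4$ solved from $x/t=\mu_3(a,b,u_3,u_4)=\mu_4(a,b,u_3,u_4)$ yield a solution of (\ref{mkdvw}) with $g=1$, and eliminating $x/t$ leaves the curve $F(a,b,u_3,u_4)=0$ of (\ref{F1}). As in the trailing-edge analysis of Section 3.1, on $u_3=u_4$ formulas (\ref{M}) and (\ref{M44}) reduce $F=0$ to $U(a,b,u_4)=0$, so $u_3=u_4=\xi(a,b)$; differentiating $F$ there via (\ref{pM}) and (\ref{M44}) and using $\xi(a,b)>-(a+b)/4$ from (\ref{xx}) gives $\partial F/\partial u_3=\partial F/\partial u_4>0$, so $F=0$ inverts near the trailing edge to a strictly decreasing $u_4=A(u_3)$. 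The inequalities (\ref{p>0}) persist on the extension by the sign argument of Section 3.1 (rewrite $\mu_3=\mu_4$ via (\ref{eq19}), use (\ref{strict}) and $\partial_{u_3}q>\partial_{u_4}q$), so Lemma 2.2 gives $\partial\mu_3/\partial u_3>0$, $\partial\mu_4/\partial u_4<0$ on $\{F=0\}$ as in (\ref{dia2}); with (\ref{dia}) this lets $A(u_3)$ be continued as long as $c<u_4<\xi(a,b)<u_3<b$.

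For part (b), the stopping point is forced by the Type IV sign. From (\ref{bc4}), (\ref{eq19}), (\ref{q}) --- as in (\ref{u30}) --- $\mu_3(a,b,b,u_4)-\mu_4(a,b,b,u_4)=\frac12(b-u_4)(a+2b+3u_4)$, which on $(c,b)$ vanishes only at $u_4=-(a+2b)/3$, so $A$ can reach $u_3=b$ and does so with $u_4=-(a+2b)/3$. That $u_4$ cannot hit $c$ first follows from (\ref{pM}): $\partial M(a,b,u_3,c)/\partial u_3=\frac14(a+b+u_3+3c)\,\partial^3 I/\partial u_3^2\partial u_4$ and $a+b+u_3+3c\le a+2b+3c<0$ for $u_3\le b$, so $M(a,b,\cdot,c)$ is strictly decreasing, while $M(a,b,c,c)=\pi U(a,b,c)/\big(128[(b-c)(a-c)]^{5/2}\big)<0$ since $c<\xi(a,b)$; hence $M(a,b,u_3,c)<0$, i.e.\ $\mu_3(a,b,u_3,c)<\mu_4(a,b,u_3,c)$, throughout $(c,b)$, so $\{F=0\}$ stays strictly above $u_4=c$. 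Thus the single-phase region is the arc $A$ from $(\xi(a,b),\xi(a,b))$ to $(b,-(a+2b)/3)$; because $\frac{d}{du_3}\mu_3(a,b,u_3,A(u_3))=\partial_{u_3}\mu_3>0$ (using $\partial\mu_3/\partial u_4=0$ on $\{F=0\}$ by (\ref{dia})), inverting $x/t=\mu_3(a,b,u_3,A(u_3))$ produces $u_3$ increasing and $u_4$ decreasing on $\big(\mu_3(a,b,\xi(a,b),\xi(a,b)),\,\mu_3(a,b,b,-(a+2b)/3)\big)$; at $u_3=u_4=\xi(a,b)$ it matches $(\alpha,\beta)=(a,b)$, and $u_3=b$ (where $u_2=u_3$, so $\mu_2=\mu_3$) is the leading edge.

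For part (c), past the leading edge I solve the scalar law $\beta_t+\frac38(a^2+2a\beta+5\beta^2)\beta_x=0$ (the $\beta$-component of (\ref{mkdv0}) at $\alpha\equiv a$) with left state $-(a+2b)/3$ and right state $c$. Since $a+5b>0$ gives $-(a+2b)/3<-a/5$ and $c<-(a+2b)/3$, the quantity $a+5\beta$ --- which up to the factor $\frac34$ is $\partial_\beta$ of the eigenspeed $\lambda=\frac38(a^2+2a\beta+5\beta^2)$ --- is strictly negative on $[c,-(a+2b)/3]$, so $\lambda$ is strictly decreasing there and the admissible solution is the centered rarefaction $x/t=\lambda(\beta)$; solving $5\beta^2+2a\beta+a^2-\frac83\frac xt=0$ on the branch $\beta<-a/5$ gives $\beta=-\frac15 a-\sqrt{\frac8{15}\frac xt-\frac4{25}a^2}$ on $\big[\lambda(-(a+2b)/3),\,\frac38(a^2+2ac+5c^2)\big]$, with $\beta\equiv c$ beyond. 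Finally the two endpoints glue: $\lambda(c)=\frac38(a^2+2ac+5c^2)$ is immediate, and at $u_2=u_3=b$ formula (\ref{bc4}) gives $\lambda_3-2\sigma_1=0$, so (\ref{eq19}) collapses $\mu_3(a,b,b,-(a+2b)/3)$ to $q=\frac38\sigma_1^2-\frac12\sigma_2$ with $\sigma_1=\frac23(a+2b)$, which a short computation identifies with $\lambda(-(a+2b)/3)$. The hardest part will be the bookkeeping of part (b): proving the \emph{global} sign $M(a,b,u_3,c)<0$ on all of $(c,b)$, so that the curve $\mu_3=\mu_4$ is forced out through $u_3=b$ rather than $u_4=c$, together with the compatibility identity $\mu_3(a,b,b,-(a+2b)/3)=\lambda(-(a+2b)/3)$ that welds the dispersive piece to the rarefaction.
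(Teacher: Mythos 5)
Your proposal is correct and follows essentially the same route as the paper: it reuses the Section 3.1 machinery (trailing edge at $u_3=u_4=\xi(a,b)$, extension of the curve $\mu_3=\mu_4$ via (\ref{dia}), (\ref{dia2}), (\ref{p>0})), identifies the exit point $u_3=b$, $u_4=-(a+2b)/3$ from the calculation (\ref{M44''}), and closes with the centered rarefaction of (\ref{mkdv0}) in region (2). The only differences are that you make explicit two points the paper leaves implicit --- the sign argument $M(a,b,u_3,c)<0$ ruling out exit through $u_4=c$, and the matching identity $\mu_3(a,b,b,-(a+2b)/3)=\tfrac38\bigl(a^2+2a\beta_L+5\beta_L^2\bigr)$ with $\beta_L=-(a+2b)/3$ --- both of which are correct and strengthen the argument.
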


\begin{proof}
By the calculation (\ref{M44''}), when $u_3$ of $\mu_3(a, b, u_3, u_4) - 
\mu_4(a, b, u_3, u_4)=0$
touches $b$, the corresponding $u_4$ reaches $-(a + 2b)/3$, which is above $c$. 
Hence, equations 
$$\frac{x}{t} = \mu_3(a, b, u_3, u_4)  \ , \quad \frac{x}{t} = \mu_4(a, b, u_3, u_4)$$
can be inverted to give $u_3$ and $u_4$ as functions of $x/t$ in the region 
$\mu_3(a, b, \xi(a,b), \xi(a,b)) < x/t < \mu_3(a, b, b, - (a + 2b)/3)$. 
In the region (2), the
equations (\ref{mkdv0}) has a rarefaction wave solution.
\end{proof}

\section{More Self-Similar Solutions}

In this section, we construct self-similar solutions of the $g=1$ Whitham equations
(\ref{mkdvw}) for the initial function (\ref{step1}) with $a > c > b$.
The solution of equations (\ref{mkdv0}) does not develop a shock for 
$a + 5b \geq 0$.
We are therefore only interested in the case $a + 5b < 0 $. We classify the resulting
Whitham solution into four types:
\begin{itemize}
\item[(V)]  $ ~ a + 5b\le -4(c-b)<0$
\item[(VI)] $ ~ 0>a + 5b > -4(c-b)$ with $V(a, c, b) < 0$
\item[(VII)] $ ~ 0>a +5b>-4(c-b)$ with $ V(a, c, b) = 0$
\item[(VIII)] $  ~0>a + 5b>-4(c-b)$ with $ V(a, c, b) > 0 $
\end{itemize}
where $V$ is a quadratic polynomial given by (\ref{pV}).

\subsection{Type V}
Here we consider
the step initial function (\ref{step1}) satisfying $a + 5b \leq
-4(c-b) < 0$.

\begin{figure}[h] 
\begin{center}
\includegraphics[width=12cm]{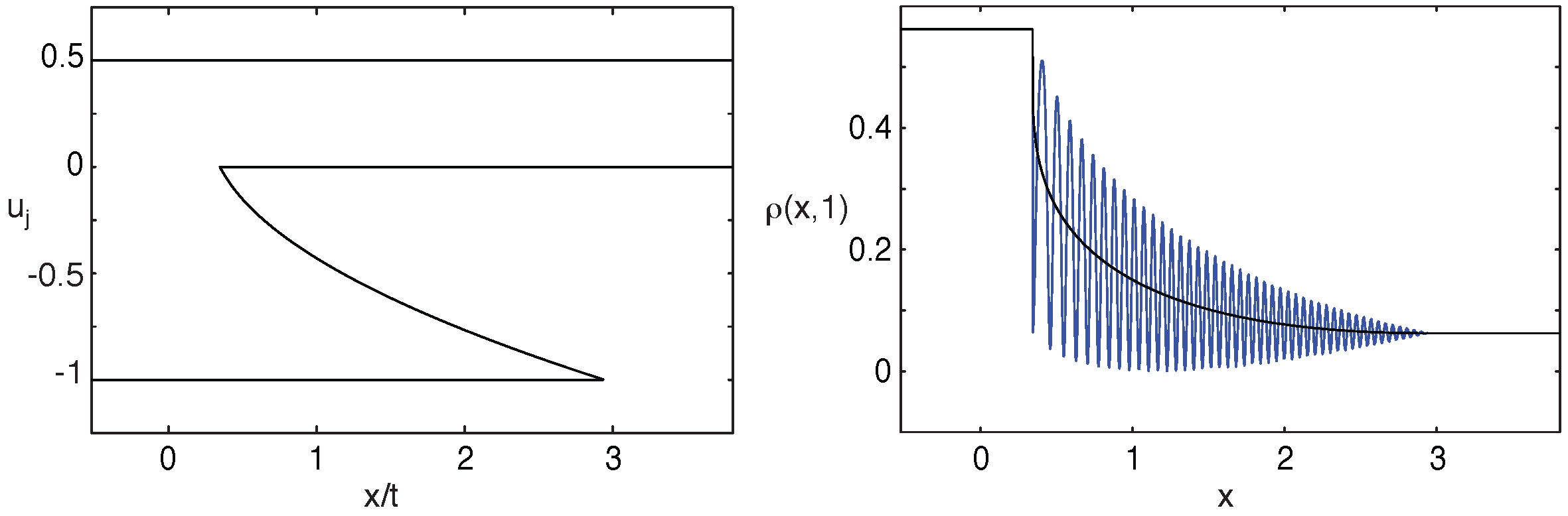}
\caption{\label{fig5} Self-Similar solution of the Whitham equations (\ref{mkdvw}) with $g=1$ and
the corresponding oscillatory solution (\ref{periodicsolution2}) of the mKdV equation with
$\epsilon = 0.018$.  The initial data is
given by (\ref{step1}) with
  $a= 1/2$, $b=-1$ and $c=0$ of type V.}
\end{center}\end{figure}

\begin{theorem}(see Figure \ref{fig5}.)
For the step-like initial data (\ref{step1}) with $a > c > b$,  $ a+5b\le -4(c-b)$, 
the solution of the
Whitham equations (\ref{mkdvw}) with $g=1$ is given by
\begin{equation*}
u_1 = a \ , \quad u_2 = c \ , \quad \frac{x}{t} = \mu_3(a, c, u_3, b) \ , \quad u_4 = b
\end{equation*}
for $\mu_3(a, c, c, b) < x/t < \mu_3(a, c, b, b)$. 
Outside this interval, the solution of
(\ref{mkdv0}) is given by
\begin{equation*}
\alpha = a \, \quad \beta= b  \quad \mbox{for}\quad\frac{x}{t} \leq \mu_3(a, c, c, b)\,,
\end{equation*}
and
\begin{equation*}
\alpha = a \, \quad \beta = c  \quad \mbox{for}\quad\frac{x}{t}\geq \mu_3(a, c, b, b) \ .
\end{equation*}
\end{theorem}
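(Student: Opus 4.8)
The plan is to mimic the proof of Theorem~3.1 (the Type~II case) but to show that in Type~V the trailing and leading structure degenerates so that $u_4$ simply sticks at the value $b$ throughout the oscillatory region, leaving a single scalar equation $x/t=\mu_3(a,c,u_3,b)$ to invert. First I would relabel: since $a>c>b$ and $\beta$ jumps from $b$ (for $x<0$) down to $c$ (for $x>0$) is \emph{not} the situation — rather $\beta(x,0)=b$ for $x<0$ and $c$ for $x>0$ with $b<c$, so the scalar equation $\beta_t+\tfrac38(a^2+2a\beta+5\beta^2)\beta_x=0$ has increasing data and, because $a+5b<0$ makes the eigenspeed decreasing near $\beta=b$, a shock forms; this is why we pass to the $g=1$ Whitham region with $u_1=a$, $u_2=c$, $u_3$ varying in $(b,c)$, $u_4=b$. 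I would then invoke the second part of Lemma~3.2 (with $(a,b)$ there replaced by $(a,c)$ and the frozen fourth invariant replaced by $b$): once $u_3$ is solved from $x/t=\mu_3(a,c,u_3,b)$ as a function of the self-similarity variable, the quadruple $(a,c,u_3,b)$ automatically satisfies the single-phase Whitham equations. So the entire content of the proof reduces to showing that $\mu_3(a,c,u_3,b)$ is a strictly increasing function of $u_3$ on the interval $b<u_3<c$, which guarantees invertibility by the implicit function theorem and gives the stated range $\mu_3(a,c,c,b)<x/t<\mu_3(a,c,b,b)$ for the oscillatory zone, with the constant states of \eqref{mkdv0} filling in outside.

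The monotonicity argument I would run in exactly the way Theorem~3.7 (Type~I) is proved. Using Lemma~2.2, inequality \eqref{ine1}, it suffices to check two things on $b<u_3<c$: that $\partial q/\partial u_3(a,c,u_3,b)>0$, and that $\mu_3(a,c,u_3,b)-\mu_4(a,c,u_3,b)>0$ (equivalently $M(a,c,u_3,b)>0$, via \eqref{M}, since $\partial_{u_3}I,\partial_{u_4}I$ are positive). For the first, \eqref{q} gives $\partial q/\partial u_3(a,c,u_3,b)=(a+c+3u_3+b)/4$; this is increasing in $u_3$, so it is enough to have it nonnegative at $u_3=b$, i.e.\ $a+c+4b\ge 0$, which I expect to follow from the hypothesis $a+5b\le -4(c-b)$ rearranged as $a+c+4b\le 0$ — here I must be careful about the sign, and the likely resolution is that the relevant inequality is the \emph{reverse}, i.e.\ in this regime $\partial q/\partial u_3$ is in fact negative at the left endpoint and one instead uses inequality \eqref{ine2} together with $\partial q/\partial u_4>0$; I would sort out which of \eqref{ine1}/\eqref{ine2} applies by evaluating $\partial q/\partial u_4(a,c,u_3,b)=(a+c+u_3+3b)/4$ and comparing. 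For the positivity of $M$, I would use \eqref{M44} to get the boundary value at $u_3=b$, namely $M(a,c,b,b)=\pi U(a,c,b)/(128[(c-b)(a-b)]^{5/2})$, sign determined by $U(a,c,b)$ relative to the root $\xi(a,c)$; the hypothesis $a+5b\le -4(c-b)$ should force $b\le \xi(a,c)$ (or the opposite, again a sign check), making $U(a,c,b)$ have a definite sign, and then use \eqref{pM} together with the hypothesis to propagate that sign across the whole interval — this is the same ``boundary value plus monotone derivative'' structure as in the Type~I and Lemma~3.4 arguments.

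The main obstacle, as flagged above, is getting the inequality bookkeeping exactly right: the condition $a+5b\le -4(c-b)<0$ with $a>c>b$ must be translated into the correct statements about the signs of $\partial q/\partial u_3$, $\partial q/\partial u_4$, $U(a,c,b)$ versus its root $\xi(a,c)$, and $a+c+u_3+3b$ on $(b,c)$, so that the right one of \eqref{ine1}–\eqref{ine2} is invoked and the right one of \eqref{M}, \eqref{pM}, \eqref{M44} gives the needed positivity. I expect that, once the correct translation is in hand, the argument is a verbatim adaptation of the Type~I proof: establish $M(a,c,u_3,b)>0$ on $(b,c)$ from its sign at an endpoint plus the monotonicity of $M$ in $u_3$ coming from \eqref{pM} and the standing inequality $\partial^3 I/\partial u_3^2\partial u_4>0$, deduce $\mu_3-\mu_4>0$ from \eqref{M}, then conclude $d\mu_3(a,c,u_3,b)/du_3>0$ from Lemma~2.2. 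The constant states outside the oscillatory interval are immediate, and the matching $u_3\to c$ at the trailing edge ($u_2=u_3$, giving $x/t=\mu_3(a,c,c,b)$) and $u_3\to b$ at the leading edge ($u_3=u_4$, giving $x/t=\mu_3(a,c,b,b)$) follows from continuity, completing the proof.
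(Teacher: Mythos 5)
Your plan has the monotonicity backwards, and this is not just a bookkeeping slip that fixes itself: under the Type~V hypothesis the quantity you want to be positive is negative on the \emph{whole} interval, so the Type~I machinery you propose to reuse cannot be invoked at all. The hypothesis $a+5b\le -4(c-b)$ rearranges to $a+b+4c\le 0$ (not $a+c+4b\le 0$), so for $b<u_3<c$ one has
\begin{equation*}
\frac{\partial q}{\partial u_3}(a,c,u_3,b)=\frac{1}{4}\,(a+c+3u_3+b)<\frac{1}{4}\,(a+b+4c)\le 0 ,
\end{equation*}
i.e.\ $\partial q/\partial u_3<0$ everywhere on $(b,c)$. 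Consequently inequality (\ref{ine1}) of Lemma~2.2 is never available, and the fallback you suggest --- using (\ref{ine2}) with $\partial q/\partial u_4>0$ --- controls $\partial\mu_4/\partial u_4$, not $\partial\mu_3/\partial u_3$, so it cannot give the monotonicity of $\mu_3(a,c,u_3,b)$ either. Moreover, the correct conclusion is that $\mu_3(a,c,u_3,b)$ is \emph{decreasing} in $u_3$ on $(b,c)$: this is forced by the stated interval, since $u_3=c$ (where $u_2=u_3$) must correspond to the left endpoint $x/t=\mu_3(a,c,c,b)$ and $u_3=b$ (where $u_3=u_4$) to the right endpoint $x/t=\mu_3(a,c,b,b)$, which is only consistent if larger $u_3$ gives smaller $\mu_3$. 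Your stated goal of proving that $\mu_3(a,c,u_3,b)$ is strictly increasing is therefore false in this regime, and the whole "establish $M(a,c,u_3,b)>0$ via (\ref{M44}), (\ref{pM}), then apply Lemma~2.2" program collapses.

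The paper's proof is a short direct computation that bypasses $M$, $U$, $\xi$ and Lemma~2.2 entirely: from (\ref{eq19}),
\begin{equation*}
\frac{\partial \mu_3(a,c,u_3,b)}{\partial u_3}
=\frac{1}{2}\,\frac{\partial\lambda_3}{\partial u_3}\,\frac{\partial q}{\partial u_3}
+\frac{1}{2}\,\bigl[\lambda_3-2(a+c+u_3+b)\bigr]\,\frac{\partial^2 q}{\partial u_3^2},
\end{equation*}
where the first term is negative because $\partial\lambda_3/\partial u_3>0$ by the genuine nonlinearity (\ref{eq11}) of the NLS eigenspeeds while $\partial q/\partial u_3<0$ as computed above, and the second term is negative because $\lambda_3-2\sigma_1<0$ by (\ref{strict}) and $\partial^2 q/\partial u_3^2=3/8>0$. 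Hence $\mu_3(a,c,u_3,b)$ is strictly decreasing, $u_3$ is obtained as a decreasing function of $x/t$ on $\mu_3(a,c,c,b)<x/t<\mu_3(a,c,b,b)$, and the rest of your outline (Lemma~3.2 with the roles of the frozen invariants adjusted, the constant states outside, and the matching $u_2=u_3$ at the left edge and $u_3=u_4$ at the right edge) goes through. So the structural frame of your proposal is right, but the central analytic step must be replaced by this sign argument; as written, the proposal does not prove the theorem.
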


\begin{proof}
It suffices to show that $\mu_3(a, c, u_3, b)$ is a decreasing function of $u_3$ for
$b < u_3 < c$. By (\ref{eq19}),
we have
$${\pd \mu_3(a, c, u_3, b) \over \pd u_3} = {1 \over 2} {\pd \lambda_3 \over \pd u_3}
{\pd q \over \pd u_3} + {1 \over 2} [\lambda_3 - 2(a + c + u_3 + b)] {\pd^2 q \over 
\pd u_3^2} \ .$$
The second term is negative because of (\ref{strict}) and ${\pd^2 q \over \pd u_2^2} = 3/8 > 0$.
The first term is also negative: Its first factor is positive in view
of (\ref{eq11}); while its 
 second factor is
$${\pd q \over \pd u_3} = {1 \over 4}(a + c + 3u_3 + b) < 0,$$
for $b < u_3 < c$, as we have that $a + b + 4c \leq 0$.
\end{proof}

\subsection{Type VI}
Here we consider
the step initial function (\ref{step1}) satisfying $0 > a + 5b >
-4(c-b)$ with $V(a, c, b) < 0 $.

\begin{figure}[h] 
\begin{center}
\includegraphics[width=12cm]{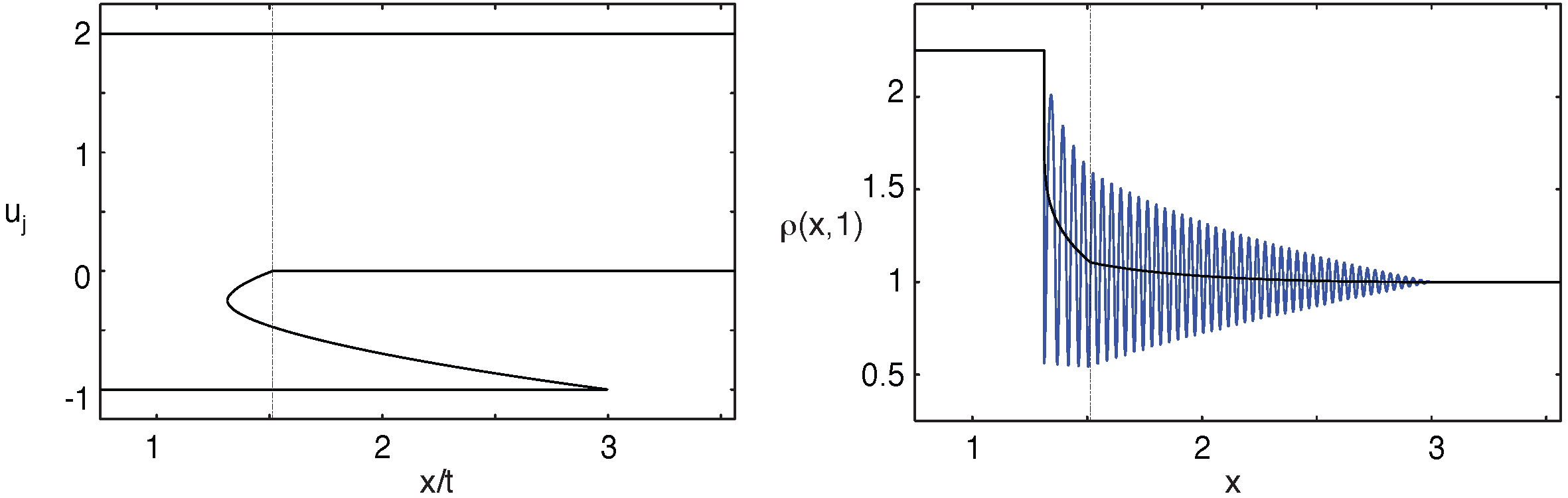}
\caption{\label{fig6} Self-Similar solution of the Whitham equations (\ref{mkdvw}) with $g=1$ and the
  corresponding periodic oscillatory solution (\ref{periodicsolution2}) of the mKdV equation with
  $\epsilon = 0.018$.  The initial data is
  given by (\ref{step1}) with
  $a= 2$, $b=-1$ and $c=0$ of type VI.  The oscillations have
  two distinct structures, which are separated by  $x/t \approx 1.51$.}
\end{center}\end{figure}

\begin{theorem}(see Figure \ref{fig6}.)
For the step-like initial data (\ref{step1}) with 
$0>a + 5b>-4(c-b)$ and $V(a, c, b)<0$,
the solution of the Whitham equations
(\ref{mkdvw}) with $g=1$ is given by
\begin{equation}
\label{ws1''}
u_1 = a \ , \quad \frac{x}{t} = \mu_2(a, u_2, u_3, b)  \ , \quad \frac{x}{t} = \mu_3(a, u_2, u_3, b)
\ , \quad u_4 = b
\end{equation}
for $\mu_3(a, - (a+b)/4, -(a+b)/4, b) < x/t \leq \mu_3(a, u^{***}, u^{***}, b)$ and by
\begin{equation}
\label{ws2''}
u_1 = a, \quad u_2 = c \ , \quad \frac{x}{t} = \mu_3(a, c, u_3, b)  \ , \quad u_4 = b
\end{equation}
for $\mu_3(a, u^{***}, u^{***}, b) \leq x/t < \mu_3(a, c, b, b)$, where 
$u^{***}$ is the unique solution
$u_3$ of $\mu_2(a, c, u_3, b) = \mu_3(a, c, u_3, b)$ in the interval 
$b < u_3 < c$.
Outside the region $ \mu_3(a, -(a+b)/4, -(a+b)/4, b) < x/t < \mu_3(a, c, b, b)$, 
the solution of 
equation (\ref{mkdv0}) is given by
\begin{equation*}
\alpha = a \ , \quad \beta = b \quad \mbox{for}\quad\frac{x}{t} \leq 
\mu_3(a, -(a+b)/4, -(a+b)/4, b)\,,
\end{equation*}
and
\begin{equation*}
\alpha = a \ , \quad \beta = c  \quad \mbox{for}\quad\frac{x}{t} \geq \mu_3(a, c, b, b) \ .
\end{equation*}
\end{theorem}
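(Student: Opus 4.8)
The plan is to follow the scheme of the proof of Theorem~\ref{thm1} in Section~3.1, with the coalescing pair of Riemann invariants shifted from $(u_3,u_4)$ to $(u_2,u_3)$ and with the trailing edge now a soliton limit $u_2=u_3$ in place of the harmonic limit $u_3=u_4$. First I would establish the analogue of Lemma~3.2: the functions in (\ref{ws1''}) and (\ref{ws2''}) satisfy the Whitham equations (\ref{mkdvw}) with $g=1$ wherever the varying invariants can be recovered, since $u_1=a$ and $u_4=b$ are constant and, on the locus $\mu_2=\mu_3$, identity (\ref{a23}) forces $\pd\mu_2/\pd u_3=\pd\mu_3/\pd u_2=0$; hence the subsystem for $(u_2,u_3)$ is diagonal, and differentiating $x/t=\mu_2(a,u_2,u_3,b)$ and $x/t=\mu_3(a,u_2,u_3,b)$ in $x$ and $t$ recovers the second and third Whitham equations, while (\ref{ws2''}) is the single-equation reduction of Lemma~3.2(2).

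Next I would locate the trailing edge. Eliminating $x$ and $t$ from the middle two equations of (\ref{ws1''}) gives $\mu_2(a,u_2,u_3,b)=\mu_3(a,u_2,u_3,b)$, which by (\ref{eq19}) reads $(\lambda_2-2\sigma_1)\pd_{u_2}q=(\lambda_3-2\sigma_1)\pd_{u_3}q$. From (\ref{lambda}), (\ref{I}) and (\ref{strict}) one has $\pd_{u_2}I<0<\pd_{u_3}I$, hence $\lambda_2-2\sigma_1>0>\lambda_3-2\sigma_1$ throughout $b<u_3<u_2<a$; thus on this curve $\pd_{u_2}q$ and $\pd_{u_3}q$ are of opposite sign unless both vanish, and the latter happens only at $u_2=u_3=-(a+b)/4$, a point that lies in $(b,c)$ precisely because $a+5b<0$ and $a+5b>-4(c-b)$. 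This gives the trailing edge $u_1=a$, $u_2=u_3=-(a+b)/4$, $u_4=b$. Using $N=0$ from (\ref{N}) (and its symmetric form, established as for $M$) together with (\ref{EPD}) — so that $\mathrm{sign}\,\pd^2_{u_2u_3}I=\mathrm{sign}(\pd_{u_2}I-\pd_{u_3}I)<0$ while $\pd^2_{u_2u_3}q>0$ — one gets $\pd_{u_2}q>0>\pd_{u_3}q$ along the entire curve, and then, exactly as in the proof of Lemma~2.2, $\pd\mu_2/\pd u_2=\tfrac12\,\pd_{u_2}\lambda_2\,\pd_{u_2}q+\tfrac12(\lambda_2-2\sigma_1)\pd^2_{u_2}q>0$ and $\pd\mu_3/\pd u_3<0$ by (\ref{eq11}) and (\ref{strict}). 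Since also $\pd\mu_2/\pd u_3=\pd\mu_3/\pd u_2=0$, the curve is parametrized monotonically by $x/t$, with $u_2$ increasing and $u_3$ decreasing as $x/t$ grows.

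I would then run this curve from the trailing edge: it continues until $u_3$ reaches $b$ or $u_2$ reaches $c$, and by (\ref{V}) one has $\mu_2(a,u_2,b,b)-\mu_3(a,u_2,b,b)=\tfrac{u_2-b}{2(a+u_2-2b)}V(a,u_2,b)$, where $V(a,\cdot,b)$ is increasing on $(b,a)$ with $V(a,b,b)=3(a+5b)(a-b)<0$ and $V(a,c,b)<0$ (the defining hypothesis of Type~VI), so $V(a,u_2,b)<0$ for $b\le u_2\le c$; hence $u_2$ reaches $c$ before $u_3$ reaches $b$, at the value $u_3=u^{***}\in(b,c)$ solving $\mu_2(a,c,u_3,b)=\mu_3(a,c,u_3,b)$, with existence and uniqueness of $u^{***}$ supplied by the argument of Lemma~\ref{u**L} applied to (\ref{N}) and (\ref{pN}). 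This is the region (\ref{ws1''}). For (\ref{ws2''}) it suffices to show $\mu_3(a,c,u_3,b)$ is strictly decreasing in $u_3$ on $(b,u^{***})$: since $\pd_{u_3}q(a,c,u_3,b)=\tfrac14(a+b+c+3u_3)$ is increasing in $u_3$ and negative at $u_3=u^{***}$ (being negative on the curve above), it is negative on all of $(b,u^{***})$, whence $\pd\mu_3/\pd u_3=\tfrac12\,\pd_{u_3}\lambda_3\,\pd_{u_3}q+\tfrac12(\lambda_3-2\sigma_1)\pd^2_{u_3}q<0$ by (\ref{eq11}), (\ref{strict}) and $\pd^2_{u_3}q>0$. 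Thus $u_3$ inverts to an increasing function of $x/t$ up to the leading edge $x/t=\mu_3(a,c,b,b)$, where $u_3=u_4=b$; finally the matching to the $g=0$ solutions of (\ref{mkdv0}) outside the region is read off at the soliton trailing edge ($u_2=u_3$, with $(u_1,u_4)$ playing the role of the exterior $(\alpha,\beta)=(a,b)$) and at the harmonic leading edge ($u_3=u_4$, with $(u_1,u_2)$ the exterior $(\alpha,\beta)=(a,c)$), which would complete the proof.

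The step I expect to be the real obstacle is the soliton-limit analysis at the trailing edge (in the second step above). Unlike the harmonic trailing edge of Section~3.1, where $I$ is smooth and the elementary evaluation producing (\ref{M44})--(\ref{U}) is available, here $I$ and all of its derivatives diverge as $u_3\to u_2$ (this is the limit $s\to1$), so one must justify that the curve $\mu_2=\mu_3$ genuinely reaches the diagonal and that its parametrization by $x/t$, and the derivatives $\pd\mu_2/\pd u_2$ and $\pd\mu_3/\pd u_3$, remain controlled up to it. Concretely one must handle the ratios appearing in (\ref{N}) and (\ref{pN}) as $u_3\to u_2$ — for example, that $\pd^2_{u_2u_3}I$ dominates $\pd_{u_3}I$ there, so that the identity $\pd_{u_3}q=\pd^2_{u_2u_3}q\,\pd_{u_3}I/\pd^2_{u_2u_3}I$ valid on the curve forces $\pd_{u_3}q\to0$, i.e.\ $u_2=u_3=-(a+b)/4$. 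Either a direct asymptotic expansion of the complete hyperelliptic period near the degeneration, or the elliptic-integral-free argument via (\ref{eq19}) and the vanishings $\lambda_2-2\sigma_1=\lambda_3-2\sigma_1=0$ at $u_2=u_3$ sketched above, should dispatch this point.
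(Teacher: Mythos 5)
Your outline tracks the paper's own proof almost step for step: the diagonal structure via (\ref{a23}), the sign dichotomy $\pd q/\pd u_2>0>\pd q/\pd u_3$ along the curve $\mu_2=\mu_3$ deduced from (\ref{eq19}), (\ref{strict}) and $\pd_{u_2}q-\pd_{u_3}q=(u_2-u_3)/2>0$, the use of (\ref{2}) together with $V(a,c,b)<0$ to rule out $u_3$ reaching $b$ before $u_2$ reaches $c$, the definition of $u^{***}$ by the analogue of Lemma~\ref{u**L} based on (\ref{N})--(\ref{pN}), and the monotonicity of $\mu_3(a,c,u_3,b)$ for $u_3<u^{***}$ via $\pd_{u_3}q<0$ and (\ref{eq11}). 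The one genuine gap is exactly the one you flag but do not close: the degenerate (soliton) edge $u_2=u_3$. Since $\lambda_2-2\sigma_1=\lambda_3-2\sigma_1=0$ there, $\mu_2-\mu_3$ vanishes identically on the whole diagonal, so your sign/limit argument only shows that \emph{if} the solution curve reaches the diagonal, its limit point must be $-(a+b)/4$; it neither produces a branch emanating from that point, nor its local form $u_2=B(u_3)$ with $B$ decreasing, nor continuity of $x/t=\mu_3$ down to the edge value. The paper closes this by desingularizing: it divides $\mu_2-\mu_3$ by $(u_2-u_3)\sqrt{(u_1-u_3)(u_2-u_4)}\,I$ to form $G$ in (\ref{G}) and computes, via the $s\to 1$ asymptotics of $K$ and $E$ in Appendix~\ref{A}, the boundary value $G\vert_{u_2=u_3}=2(\pd_{u_2}q+\pd_{u_3}q)$ (formula (\ref{G3}), which pins down the edge point) and the partials (\ref{G3'}), which both equal $2$ precisely at $u_2=u_3=-(a+b)/4$ and hence give the decreasing branch by the implicit function theorem. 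So the ``direct asymptotic expansion'' you mention is not an optional alternative but the step the paper actually supplies; the elliptic-integral-free sketch by itself does not dispatch it.

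Two minor points. First, since $\mu_3(a,c,u_3,b)$ is decreasing in $u_3$, inverting the third equation of (\ref{ws2''}) gives $u_3$ as a \emph{decreasing} (not increasing) function of $x/t$, falling from $u^{***}$ to $b$ at the leading edge; this does not affect the argument. Second, your reliance on the existence/uniqueness of $u^{***}$ ``as in Lemma~\ref{u**L}'' is exactly what the paper does (its Lemma 4.3, with proof likewise omitted), so no discrepancy there.
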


\begin{proof}
We first locate the ``leading'' edge, i.e., the solution of equation (\ref{ws1''}) at 
$u_2=u_3$.
Eliminating $x/t$ from the first two equations of (\ref{ws1''}) yields
\begin{equation}
\label{mu23}
\mu_2(a, u_2, u_3, b) - \mu_3(a, u_2, u_3, b) = 0 \ .
\end{equation}
Since it degenerates at $u_2=u_3$, we replace (\ref{mu23}) by
\begin{equation}
\label{G}
G(a, u_2, u_3, b) := {\mu_2(a, u_2, u_3, b) - \mu_3(a, u_2, u_3, b) \over 
(u_2 - u_3)\sqrt{(u_1 - u_3)(u_2 - u_4)}I(a, u_2, u_3, b)} = 0 \ .
\end{equation}
In the Appendix, we show that,
at the ``leading'' edge $u_2=u_3$, we have
$$G(a, u_3, u_3, b) = 2 ({\pd q \over \pd u_2} + {\pd q \over \pd u_3}) = 0 \ ,$$
in view of (\ref{G3}), which along with (\ref{q}) gives $u_2=u_3 = -(a+b)/4$.
Having located the ``leading'' edge, we solve equation (\ref{G}) near $u_2 = u_3 = 
-(a+b)/4$.
We use formula (\ref{G3'}) to obtain
$${\pd G(a, -(a+b)/4, -(a+b)/4, b) \over \pd u_2} = {\pd G(a, -(a+b)/4, -(a+b)/4, b) 
\over \pd u_3} = 2 \ . $$
These show that equation (\ref{G}) gives $u_2$ as a decreasing function of $u_3$
\begin{equation}
\label{B}
u_2 = B(u_3)
\end{equation}
in a neighborhood of $u_2 = u_3 = -(a+b)/4$.

We now extend the solution (\ref{B}) of equation (\ref{mu23}) as far as possible in the region
$b < u_3 < -(a+b)/4 < u_2 < c$. We use formula (\ref{eq19}) to obtain
\begin{eqnarray*}
{\pd \mu_2 \over \pd u_2} &=& {1 \over 2} {\pd \lambda_2 \over \pd u_2}{\pd q \over \pd u_2}
+ {1 \over 2} [\lambda_2 - 2(a + u_2 + u_3 + b)] {\pd^2 q \over \pd u_2^2} \ , \\
{\pd \mu_3 \over \pd u_3} &=& {1 \over 2} {\pd \lambda_3 \over \pd u_3}{\pd q \over \pd u_3}
+ {1 \over 2} [\lambda_3 - 2(a + u_2 + u_3 + b)] {\pd^2 q \over \pd u_3^2} \ .
\end{eqnarray*}
In view of (\ref{eq11}) and (\ref{strict}), we have
\begin{eqnarray*}
{\pd \mu_2 \over \pd u_2} &>& 0 \quad \mbox{if} \ \  {\pd q \over \pd u_2} > 0 \ , \\
{\pd \mu_3 \over \pd u_3} &<& 0 \quad \mbox{if} \ \ {\pd q \over \pd u_3} < 0 \ .
\end{eqnarray*}
We claim that
\begin{equation}
\label{cl}
{\pd q \over \pd u_2} > 0 \ , \quad {\pd q \over \pd u_3} < 0
\end{equation}
on the
solution of (\ref{mu23}) in the region $b < u_3 < -(a+b)/4 < u_2 < c$. To see this,
we use formula (\ref{eq19}) to rewrite equation (\ref{mu23}) as
$${1 \over 2} [\lambda_2 - 2(a + u_2 + u_3 + b)] {\pd q \over \pd u_2}
= {1 \over 2} [\lambda_3 - 2(a + u_2 + u_3 + b )] {\pd q \over \pd u_3} \ .$$
This, together with
$${\pd q \over \pd u_2} - {\pd q \over \pd u_3} = 2(u_2 - u_3) {\pd^2 q \over \pd u_2 \pd u_3}
= {1 \over 2}(u_2 - u_3) > 0 $$
for $u_2 > u_3$, and inequalities (\ref{strict}),
proves (\ref{cl}).

Hence, the solution (\ref{B}) can be extended as long as $b < u_3 < -(a+b)/4 < u_2 < c$.
There are two possibilities; (1) $u_2$ touches $c$ before
$u_3$ reaches $b$ and (2) $u_3$ touches $b$ before or simultaneously as 
$u_2$ reaches $c$.

Possibility (2) is unattainable. To see this, we
use (\ref{V}) to write
\begin{equation}
\label{2}
\mu_2(a, u_2, b, b) - \mu_3(a, u_2, b, b) = {(u_2 - b) \over 2(a + u_2 - 2b)} \ 
V(a, u_2, b) \ ,
\end{equation}
which is negative for $ b < u_2 \leq c$ since $V(a, u_2, b)$ of (\ref{pV}) is an 
increasing function of $u_2$ and since $V(a, c, b) < 0$.
Therefore, $u_2$ will touch $c$ before $u_3$ reaches $b$. When this happens, we have
\begin{equation}
\label{mu23e}
\mu_2(a, c, u_3, b) - \mu_3(a, c, u_3, b) =0 \ .
\end{equation}

\begin{lemma}
Equation (\ref{mu23e}) has a simple zero, counting multiplicities, in the interval
$b < u_3 < c$. Denoting this zero by $u^{***}$, then $\mu_2(a, c, u_3, b) - \mu_3(a, c, u_3, b)$
is positive for $u_3 > u^{***}$ and negative for $u_3 < u^{***}$.
\end{lemma}

The proof, which involves formulae (\ref{N}) and (\ref{pN}), is rather similar to the
proof of Lemma \ref{u**}. We will omit it.

We now continue to prove Theorem 4.2. Having solved equation (\ref{mu23}) for $u_2$ as a
decreasing function of $u_3$ for $u^{***} < u_3 < -(a+b)/4$, we can then use the middle 
two equations
of (\ref{ws1''}) to determine $u_2$ and $u_3$ as functions of $x/t$ in the interval
$\mu_2(a, -(a+b)/4, -(a+b)/4, b) < x/t < \mu_2(a, c, u^{***}, b)$.

We finally turn to equations (\ref{ws2''}). We want to solve the third equation
of (\ref{ws2''}), $x/t = \mu_3(a, c, u_3, b)$, for $u_3 < u^{***}$. It is enough to 
show that
$\mu_3(a, c, u_3, b)$ is a decreasing function of $u_3$ for $u_3 < u^{***}$.
According to Lemma 4.3, $\mu_2(a, c, u_3, b) - \mu_3(a, c, u_3, b) <  0$ for 
$u_3 < u^{***}$.
Using formula (\ref{eq19}) for $\mu_2$ and $\mu_3$, we have
$${1 \over 2} [\lambda_2 - 2(a + c + u_3 + b)] {\pd q \over \pd u_2}
< {1 \over 2} [\lambda_3 - 2(a + c + u_3 + b)] {\pd q \over \pd u_3} \ .$$
This, together with
$${\pd q \over \pd u_2} - {\pd q \over \pd u_3} = 
{1 \over 2}(c - u_3) > 0 $$
for $ u_3 < c$, and inequalities (\ref{strict}),
proves
$${\pd q(a, c, u_3, b) \over \pd u_3} < 0 $$
for $u_3 < u^{***}$.
Hence,
$${\pd \mu_3 \over \pd u_3} = {1 \over 2} {\pd \lambda_3 \over \pd u_3}{\pd q \over \pd u_3}
+ {1 \over 2} [\lambda_3 - 2(a + c + u_3 + b)] {\pd^2 q \over \pd u_3^2} < 0 \ ,$$
where we have used inequality (\ref{eq11}). 
\end{proof}

\subsection{Type VII}
Here we consider
the step initial function (\ref{step1}) satisfying $0> a+5b>
-4(c-b)$ with $V(a,c,b) = 0$.

\begin{figure}[h] 
\begin{center}
\includegraphics[width=12cm]{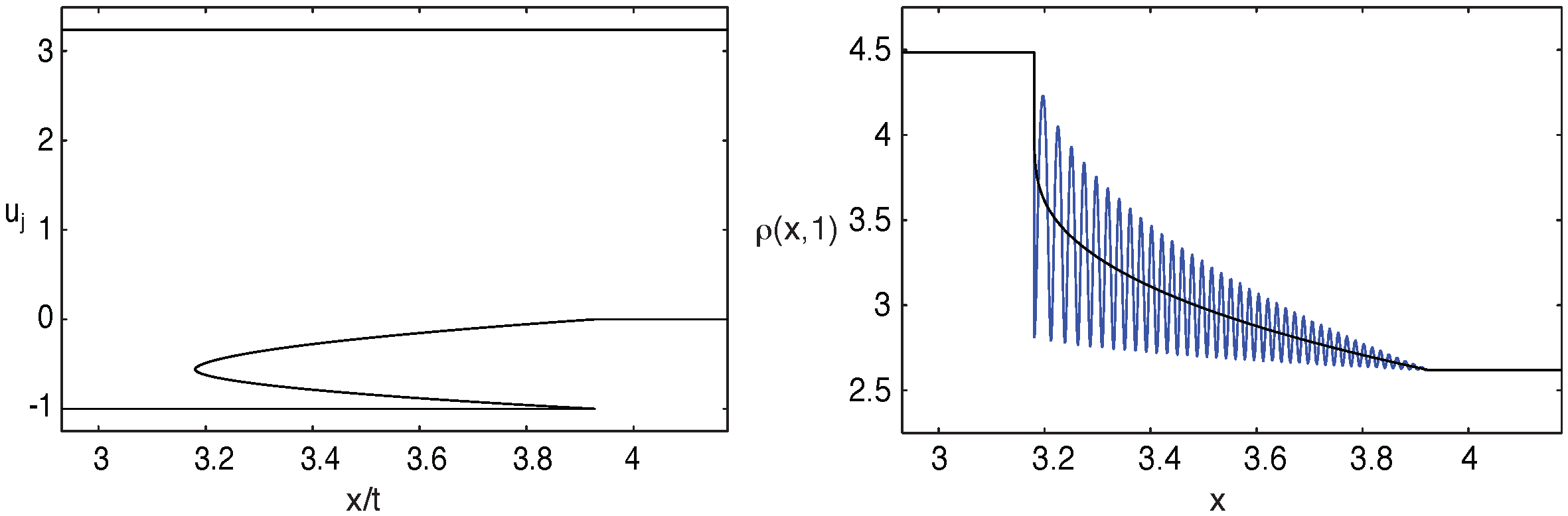}
\caption{\label{fig7} Self-Similar solution of the Whitham equations (\ref{mkdvw}) $g=1$ and
the corresponding oscillatory solution (\ref{periodicsolution2}) of the mKdV equation with
$\epsilon = 0.01$. The initial data is given
by (\ref{step1}) with
  $a=1+\sqrt{5}$, $b= -1$ and $c=0$ of type VII.}
\end{center}\end{figure}

\begin{theorem}(see Figure \ref{fig7}.)
For the step-like initial data (\ref{step1}) with $0>a+5b>-4(c-b)$ and $V(a,c,b)=0$,
the solution of the Whitham equations
(\ref{mkdvw}) with $g=1$ is given by
\begin{equation*}
u_1 = a \ , \quad\frac{x}{t} = \mu_2(a, u_2, u_3, b)  \ , \quad \frac{x}{t} = \mu_3(a, u_2, u_3, b) \ ,
\quad u_4 = c\,,
\end{equation*}
for $\mu_3(a, -(a+b)/4,-(a+b)/4, b)  < x/t < 
\mu_3(a, c, b,b)$.
Outside the region, the solution of 
equations (\ref{mkdv0}) is given by
\begin{equation*}
\alpha = a \ , \quad \beta = b\,,\quad \quad \mbox{for}\quad\frac{x}{t} \leq  
\mu_3(a, -(a+b)/4, -(a+b)/4, b)\,,
\end{equation*}
and
\begin{equation*}
\alpha = a \ , \quad \beta= c\,,\quad \quad ~~ \mbox{for}\quad\frac{x}{t} \geq \mu_3(a, c, b,b) \ .
\end{equation*}
\end{theorem}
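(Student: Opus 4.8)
The plan is to run exactly the scheme used for the preceding (Type~VI) theorem; the one genuinely new observation is that the hypothesis $V(a,c,b)=0$ forces the two structures of the Type~VI solution to merge into the single structure claimed here. Put $u_1\equiv a$ and $u_4\equiv b$, so that the first and fourth of the Whitham equations (\ref{mkdvw}) hold trivially, and the task is to recover $u_2>u_3$ from the pair of hodograph relations in (\ref{ws1''}), namely $x/t=\mu_2(a,u_2,u_3,b)=\mu_3(a,u_2,u_3,b)$. Eliminating $x/t$ yields $\mu_2-\mu_3=0$ as in (\ref{mu23}); since this degenerates on the diagonal $u_2=u_3$, I would pass to the regularized quantity $G$ of (\ref{G}). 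By the Appendix identities (\ref{G3})--(\ref{G3'}) together with (\ref{q}), $G(a,u_3,u_3,b)=2(\partial q/\partial u_2+\partial q/\partial u_3)$ vanishes precisely at $u_2=u_3=-(a+b)/4$, where moreover $\partial G/\partial u_2=\partial G/\partial u_3=2$, so $G=0$ is invertible there and produces the decreasing branch $u_2=B(u_3)$ of (\ref{B}).

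Next I would continue $B$ into the box $b<u_3<-(a+b)/4<u_2<c$ as in Type~VI. Rewriting $\mu_2=\mu_3$ via (\ref{eq19}) and combining (\ref{strict}) with $\partial q/\partial u_2-\partial q/\partial u_3=\tfrac12(u_2-u_3)>0$ forces $\partial q/\partial u_2>0>\partial q/\partial u_3$ on the branch (this is (\ref{cl})); then (\ref{eq19}), (\ref{eq11}), (\ref{strict}) and $\partial^2 q/\partial u_2^2=\partial^2 q/\partial u_3^2=3/4>0$ give $\partial\mu_2/\partial u_2>0>\partial\mu_3/\partial u_3$, while (\ref{a23}) together with $\mu_2=\mu_3$ on the branch give $\partial\mu_2/\partial u_3=\partial\mu_3/\partial u_2=0$ there. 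Hence $B$ extends as long as it stays interior to the box, and the one point to settle is which wall it reaches first --- and this is where $V(a,c,b)=0$ enters. On the wall $\{u_3=b\}$ one has $\mu_2(a,u_2,b,b)-\mu_3(a,u_2,b,b)=\tfrac{(u_2-b)}{2(a+u_2-2b)}V(a,u_2,b)$ by (\ref{2}); since $V(a,\cdot,b)$ is strictly increasing (from (\ref{pV})) with its unique zero at $u_2=c$, this difference is negative for $b<u_2<c$ and vanishes only at $u_2=c$, so the zero set $B$ can touch $\{u_3=b\}$ only at $u_2=c$. On the wall $\{u_2=c\}$ one has $\mu_2(a,c,u_3,b)-\mu_3(a,c,u_3,b)$, which by the zero-counting lemma of the Type~VI proof (the $\mu_2$--$\mu_3$ analogue of Lemma~\ref{u**L}, established from (\ref{N}) and (\ref{pN})) has a single simple zero in $[b,c)$, and $V(a,c,b)=0$ puts that zero at $u_3=b$; hence $B$ can touch $\{u_2=c\}$ only at $u_3=b$. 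Therefore the terminal point of $B$ is $(u_2,u_3)=(c,b)$, both walls being reached simultaneously, and no residual Type~V-like branch survives.

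It remains to invert the surviving hodograph relation. Along $u_2=B(u_3)$, implicit differentiation of $\mu_2(a,B(u_3),u_3,b)=\mu_3(a,B(u_3),u_3,b)$ together with $\partial\mu_2/\partial u_3=\partial\mu_3/\partial u_2=0$ give $B'=(\partial\mu_3/\partial u_3)/(\partial\mu_2/\partial u_2)<0$ and hence $\tfrac{d}{du_3}\mu_2(a,B(u_3),u_3,b)=(\partial\mu_2/\partial u_2)\,B'<0$; so $x/t$ is strictly monotone along the branch and sweeps out the interval $\big(\mu_3(a,-(a+b)/4,-(a+b)/4,b),\ \mu_3(a,c,b,b)\big)$ as $u_3$ runs from $-(a+b)/4$ down to $b$ (and $u_2=B(u_3)$ from $-(a+b)/4$ up to $c$). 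Each such $x/t$ then determines $(u_2,u_3)$ uniquely, and $(u_1,u_2,u_3,u_4)=(a,u_2,u_3,b)$ solves (\ref{mkdvw}) with $g=1$ by the hodograph verification of Lemma~3.2(1) (differentiate $x/t=\mu_2$ and $x/t=\mu_3$ in $x$ and $t$ and use $\partial\mu_2/\partial u_3=\partial\mu_3/\partial u_2=0$). Finally, at the left endpoint $u_2=u_3$ is the soliton degeneration, in which the one-phase wave reduces to the zero-phase state with Riemann invariants $(u_1,u_4)=(a,b)$, matching the asserted solution of (\ref{mkdv0}) for $x/t\le\mu_3(a,-(a+b)/4,-(a+b)/4,b)$; at the right endpoint $u_3=u_4$ is the harmonic degeneration, reducing to $(u_1,u_2)=(a,c)$, matching the asserted solution for $x/t\ge\mu_3(a,c,b,b)$; the two outer pieces are the evident constant solutions of (\ref{mkdv0}).

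I expect the only real obstacle to be the simultaneity step in the second paragraph --- verifying that $V(a,c,b)=0$ is exactly the threshold at which $B$ meets $u_2=c$ and $u_3=b$ at the same point. It rests on the monotonicity of $V(a,\cdot,b)$ from (\ref{pV}) and on the single-simple-zero structure of $\mu_2(a,c,u_3,b)-\mu_3(a,c,u_3,b)$ on $[b,c)$ (the $N$-version of Lemma~\ref{u**L}); once those are granted, the rest of the argument is a transcription of the Type~VI proof with $u^{***}$ specialized to $b$, so that its second (``$u_2=c$'') branch degenerates to a single point and disappears.
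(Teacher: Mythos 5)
Your proposal is correct and follows essentially the paper's own route: the paper treats Type VII as a corollary of the Type VI construction, its entire proof being your key simultaneity observation that, by (\ref{2}) (equivalently (\ref{7})), $\mu_2(a,c,b,b)-\mu_3(a,c,b,b)=\frac{(c-b)}{2(a+c-2b)}\,V(a,c,b)=0$, so the branch $u_2=B(u_3)$ reaches the walls $u_2=c$ and $u_3=b$ at the same point and the second piece of the Type VI solution degenerates. Your choice $u_4\equiv b$ (rather than the $u_4=c$ printed in the statement) is the reading consistent with the hodograph relations and with Type VI, so that silent correction is fine.
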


\begin{proof}
It suffices to show that $u_2$ and $u_3$ of $\mu_2(a, u_2, u_3, b) - 
\mu_3(a, u_2, u_3, b)=0$
reaches $c$ and $b$, respectively, simultaneously. To see this, we deduce from
equation (\ref{2}) that
\begin{equation}
\label{7} 
\mu_2(a, c, b, b) - \mu_3(a, c, b, b) = { (c-b) \over 2 (a + c -2b)} \ V(a,c,b)
\end{equation}
vanishes when $ V(a, c, b) = 0$.
\end{proof}

\subsection{Type VIII}
Here we consider
the step initial function (\ref{step1}) satisfying
$0>a+5b>-4(c-b)$ with $V(a, c, b) > 0$.

\begin{theorem}(see Figure \ref{fig8}.)
For the step-like initial data (\ref{step1}) with $0>a+5b>-4(c-b)$ and $V(a, c, b) > 0$,
the solution of the Whitham equations
(\ref{mkdvw}) with $g=1$ is given by
\begin{equation*}
u_1 = a \ , \quad \frac{x}{t} = \mu_2(a, u_2, u_3, b)  \ , \quad \frac{x}{t} = \mu_3(a, u_2, u_3, b)  \ ,
\quad u_4 = b\,,
\end{equation*}
for $\mu_3(a, -(a+b)/4, -(a+b)/4, b)  < x/t < 
\mu_3(a, \hat{u}, \hat{u}, b)$, where $\hat{u}$ is the unique $u_2$-zero of 
the quadratic
polynomial $V(a, u_2, b)$ in the interval $-(a+b)/4 < u_2 < c$.
Outside the region, the solution
of equations (\ref{mkdv0}) is divided into the following three regions:
\begin{itemize}
\item[(1)] For $x/t\leq 
\mu_3(a, -(a+b)/4, -(a+b)/4, b)$,
\begin{equation*}
\alpha = a \ , \quad \beta = b\,.
\end{equation*}
\item[(2)] For $ \mu_2(a, \hat{u}, b, b) \leq x/t \leq \frac{3}{8} \left( a^2
+ 2 a c + 5 c^2 \right)$,
\begin{equation*}
\alpha=a\,,\quad \beta=
-\frac{1}{5} a + \sqrt{ \frac{8}{15} \frac{x}{t} - \frac{4}{25} a^2 }\,.
\end{equation*}
\item[(3)] For $x/t \geq \frac{3}{8} \left( a^2 + 2 a c + 5c^2 \right)$, 
\[
\alpha=a\,,\quad \beta=c\,.
\]
\end{itemize}
\end{theorem}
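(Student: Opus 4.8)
The plan is to construct the solution by transcribing the Type VI argument for the single-phase branch and the Type IV argument for the outer rarefaction wave; the only genuinely new ingredient is a sign computation that decides which boundary a certain coalescence curve hits. First I would record the elementary arithmetic: a shock forms only when $a+5b<0$, so I restrict to that case; the condition $a+5b>-4(c-b)$ reads $a+b+4c>0$, which together with $a+5b<0$ places $-(a+b)/4$ strictly between $b$ and $c$; and $a>b$ together with $a+5b<0$ forces $b<0$. I then seek the single-phase branch in the form (\ref{ws1''}), with $u_1=a$, $u_4=b$ and $u_2,u_3$ cut out by $x/t=\mu_2(a,u_2,u_3,b)=\mu_3(a,u_2,u_3,b)$. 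Locating the ``leading'' edge $u_2=u_3$ goes exactly as for Type VI: passing to the regularized quotient $G$ of (\ref{G}), the Appendix identity (\ref{G3}) reduces the coalescence equation at $u_2=u_3$ to $\pd q/\pd u_2+\pd q/\pd u_3=0$, hence to $u_2=u_3=-(a+b)/4$ by (\ref{q}); and (\ref{G3'}) shows this edge is non-degenerate, so the curve $\mu_2-\mu_3=0$ continues as a decreasing graph $u_2=B(u_3)$ into the box $b<u_3<-(a+b)/4<u_2<c$. The sign-propagation step is then verbatim the Type VI one: rewriting the coalescence equation via (\ref{eq19}) and using $\pd q/\pd u_2-\pd q/\pd u_3=\frac12(u_2-u_3)>0$ together with (\ref{strict}) gives $\pd q/\pd u_2>0>\pd q/\pd u_3$ on the curve, and then (\ref{eq11}) and (\ref{strict}) yield $\pd\mu_2/\pd u_2>0$, $\pd\mu_3/\pd u_3<0$ there, so the continuation persists as long as it stays in the box.

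The new step is to decide where the curve leaves the box. By (\ref{2}),
\[
\mu_2(a,u_2,b,b)-\mu_3(a,u_2,b,b)=\frac{u_2-b}{2(a+u_2-2b)}\,V(a,u_2,b),
\]
with $V(a,\cdot,b)$ increasing (as in the Type VI proof) and $V(a,c,b)>0$ by hypothesis, so it suffices to check $V(a,-(a+b)/4,b)<0$; a short computation from (\ref{pV}) gives $V(a,-(a+b)/4,b)=\frac{3}{16}(9a-11b)(a+5b)$, which is negative since $a+5b<0$ while $9a-11b>9b-11b=-2b>0$ (using $b<0$). Hence $V(a,\cdot,b)$ has a unique zero $\hat u$ with $-(a+b)/4<\hat u<c$, so the curve $\mu_2-\mu_3=0$ meets $u_3=b$ (where $u_3=u_4=b$) at $u_2=\hat u<c$; it therefore exits the box through $u_3=b$, strictly before $u_2$ can reach $c$ --- in contrast to Type VI, and coinciding with the exit $u_2=c$ precisely in the borderline Type VII where $V(a,c,b)=0$. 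Inverting the two middle equations of (\ref{ws1''}) on this range --- possible because $\pd\mu_2/\pd u_2>0>\pd\mu_3/\pd u_3$ --- then gives $u_2,u_3$ as monotone functions of $x/t$ on the interval from $\mu_3(a,-(a+b)/4,-(a+b)/4,b)$ to $\mu_3(a,\hat u,b,b)$, which equals $\mu_2(a,\hat u,b,b)$ because $V(a,\hat u,b)=0$; the reasoning of Lemma 3.2 shows the resulting quadruple solves (\ref{mkdvw}) with $g=1$, the boundary data (\ref{step1}) are matched at the ``leading'' edge, and at the right-hand edge the quadruple degenerates to the zero-phase state $(\alpha,\beta)=(a,\hat u)$, with $\mu_3(a,\hat u,b,b)=\frac38(a^2+2a\hat u+5\hat u^2)$ following from $V(a,\hat u,b)=0$ exactly as in Type IV.

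Finally I would attach the outer states. For $x/t$ below the ``leading'' edge the zero-phase solution is the constant $(\alpha,\beta)=(a,b)$. For $x/t$ beyond the right-hand edge I need a solution of (\ref{mkdv0}) connecting $(a,\hat u)$ to $(a,c)$; with $\alpha\equiv a$, (\ref{mkdv0}) reduces to $\beta_t+\frac38(a^2+2a\beta+5\beta^2)\beta_x=0$, whose characteristic speed has $\beta$-derivative $\frac34(a+5\beta)$, positive on $[\hat u,c]$ since $\hat u>-(a+b)/4>-a/5$ (again using $a+5b<0$). Hence the fan $x/t=\frac38(a^2+2a\beta+5\beta^2)$, i.e. $\beta=-\frac15 a+\sqrt{\frac{8}{15}\,x/t-\frac{4}{25}a^2}$, is an admissible self-similar solution increasing monotonically from $\beta=\hat u$ at $x/t=\mu_2(a,\hat u,b,b)$ to $\beta=c$ at $x/t=\frac38(a^2+2ac+5c^2)$ (using $a+5c>0$, which follows from $a+b+4c>0$ and $c>b$), beyond which one takes the constant $\beta=c$. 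Assembling the single-phase branch with these outer states produces regions (1)--(3) and the stated solution. The one step that is not a transcription of the Type VI and Type IV arguments is the sign analysis of $V$ in the second paragraph; it is the crux of the proof, and there the hypothesis $V(a,c,b)>0$ --- together with the forced inequality $b<0$, which is what makes $V(a,-(a+b)/4,b)<0$ --- does all the work.
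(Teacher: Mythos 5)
Your proposal is correct and follows essentially the same route as the paper: the single-phase branch is built exactly as in the Type VI proof (leading edge at $u_2=u_3=-(a+b)/4$ via (\ref{G3})--(\ref{G3'}), sign propagation, monotone inversion), the exit of the coalescence curve at $u_3=b$, $u_2=\hat u$ is read off from (\ref{2})/(\ref{7}), and region (2) is the zero-phase rarefaction fan as in Type IV. Your explicit check $V(a,-(a+b)/4,b)=\tfrac{3}{16}(9a-11b)(a+5b)<0$, which guarantees $\hat u$ actually lies in $(-(a+b)/4,c)$, is a detail the paper leaves implicit ("Obviously, $\hat u<c$"), but it does not change the argument.
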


\begin{proof}
By the calculation (\ref{7}), when $u_3$ of $\mu_2(a, u_2, u_3, b) - 
\mu_3(a, u_2, u_3, b)= 0 $ 
touches $b$, the corresponding $u_2$ reaches $\hat{u}$, where $V(a, \hat{u}, b) =0$.
Obviously, $\hat{u} < c$.
Hence, equations
$$\frac{x}{t} = \mu_2(a, u_2, u_3, b) \,,\quad \quad \frac{x}{t} = \mu_3(a, u_2, u_3, b) $$
can be inverted to give $u_2$ and $u_3$ as functions of $x/t$ in the region
$\mu_2(a, -(a+b)/4, -(a+b)/4, b) < x/t < \mu_2(a, \hat{u}, \hat{u}, b)$.  
To the right of this region,
equations (\ref{mkdv0}) has a rarefaction wave solution.
\end{proof}

\begin{figure}[h] 
\begin{center}
\includegraphics[width=12cm]{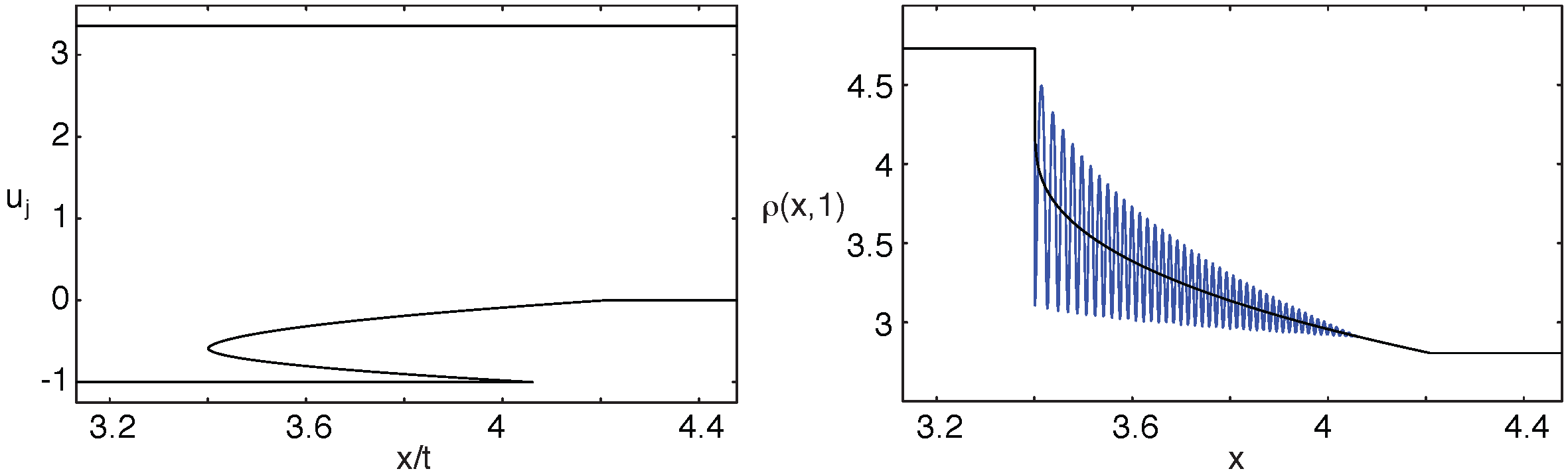}
\caption{\label{fig8} Self-Similar solution of the Whitham equations and
the corresponding oscillatory solution (\ref{periodicsolution2}) of the mKdV equation with
$\epsilon = 0.008$.   The initial data is
given by (\ref{step1}) with
  $a=4$, $b= -1$ and $c=0$ of type VIII.
  The solution in the region $4.1<x/t<4.2$
represents a rarefaction wave.}
\end{center}\end{figure}

\section{Other Initial Data}

We conclude the paper by showing how to handle the initial data (\ref{step2}).
Inequalities (\ref{34}) are replaced by
\begin{equation*}            
\frac{\pd \lambda_2 }{\pd u_2} < \frac{3}{2}\,
\frac{\lambda_1 - \lambda_2 }{u_1 - u_2} < \frac{\pd
  \lambda_1}{\pd u_2}  
\end{equation*} 
for $u_4 < u_3 < u_2 < u_1 $. Results similar to Lemma 2.2 can then be easily
proved. The rest of calculations are also similar to those in Sections 3 and 4.

\appendix
\section{\label{A} Leading Edge Calculations}
The function $I(u_1,u_2,u_3,u_4)$ of (\ref{I}) can be written in terms of the complete elliptic integral of
the first kind $K(s)$, i.e.,
$$I = {K(s) \over \sqrt{(u_1 - u_3)(u_2 - u_4)}} \ , $$
where 
\begin{equation*}
s = {(u_1 - u_2)(u_3 - u_4) \over (u_1 - u_3) (u_2 - u_4)} \  .
\end{equation*}
Using the derivative formula
$${d K(s) \over d s} = {E(s) - (1-s) K(s) \over 2s (1-s)} \ ,$$
where $E(s)$ is the complete elliptic integral of the second kind, we
calculate $\lambda_2$ and $\lambda_3$ of (\ref{lambda})
\begin{eqnarray*}
\lambda_2 &=& 2 \sigma_1- 4 \ {u_1 - u_2 \over 1 - {u_1 - u_3
\over u_2 - u_3} \ {E \over K}} \ , \\
\lambda_3 &=& 2 \sigma_1 + 4 \ {u_3 - u_4 \over 1 - {u_2 - u_4
\over u_2 - u_3} \ {E \over K}} \ .
\end{eqnarray*}
We then use (\ref{eq19}) to write $\mu_2 - \mu_3$ as
\begin{equation*}
-2 (u_2 - u_3)K(s) \left [{u_1 - u_2 \over (u_2 - u_3) K - (u_1 - u_3) E} \ {\pd q \over \pd u_2}
+ {u_3 - u_4 \over (u_2 - u_3)K - (u_2 - u_4) E} \ {\pd q \over \pd u_3} \right ] \ . 
\end{equation*}
Hence, $G(u_1,u_2,u_3,u_4)$ of (\ref{G}) becomes
\begin{equation}
\label{G2}
G=-2 \left [ {u_1 - u_2 \over (u_2 - u_3) K - (u_1 - u_3) E} \ {\pd q \over \pd u_2}
+ {u_3 - u_4 \over (u_2 - u_3)K - (u_2 - u_4) E} \ {\pd q \over \pd u_3} \right ] \ .
\end{equation}

We now use the asymptotics of $K(s)$ and $E(s)$ as $s$ is close to $1$
$$K(s)  \approx  \frac{1}{2} \log \frac{16}{1 - s}  \ , \quad 
E(s)  \approx  1 + \frac{1}{4}(1 - s)\left(\log \frac{16}{1 - s} - 1\right)$$
to calculate the $u_2=u_3$ limit
\begin{equation}
\label{G3}
G = 2 \left({\pd q \over \pd u_2} + {\pd q \over \pd u_3}\right) \ .
\end{equation}

Finally, we can also use the expression (\ref{q}) and the derivative formula
$${d E(s) \over d s} = {E(s) - K(s) \over 2 s}$$
to evaluate the partial derivatives of $G$ in the $u_2=u_3$ limit
\begin{align}
\label{G3'}
{\pd G \over \pd u_2}\Big|_{u_2=u_3} &= 2 + {(u_1 - u_4)(u_1 + 4 u_3 + u_4)
\over 4(u_1 - u_3)(u_4 - u_3)} \ ,\\
\nonumber
{\pd G \over \pd u_3}\Big|_{u_2=u_3} &= 2 - {(u_1 - u_4)(u_1 + 4 u_3 + u_4)         
\over 4(u_1 - u_3)(u_4 - u_3)} \ .      
\end{align}

\bigskip

\noindent
{\bf Acknowledgments.}
We thank Ed. Overman for showing us his numerical simulations of the NLS and
mKdV equations for small dispersion.
Y.K. and F.-R. T. were supported in part by NSF Grant DMS-0404931 and V.P. was
supported in part by NSF Grant DMS-0135308.
 
\bibliographystyle{amsplain}

\begin{thebibliography}{7}


\bibitem{dei} P.
Deift, S. Venakides and X. Zhou,
{\em New results
in small
dispersion KdV by an
extension of the steepest decent
method
for Riemann-Hilbert
problems},
International Math. Research Jour., 
4(1997),
pp. 285-299.


\bibitem{dub} B.A. Dubrovin and S.P. Novikov, {\em Hydrodynamics of weakly
deformed soliton lattices. Differential geometry and Hamiltonian theory},
Russian Math. Surveys 44:6(1989), pp. 35-124.

\bibitem{el} G. A. El, V. V. Geogjaev, A. V. Gurevich and A. L. Krylov,
{\em Decay of an initial discontinuity in the defocusing NLS hydrodynamics},
Physica D, 87(1995), pp. 186-192.

\bibitem{for} M.G. Forest and J. Lee, {\em Geometry and
modulation
theory
for
periodic nonlinear Schr\"{o}dinger equation},
in
Oscillation
Theory,
Computation
and Method of Compensated Compactness,
J.L.
Ericksen, D.
Kinderlehrer, and
M. Slemrod eds., Springer-Verlag,
1986,
pp. 35-69.

\bibitem{gra} T. Grava and C. Klein, {\em Numerical solution of the small dispersion limit
of Korteweg-de Vries and Whitham equations}, Comm. Pure Appl. Math., 60(2007), pp. 1623-1664.


\bibitem{jin} S. Jin, C.D. Levermore
and
D.W. McLaughlin, {\em The
semiclassical
limit of the defocusing NLS hierarchy},
Comm.
Pure Appl. Math., 
52(1999),
pp. 613-654.

\bibitem{kod}
Y. Kodama, {\em The Whitham equations in optical communications: mathematical theory
of NRZ}, SIAM J. Appl. Math., 59(1999), pp. 2162-2192.


\bibitem{lax2} P.D. Lax, C.D. Levermore and S. Venakides, {\em The generation and
propagation of oscillations in dispersive IVPs and their limiting behavior},
in Important Developments in Soliton Theory 1980-1990, T. Fokas and V.E.
Zakharov eds., Springer Series in Nonlinear Dynamics, Springer, Berlin, 1993, pp. 205-241.


\bibitem{bel}
V. B. Matveev,
{\em Abelian functions and solitons},
preprint {\it no. 373}, University of Wroclaw, Wroclaw, 1976.

\bibitem{pav} M.V.
Pavlov,
{\em Nonlinear Schr\"{o}dinger
equation and the
Bogolyubov-Whitham method
of
averaging}, Theor. Math. Phys.,
71(1987),
pp. 584-588.

\bibitem{PT} V. U. Pierce and F.R. Tian, {\em Self-Similar solutions of the non-strictly hyperbolic
Whitham equations}, Comm. Math. Sci., 4(2006), pp. 799-822.

\bibitem{PT2} V. U. Pierce and F.R. Tian, {\em Self-similar solutions of the non-strictly hyperbolic
Whitham equations for the KdV hierarchy}, Dynamics of PDE, 4(2007), pp. 263-282.

\bibitem{Tian1} F.R. Tian and J. Ye, {\em On the Whitham equations for the 
semiclassical limit of the defocusing nonlinear Schr\"{o}dinger equation},
Comm. Pure Appl. Math., 52(1999), pp. 655-692.

\bibitem{Tian3} F.R. Tian, {\em The Whitham type equations and linear
overdetermined
systems of Euler-Poisson-Darboux type}, Duke Math. Jour., 74(1994), pp. 203-221.

\bibitem{tsa} S.P. Tsarev, {\em Poisson brackets and one-dimensional
Hamiltonian systems of hydrodynamic type}, Soviet Math. Dokl.,
31(1985), pp. 488-491.

\bibitem{ven} S. Venakides,
{\em Higher order Lax-Levermore
theory},
Comm.
Pure Appl. Math., 43(1990),
pp. 335-362.


\end{thebibliography}

\end{document}